\documentclass[a4paper,11pt]{article}

\usepackage{libertine}
\usepackage[a4paper, margin=1in]{geometry}

\usepackage[backend=biber, style=numeric, doi=true, isbn=false, url=false, maxnames=6]{biblatex}
\AtBeginBibliography{}

\usepackage{pb-diagram}
\usepackage{amsmath, amssymb, amsthm}
\usepackage{tikz}
\usepackage{tikz-cd}
\usetikzlibrary{matrix,arrows,decorations.pathmorphing,arrows.meta,shapes.geometric, bending, positioning}

\usepackage[affil-it]{authblk}
\usepackage{hyperref}
\usepackage{paralist}
\usepackage{xcolor}
\usepackage{mathrsfs}

\usepackage{mathtools}
\usepackage[capitalise, noabbrev]{cleveref}
\usepackage{adjustbox}

\numberwithin{equation}{section}
\newtheorem{theorem}[equation]{Theorem}
\newtheorem{lemma}[equation]{Lemma}
\newtheorem{proposition}[equation]{Proposition}

\theoremstyle{definition}
\newtheorem{definition}[equation]{Definition}

\theoremstyle{remark}
\newtheorem*{remark}{Remark}

\makeatletter
\@namedef{subjclassname@2020}{%
  \textup{2020} Mathematics Subject Classification}
\makeatother

\DeclareMathOperator{\im}{im}
\DeclareMathOperator{\id}{id}

\DeclareMathOperator{\colim}{colim}

\DeclareMathOperator{\Ob}{Ob}

\def\Forget{\mathsf{O}}

\def\ProjFun{\mathsf{\Pi}}

\newcommand{\cell}{\mathsf{E}}

\def\Set{\mathbf{Set}}

\def\slice{{\downarrow}}
\def\dCat{\mathbf{dCat}}

\newcommand*\aut[1]{\mathbf{PAut}({#1})}

\newcommand*{\CSet}{\mathbf{cSet}}
\newcommand*{\cSet}{\mathbf{cSet}}

\def\cA{\mathcal{A}}
\def\cB{\mathcal{B}}
\def\cC{\mathcal{C}}
\def\cD{\mathcal{D}}
\def\cE{\mathcal{E}}

\def\cP{\mathcal{P}}
\def\cQ{\mathcal{Q}}

\def\cV{\mathcal{V}}

\def\slice{{\downarrow}}

\def\IP{\mathsf{IP}}
\def\src{\mathsf{src}}
\def\tgt{\mathsf{tgt}}
\def\intord{\dasharrow}

\def\ePos{\mathbf{ePos}}

\def\lcPos{\mathbf{lcPos}}

\def\yoneda{\mathsf{Y}}

\def\ecSet{\mathbf{ecSet}}
\def\lecSet{\mathbf{lecSet}}
\def\lecPos{\mathbf{lecPos}}

\def\lcSet{\mathbf{lcSet}}
\def\lcPos{\mathbf{lcPos}}

\def\Pos{\mathbf{Pos}}
\def\cPos{\mathbf{cPos}}
\def\ePos{\mathbf{ecPos}}
\def\ecPos{\mathbf{ecPos}}
\def\IPos{\mathbf{IPos}}

\def\cPos{\mathbf{cPos}}
\def\cIPos{\mathbf{cIPos}}
\def\ecIPos{\mathbf{ecIPos}}
\def\lecIPos{\mathbf{lecIPos}}

\def\Cosp{\mathbf{Cosp}}
\def\dCosp{\mathbf{dCosp}}
\def\dICosp{\mathbf{dICosp}}

\def\rng{\ast}

\def\Lang{\mathsf{Lang}}

\newcommand*\saut[1]{\mathbf{SAut}({#1})}
\newcommand*\tro[1]{\mathbf{TrO}({#1})}
\newcommand*\words[1]{\mathscr{W}({#1})}

\newcommand*\subsu{\sqsubseteq}
\newcommand*\psh[1]{\mathbf{PSh}({#1})}

\newcommand*\forrightarrow[1]{\xrightarrow[\adjustbox{raise={0.6ex}{0ex}}{$\scriptstyle{+}$}]{#1}}
\newcommand*\forleftarrow[1]{\xleftarrow[\adjustbox{raise={0.6ex}{0ex}}{$\scriptstyle{+}$}]{#1}}
\newcommand*\backrightarrow[1]{\xrightarrow[\adjustbox{raise={0.6ex}{0ex}}{$\scriptstyle{-}$}]{#1}}
\newcommand*\backleftarrow[1]{\xleftarrow[\adjustbox{raise={0.6ex}{0ex}}{$\scriptstyle{-}$}]{#1}}

\newcommand{\invLambda}{\rotatebox[origin=c]{180}{$\Lambda$}}

\begin{document}
\title{Track objects in cubical presheaf automata}
\date{}

\author[1]{Georg Struth\thanks{\url{g.struth@sheffield.ac.uk}}}
\author[2]{Krzysztof Ziemiański\thanks{\url{ziemians@mimuw.edu.pl}}}
\affil[1]{University of Sheffield, School of Computer Science}
\affil[2]{University of Warsaw, Institute of Mathematics}

\maketitle

\begin{abstract}
 Using the formalism of presheaf automata, we introduce new classes of automata for modelling concurrency that extend higher-dimensional automata and symmetric higher-dimensional automata. We study their languages and characterise these languages in terms of posetal cospans.
\end{abstract}

\noindent\textbf{MSC 2020:} Primary 68Q85, 18B20; Secondary
18B10, 68Q45.\\
\noindent\textbf{Keywords:} Higher-dimensional automaton, symmetric higher-dimensional automaton, track object, posetal cospan, cocompletion.




\section{Introduction}

Higher-dimensional automata (HDA) are a fundamental model of
non-interleaving concurrency. First proposed in the
1990s~\cite{Pratt_ModelingConcurrency}, their behaviour and relationship to other models
of concurrency, including Petri nets and event structures, have subsequently been explored~\cite{vanGlabbeek_Expressiveness}. A formal language theory for HDA, however, has only recently been developed~\cite{FJSZ_HDALang} and a Kleene theorem
has been established~\cite{FJSZ-Kleene} together with other analogues of classical theorems about automata \cite{FZ-Myhill_Nerode,ABFM_Logic}.
In parallel, a few related models of automata were studied, notably symmetric higher-dimensional automata \cite{Gaucher-Symmetric.Precubical.Sets,Kahl-SymmetricHDA,Zouari-Forgetting}.

The main goal of this article is to develop the basics of a language theory for variants of HDA, including symmetric HDA. Instead of following \cite{FJSZ_HDALang} directly, we use the formalism of presheaf automata \cite{SZ_Presheaf-Automata}, which allows a systematic categorical characterisation of languages for a wide class of  HDA-like automata in terms of posetal cospans.
We also propose new models for concurrent automata, in particular mixed HDA, which combine features of HDA and symmetric HDA. 

HDA are presheaves over labelled precube categories with distinguished start and accept elements (also called cells). Languages of HDA consist of ipomsets: isomorphism classes of finite partially ordered sets equipped with a labelling, interfaces and an event order \cite[Section 4]{FJSZ_HDALang}. One possible method of defining languages of HDA goes via track objects -- with each ipomset $P$ we associate its track object, an HDA $\square^P$, and the language of an HDA $X$ consists of all ipomsets that admit an HDA-map $\square^P\to X$.

The presheaf automata framework \cite{SZ_Presheaf-Automata} is a proposal for the categorification of automata theory. It builds on a notion of directed category, which is a category with two distinguished classes of morphisms: formorphisms and backmorphisms. The precube category is an example of a directed category: lower and upper coface maps are formorphisms and backmorphisms, respectively.
For a fixed directed category $\cC$, a ($\cC$-)presheaf automaton is a presheaf over $\cC$ with designated start and accept cells (elements).

Similarly to HDA, languages of presheaf automata over a directed category $\cC$ are defined in terms of track objects, though in this case the definition differs. A presheaf automaton that is simple (that is, has a single start cell and a single accept cell) can be interpreted as a cospan in the category $\psh\cC$ of presheaves over $\cC$: start and accept cells are replaced by maps from representable presheaves. Elementary track objects are images of directed cospans in $\cC$ (those having a formorphism and a backmorphism as their legs) under the Yoneda functor. Track objects are then defined as horizontal compositions of elementary ones. The language of a presheaf automaton $X$ is the class of track objects that admit a morphism into $X$.

The cost of generality in presheaf automata is that they provide no immediate description of track objects in terms of posets, or any other basic combinatorial objects. We provide such a description for a class of precube-like base categories. Namely, if $\cV$ is a subcategory of the directed category $\lecSet$ of labelled c-sets with event relation (Definition \ref{d:enriched-posets}) that is closed under subobjects (a sieve), then track objects over $\cV$ correspond to directed cospans in the category of interval $\cV$-posets whose feet are representable presheaves. Furthermore, this correspondence is a biequivalence of bicategories (Theorem \ref{t:cospans-are-track-objects}). These cospans can be regarded as interval pomsets with interfaces, which recovers the correspondence between track objects and interval ipomsets for classical HDA. Additionally, it gives an explicit description of languages of symmetric HDA and other related models.

We begin this article by recalling necessary facts about categories and presheaf automata in Section \ref{s:preliminaries}.
In Section~\ref{s:concsets}, we introduce a directed category $\cSet$ of c-sets (sets of concurrent events), which serves as the fundamental example of a directed category, as well as a base category for variants of precube categories; these are constructed as comma categories over $\cSet$.
This category is subsequently extended in Section~\ref{s:c-posets} to a directed category $\cPos$ of c-posets (partially ordered sets of events). We prove several properties of these categories; in particular, we show that directed pushouts exist in both $\cSet$ and $\cPos$, although the inclusion $\cSet\subset\cPos$ does not preserve them.

The central technical notion in this work is the directed cocompletion (d-cocompletion), introduced in Section \ref{s:d-cocompletions}. A directed category $\cP$ is a d-cocompletion of its full directed subcategory $\cC$ if it is closed under directed pushouts and generated from $\cC$ by directed pushouts (colimits of directed spans). Due to these properties, the Yoneda embedding $\cC\hookrightarrow\psh\cC$  extends fully and faithfully to $\cP$, and the resulting functor $\cP\to\psh\cC$ preserves directed pushouts.
Consequently, track objects over $\cC$ can be constructed as directed cospans in $\cP$ (rather than cospans in $\psh\cC$), as demonstrated in Section~\ref{s:d-cospans}. Furthermore, we identify so-called interval cospans in $\cP$ -- those corresponding to track objects (Theorem \ref{t:dcospans-are-track-objects}).

The primary example of a d-cocompletion is the inclusion $\cSet\subseteq\cPos$.
In Section~\ref{s:d-cocompletions}, we prove that, under certain conditions, d-cocompletions are preserved by the comma category construction (Theorem \ref{t:SlicedCocompletion}), which allows us to derive new examples of d-cocompletions from $\cSet\subseteq\cPos$. In Section~\ref{s:enriched-posets}, we equip c-sets and c-posets with a labelling and an event relation, which extends the event order of \cite{FJSZ_HDALang}. Then we show that the resulting categories of lec-sets $\lecSet$ and lec-posets $\lecPos$ are comma categories over $\cSet$ and $\cPos$, respectively, and hence the inclusion $\lecSet\subseteq\lecPos$ is a d-cocompletion.
Next, for every sieve $\cV\hookrightarrow\lecSet$ we define the subcategory of $\cV$-posets $\cV\Pos\subseteq\lecPos$ and show that the inclusion $\cV\subseteq\cV\Pos$ is a d-cocompletion as well. Particular choices of $\cV$ are variants of precube-like categories and these give rise to various examples of HDA-related automata.

In Section \ref{s:interval-posets}, we combine the results mentioned above and obtain a direct characterisation of track objects over a sieve $\cV\hookrightarrow\lecPos$. As a main result of the paper, we establish an equivalence between the bicategories of track objects over $\cV$ and of cospans of interval $\cV$-posets (Theorem \ref{t:cospans-are-track-objects}).

Finally, in Section~\ref{s:HDA}, we apply our results to several examples of sieves $\cV$, giving examples of HDA-related automata and descriptions of their respective languages. In this section, we adjust terminology where necessary to reduce the reliance on categorical abstraction and increase accessibility of the exposition.

\section{Preliminaries}
\label{s:preliminaries}

We recall some facts about categories from \cite{Adamek-Herrlich-Strecker-Joy_Of_Cats, Johnson-Yau_2Categories, MacLane_Categories,  Leinster:Basic_Bicategories} and more specifically about d-categories from~\cite{SZ_Presheaf-Automata},
and we fix notation.

\begin{paragraph}{Categories.}

All categories considered in this article are locally small. 
We write $\Ob(\cC)$ or simply $\cC$ for the class of objects of a category $\cC$ and $\cC(U,V)$ for the set of morphisms from $U$ to $V$.
A subcategory $\cD\hookrightarrow\cC$ is \emph{strictly full} if it is full and isomorphism-closed.  We uniformly write $\Forget:\cC\to \cD$ or $\Forget_{\cD}^{\cC}$ for the forgetful functor from category $\cC$ to category $\cD$. 
\end{paragraph}

\begin{paragraph}{Comma categories.}
Let $F\colon\cC\to\cD$ be a functor and  $L$ an object of $\cD$. Then $\cC\slice_F L$ denotes the category of \emph{objects $F$-over $L$}~\cite[II.6]{MacLane_Categories}. Its objects are pairs $(U,\lambda)$ for $U\in \Ob(\cC)$ and $\lambda\colon F(U)\to L$, a morphism $\alpha\colon (U,\lambda) \to (V,\mu)$ is a morphism $\alpha\colon U\to V$ satisfying $\lambda=\mu\circ F(\alpha)$.

We write $\ProjFun\colon\cC\slice_F L\to\cC$ for the projection functor.
If $F$ is the identity functor or clear from context, we write $\cC\slice L$ for $\cC\slice_F L$.
\end{paragraph}

\begin{paragraph}{Presheaves.}
A \emph{presheaf} on a category $\cC$ is a functor $X\colon \cC^{op}\to\Set$. We write $X[U]$ for the value of $X$ on $U\in\Ob(\cC)$ and  $\psh\cC$ for the category of presheaves on $\cC$. Further, $U^\wedge_\cC=\cC(-,U)$ denotes the representable presheaf on $U$, $f^\wedge_\cC\colon V^\wedge_\cC\to U^\wedge_\cC$ the morphism induced by $f\colon V\to U$ and $\yoneda_\cC\colon \cC\to \psh{\cC}$, $U\mapsto U^\wedge_\cC$ the Yoneda embedding.
    
Each functor $F\colon \cD\to\cC$ induces two functors between
categories of presheaves: the \emph{pullback} functor
$F^*\colon \psh\cC\to\psh\cD$, $X\mapsto X\circ F$, and the
\emph{pushforward} functor $F_*\colon \psh\cD\to\psh\cC$, its left
adjoint. The latter is the unique cocontinuous functor that fits into
the commutative square
    \[
    \begin{tikzcd}
        \cD\ar[d,"F"']\ar[r,"\yoneda_\cD"] 
        &
        \psh\cD\ar[d,"F_*"]
    \\
        \cC\ar[r,"\yoneda_\cC"]
        &
        \psh\cC
    \end{tikzcd}
    \]  

The \emph{category of elements} of a presheaf $X$ on $\cC$ is $\cell X=\cC\slice_{\yoneda_\cC} X$. Its objects are pairs $(U,x)$ of $U\in\Ob(\cC)$ and $x\in F(U)$ while
$\cell{X}((V,y),(X,x))=\{f\in \cC(V,U)\mid X[f](x)=y\}$ defines its morphisms.

The \emph{restricted Yoneda functor}~\cite{Isbell60} (see also~\cite{Ulmer68} and ~\cite[Chapters 3.6, 5.1]{Kelly82}) of a functor $J\colon \cC\to\cP$ is the composition
\[
\yoneda(J)\colon \cP\xrightarrow{\yoneda_\cP}\psh\cP\xrightarrow{J^*}\psh\cC,
\]
 where
$\yoneda(J)[U]=(J^\ast\circ \yoneda_\cP)[U] = \cP(J(-),U)$. In the special case of an inclusion $J\colon \cC\hookrightarrow \cP$ of a full subcategory we write
\begin{equation*}
    \yoneda_\cC^\cP=\yoneda(J)\colon \cP\to \psh\cC,
\end{equation*}
$P^\wedge_\cC$ for $\yoneda^\cP_\cC(P)$ and $f^\wedge_\cC\colon V^\wedge_\cC\to U^\wedge_\cC$ for the map induced by $f\in\cC(V,U)$. Note that $P^\wedge_\cC[U]=\cP(U,P)$.
\end{paragraph}

\begin{paragraph}{Directed categories.}
A category $\cC$ is \emph{directed} (a \emph{d-category}) if it is equipped with two wide subcategories $\cC^\alpha\subseteq\cC$, $\alpha\in\{{+},{-}\}$, such that each invertible morphism in $\cC$ is in $\cC^+$ if and only if its inverse is in $\cC^-$~\cite{SZ_Presheaf-Automata}. A \emph{d-functor} between d-categories $\cD$, $\cC$ is a functor $F\colon\cD\to \cC$ satisfying $F(\cD^\alpha)\subseteq \cC^\alpha$.

We call morphisms in $\cC^+$ \emph{forward morphisms} (\emph{formorphisms}) and those in $\cC^-$ \emph{backward morphisms} (\emph{backmorphisms}). We write $V\forrightarrow{}U$ and $V\backrightarrow{}U$ to indicate for- and backmorphisms. 

A d-functor $F\colon\cD\to \cC$ \emph{reflects} the d-structure if for every morphism $f$ in $\cD$, $f\in \cD^\alpha$ whenever $F(f)\in \cC^\alpha$.  
Each functor $F\colon\cD\to \cC$ into a d-category $\cC$ induces the \emph{reflected d-structure} on $\cD$ given by $\cD^\alpha = F^{-1}(\cC^\alpha)$.

In particular, if $\cC$ is a d-category, $F:\cC\to\cD$ a functor and $L\in\Ob(\cD)$, then the comma category $\cC\slice_FL$ is again a d-category with reflected d-structure induced by the projection functor \hbox{$\ProjFun:\cC\slice_FL\to\cC$}.

A d-category $\cD$ is a \emph{full d-subcategory} of a d-category $\cC$ if $\Ob(\cD)\subseteq\Ob(\cC)$ and $\cD(V,U)=\cC(V,U)$, $\cD^\alpha(V,U)=\cC^\alpha(V,U)$ for all $U,V\in\Ob(\cD)$. We then write  $\cD\hookrightarrow\cC$.

Every category can be equipped with the \emph{full d-structure} $\cC^\alpha = \cC$ and with the \emph{trivial d-structure} where only identities are for- and backmorphisms.
\end{paragraph}

\begin{paragraph}{Presheaf automata.}
A \emph{presheaf automaton} on a d-category $\cC$ (a \emph{$\cC$-automaton})~\cite[Def.\@ 3.6]{SZ_Presheaf-Automata} is a presheaf $X$ on $\cC$ with distinguished
sets of \emph{start elements} $\bot_X$ and
\emph{accept elements} $\top_X$ in $\cell X$. 
  
We write $\aut\cC$ for the category
of $\cC$-automata and morphisms of $\cC$-presheaves that preserve
start and accept elements:
\begin{equation*}
	\aut\cC(Y,X)=\{f\in\psh\cC(Y,X)\mid f(\bot_Y)\subseteq \bot_X,\; f(\top_Y)\subseteq \top_X\}.
\end{equation*}
\end{paragraph}

\begin{paragraph}{Simple presheaf automata.}

A $\cC$-automaton $X$ is \emph{simple}~\cite[Def.\@ 3.7]{SZ_Presheaf-Automata} if it has a single start
element $\bot_X$ and a single accept element $\top_X$ -- written with some abuse of notation. The  \emph{source} and \emph{target} of a simple $\cC$-automaton $X$ are $\src(X)=\pi_X(\bot_X)\in\Ob(\cC)$ and $\tgt(X)=\pi_X(\top_X)\in\Ob(\cC)$, respectively. 

We write $\saut{\cC}$ for the full subcategory of simple $\cC$-automata within $\aut\cC$. It decomposes into a disjoint union
\[
    \saut\cC\cong \coprod_{U,V\in \Ob(\cC)} \saut\cC_U^V,
\]
where $\saut\cC_U^V$  is the full subcategory  of $\saut\cC$ whose objects $X$ satisfy $\src(X)=U$ and $\tgt(X)=V$.
    
Each simple $\cC$-automaton $X\in\saut\cC$ can be regarded as a cospan 
\begin{equation*}
	\src(X)^\wedge_\cC\xrightarrow{\bot_X} X \xleftarrow{\top_X}
        \tgt(X)^\wedge_\cC
\end{equation*}
of $\cC$-presheaves, abusing notation for $\bot_X$ and $\top_X$ once again: they are now morphisms of presheaves.
A morphism $f:X \to Y$ of simple $\cC$-automata is then a commutative diagram ($U=\src(X)=\src(Y)$, $X=\tgt(X)=\tgt(Y)$)
\begin{equation}
\label{e:saut-morphisms}
    \begin{tikzcd}[row sep=tiny]
        & X\ar[dd,"f"] & \\
        U^\wedge_\cC\ar[ru,"\bot_X"]\ar[rd,"\bot_Y"'] & & V^\wedge_\cC\ar[lu,"\top_X"']\ar[ld,"\top_Y"] \\
        & Y & 
    \end{tikzcd}
\end{equation}

The \emph{concatenation} of simple $\cC$-automata $X\in\saut\cC_U^V$ and $Y\in\saut\cC_V^W$ is the simple $\cC$-automaton $X*Y\in\saut\cC_U^W$ constructed by the pushout
\begin{equation}
\label{e:saut-concatenation}
    \begin{tikzcd}[row sep=0.5cm, column sep=0.5cm]
        & & X*Y 
        & & \\
        & X\ar[ru] & & Y \ar[lu] & \\
        U^\wedge_\cC 
        \ar[ru,"\bot_X"']
        \ar[uurr, bend left=30, "\bot_{X*Y}"]
        & &
        V^\wedge_\cC \ar[ru,"\bot_Y"] \ar[lu,"\top_X" swap] & &
        W^\wedge_\cC 
        \ar[lu,"\top_Y"]
        \ar[uull, bend right=30, "\top_{X*Y}"']
    \end{tikzcd}
\end{equation}
This defines functors $\saut\cC_U^V\times \saut\cC_V^W\to \saut\cC_U^W$ for all $U,V,W\in\Ob(\cC)$. 

Concatenation of simple $\cC$-automata is associative up to natural isomorphism. Thus the category $\saut\cC$ gives rise to an example of a cospan bicategory, with concatenation of simple $\cC$-automata being horizontal cospan composition.  We discuss this connection further below.
\end{paragraph}

\begin{paragraph}{Track objects.}

An \emph{elementary track object} is a simple $\cC$-automaton 
\begin{equation*}
V^\wedge_\cC\xrightarrow{f^\wedge_\cC} U^\wedge_\cC \xleftarrow{g^\wedge_\cC} W^\wedge_\cC,
\end{equation*}
for some $U,V,W\in\Ob(\cC)$, $f\in\cC^+(V,U)$ and $g\in\cC^-(W,U)$.
A \emph{track object} over $\cC$ (\emph{$\cC$-track object}) is a simple automaton isomorphic to a finite concatenation of elementary track objects. 
   
The (strictly) full subcategory $\tro\cC$ of track objects of $\saut\cC$ decomposes again into a disjoint union 
\begin{equation*}
\tro\cC=\coprod_{U,V\in\Ob(\cC)}\tro\cC_U^V,
\end{equation*}
where $\tro\cC_U^V=\tro\cC\cap \saut\cC_U^V$. We call morphisms of $\tro\cC$ \emph{subsumptions}. 

The definition given here is equivalent to the original one \cite[Def.\@ 7.1]{SZ_Presheaf-Automata} by \cite[Lem.\@ 7.13]{SZ_Presheaf-Automata}.
\end{paragraph}

\begin{paragraph}{Languages.}
We slightly generalise a previous definition~\cite[Def.\@ 8.1]
{SZ_Presheaf-Automata}.
A \emph{$\cC$-word} in a d-category $\mathcal{C}$ is an equivalence class of track objects with respect to the relation $\sim$ defined, for all $\Gamma,\Delta\in\tro\cC$, by
\[
\Gamma\sim\Delta \iff \tro{\cC}(\Gamma,\Delta)\neq\emptyset\neq\tro{\cC}(\Delta,\Gamma).
\]
The class of $\cC$-words forms a poset $\words\cC$ with respect to the \emph{subsumption relation}
\[
    [\Gamma]\subsu[\Delta]
    \iff
    \tro\cC(\Gamma,\Delta)\neq \emptyset.
\]
The category of $\cC$-words has the same objects as $\cC$ and words $\mathscr{W}(\cC)_U^V=(\tro\cC_V^U)/{\sim}$ as morphisms $U\to V$. Word composition is induced by concatenation of track objects. Subsumption defines an enrichment of the category $\mathscr{W}(\cC)$ in the category of posets.

The \emph{language} of a $\cC$-automaton $X$~\cite[Def.\@ 8.2]{SZ_Presheaf-Automata} is the class of $\cC$-words
\[
    \Lang(X) = \{[\Gamma]\in \words\cC\mid \aut\cC(\Gamma,X)\neq \emptyset \}.
\]

The class $\Lang(X)$ is down-closed with respect to subsumption (if $[\Gamma]\in\Lang(X)$ and $[\Delta]\subsu[\Gamma]$, then $[\Delta]\in \Lang(X)$).
For specific choices of $\cC$ we recover languages of finite automata, higher-dimensional automata, Petri nets, etc -- see \cite{SZ_Presheaf-Automata} for examples and a further details on languages of $\cC$-automata.
\end{paragraph}

\begin{paragraph}{Directed pushouts and pullbacks.}
Consider the d-categories
\begin{equation*}    
    \Lambda=
    \left(
    \begin{tikzcd}
        0 \ar[r,"-" swap]\ar[d,"+"] & 1 \\
        2
    \end{tikzcd}
    \right)
    \qquad
    \lozenge=
    \left(
    \begin{tikzcd}
        0 \ar[r,"-" swap]\ar[d,"+"]\ar[dr] & 1\ar[d,"+" swap] \\
        2\ar[r,"-"] & 3
    \end{tikzcd}
    \right)
    \qquad
    \invLambda
    =
    \left(
    \begin{tikzcd}
         & 1\ar[d,"+" swap] \\
        2\ar[r,"-"] & 3
    \end{tikzcd}    
    \right)
\end{equation*}
(We occasionally  use these symbols also for the corresponding categories without d-structure.)

Let $\cC$ be a d-category.
\begin{itemize}
    \item A \emph{d-span} (\emph{d-cospan}) is a d-functor $\Lambda\to\cC$ ($\invLambda\to \cC$).
    \item A \emph{d-square} is a d-functor $\lozenge\to \cC$ (d-squares commute by definition).
    \item A \emph{d-pushout} (\emph{d-pullback}) is a d-square that is also a pushout (pullback).
\end{itemize}

A d-category $\cC$ is \emph{d-cocomplete} if all d-spans extend to d-pushouts (in which case the structural inclusions must be for- and backmorphisms, respectively).
 
A d-functor $F\colon \cD\to \cC$
\begin{itemize}
    \item \emph{preserves d-pushouts} if it maps d-pushouts to d-pushouts,
    \item \emph{reflects} d-pushouts if $X$ is a d-pushout whenever a d-square $\lozenge \xrightarrow{X} \cD \xrightarrow{F} \cC$ is,
    \item \emph{creates d-pushouts} if 
 every lifting problem 
\begin{equation}
\label{e:pushout-creation}
\begin{tikzcd}
\Lambda\ar[d]\ar[r,"Y"] & \cD\ar[d,"F"] \\
\lozenge\ar[r, swap, "X"]\ar[ru,dashed,"Y'"] & \cC
\end{tikzcd}
\end{equation}
for the d-pushout $X$ in $\dCat$ has a unique solution $Y'$ which is a d-pushout. 
\end{itemize}
A functor that creates d-pushouts also preserves and reflects them. Preservation, reflection and creation of d-pullbacks is defined similarly.
\end{paragraph}

\begin{paragraph}{Bicategories.}
A \emph{bicategory} $\cB$ consists of
\begin{itemize}
    \item a class of objects $\Ob(\cB)$, 
    \item a hom category $\cB_U^V$ for
    all $U,V\in \Ob(\cB)$,
    \item horizontal composition functors $\cB_U^V\times \cB_V^W\to \cB_U^W$ for all $U,V,W\in\Ob(\cB)$, 
    \item identities, associators and unitors subject to coherence conditions.
\end{itemize} 
See \cite[2.1]{Johnson-Yau_2Categories} for details. Morphism composition in hom categories is called \emph{vertical composition}.

A \emph{pseudofunctor} $F$ from a bicategory $\cA$ to a bicategory $\cB$ consists of a function $\Ob(F)\colon\Ob(\cA)\to\Ob(\cB)$, \emph{local functors} $F_U^V\colon\cA_U^V\to\cB_U^V$, and natural transformations called laxity constraints~\cite[4.1]{Johnson-Yau_2Categories}. 
A pseudofunctor is \emph{locally faithful} (\emph{locally full}) if all its local functors are faithful (full).

A pseudofunctor $F\colon\cA\to\cB$ is a \emph{biequivalence} \cite[2.2]{Leinster:Basic_Bicategories} \cite[6.2.8]{Johnson-Yau_2Categories} if it is locally an equivalence (that is, local functors between hom categories are equivalences) and surjective up to equivalence on objects. Clearly, the latter holds when $\Ob(F)$ is an identity.
    
More generally, for a given property $P$, we say that a bicategory is \emph{locally $P$} if all hom categories satisfy $P$. For example, a sub-bicategory is locally full if all local subcategories are full; a pseudofunctor is locally faithful if all local functors are faithful.
\end{paragraph}

\begin{paragraph}{Cospans.}
Let $\cC$ be a category that has all pushouts. 
The \emph{bicategory of $\cC$-cospans} \cite[2.1.22]{Johnson-Yau_2Categories} $\Cosp(\cC)$ is given by the following data:
\begin{itemize}
\item Its objects are those of $\cC$. 
\item The hom category $\Cosp(\cC)_V^W$ has cospans $\big(V\to U\gets W\big)$ in $\cC$ as objects ($U,V,W\in\Ob(\cC)$), and commutative diagrams as in \eqref{e:saut-morphisms} in $\cP$ as morphisms.
\item Horizontal composition of cospans is given by
\[
\big(V\xrightarrow{f} U \xleftarrow{g} W\big)
*
\big(W\xrightarrow{h} X \xleftarrow{i} Y\big)
=
\big(V \xrightarrow{j\circ f} U\sqcup_{W}X\xleftarrow{k\circ i} Y\big),
\]
where $j:U\to U\sqcup_{W}X$ and $k:X\to U\sqcup_{W}X$ are canonical inclusions, similarly to \eqref{e:saut-concatenation}.
\end{itemize}

For a class $\mathscr{A}\subseteq \Ob(\cC)$, we denote by $\Cosp(\cC,\mathscr{A})\subseteq\Cosp(\cC)$ the full sub-bicategory with objects $\mathscr{A}$. With some abuse of notation, we write
\begin{itemize}
\item 
    $\Cosp(\cC,\cD)$ for $\Cosp(\cC,\Ob(\cD))$ if $\cD$ is a subcategory of $\cC$,
\item 
    $\Cosp(\psh\cC,\cC)$ for the full sub-bicategory of $\Cosp(\psh\cC)$ that has presheaves represented by objects of $\cC$ as objects. 
\end{itemize}
See Definition~\ref{d:d-cospan} for a generalisation of this construction.

The bicategory $\Cosp(\psh\cC,\cC)$ has hom categories $\saut\cC_V^W$ for $V,W\in\Ob(\cC)$, and the horizontal composition is given by concatenation. The class of track objects over $\cC$ forms the locally strictly full sub-bicategory $\tro\cC\subseteq \Cosp(\psh\cC,\cC)$.
It is a minimal locally strictly full sub-bicategory that contains all elementary track objects.
\end{paragraph}

\section{Concurrent sets}
\label{s:concsets}

In this section we introduce the d-category of concurrent sets and study its basic properties. We show that this category is d-cocomplete and has all pullbacks.

\begin{definition}
A \emph{c-set} is a finite set.
\end{definition}

Here, ``c-set'' stands for ``concurrent set'' because c-sets represent sets of concurrent events in the intended applications of higher-dimensional automata. The development in this section does not require finiteness of c-sets.

\begin{definition}
We call the set $\Omega=\{0,\ast,1\}$ \emph{activity set}.
\end{definition}

The elements of $\Omega$ indicate the activity of concurrent events: $0$, $\ast$ and $1$ indicate that an event is not yet active, active and no longer active, respectively.

\begin{definition}
\label{d:concset-map}
Let $U,V,W$ be c-sets.
\begin{enumerate}
    \item A \emph{c-map} $f\colon V\to U$ is a pair of an injection $f\colon V\rightarrow U$ and an \emph{activity map}
  $\varepsilon_f\colon U\to \Omega$ such that $\varepsilon_f^{-1}(\ast)=f(U)$.

  \item Composition of c-maps $g\colon W\to V$ and $f\colon V\to U$ is given by
$fg\colon W\rightarrow U$ on injections and
\begin{equation}\label{e:ConcsetComposition}
  \varepsilon_{fg}(u)=
  \begin{cases}
  \varepsilon_g(f^{-1}(u)) & \text{if } u\in f(V),\\
    \varepsilon_f(u) & \text{otherwise}.
    \end{cases}
\end{equation}
\end{enumerate}
\end{definition}
It is routine to check that c-map composition is associative and that the identity functions $\id_U:U\to U$, which satisfy $\varepsilon_{\id_U}(U)=\{\ast\}$, form units of composition. 

Figure~\ref{fig:concset-map-comp} gives a pictorial account of c-maps and their composition. 

\begin{figure}[t]
  \begin{center}

      \begin{tikzpicture}[x=.4cm, y=.4cm]
  \draw (0,0)  ellipse (1 and 3);
  
  \fill[cyan!10, shift={(5,0)}] (-10:1 and 3) arc (-10:170:1 and 3) -- (170:0.5 and 1.5) arc (170:-10:0.5 and 1.5) -- cycle;
  \fill[green!10, shift={(5,0)}] (170:1 and 3) arc (170:350:1 and 3) -- (350:0.5 and 1.5) arc (350:170:0.5 and 1.5) -- cycle;
  
  \draw[fill=gray!30] (5,0) ellipse (0.5 and 1.5);
  \draw (5,0) ellipse (1 and 3);
  
  \draw                (10,0) ellipse (1 and 3);

  \draw (0,3) -- (5,1.5)   (0,-3) -- (5,-1.5);
  \draw   (5,3) -- (10,3)    (5,-3) -- (10,-3)
                (5,1.5) -- (9.8,0) (5,-1.5) -- (9.8,0);

  \draw (5,0) +(170:0.5 and 1.5) -- +(170:1 and 3);
  \draw (5,0) +(-10:0.5 and 1.5) -- +(-10:1 and 3);

  \node (V)     at (0,-4.5)  {$V$};
  \node (U)     at (5,-4.5)  {$U$};
  \node (Omega) at (10,-4.5) {$\Omega$};

  \draw[->] (V) -- node[below] {$f$} (U);
  \draw[->] (U) -- node[below]   {$\varepsilon_f$} (Omega);

  \node at (10,0)    {$\ast$};
  \node[cyan] at (10,1.5)  {$0$};
  \node[teal] at (10,-1.5) {$1$};
\end{tikzpicture}
\qquad\qquad\qquad
\begin{tikzpicture}[x=.5cm, y=.4cm]
  \fill[white] (0,0) +(0:0.5 and 1.5) arc (0:180:0.5 and 1.5) -- cycle;
  \fill[white] (0,0) +(180:0.5 and 1.5) arc (180:360:0.5 and 1.5) -- cycle;
  \draw (0,0) ellipse (0.5 and 1.5);

  \fill[gray!30, shift={(5,0)}] (-10:0.5 and 1.5) arc (-10:170:0.5 and 1.5) -- cycle;
  \fill[gray!30, shift={(5,0)}]  (170:0.5 and 1.5) arc (170:350:0.5 and 1.5) -- cycle;
  
  \fill[cyan!10, shift={(5,0)}] (-10:1 and 3) arc (-10:170:1 and 3) -- (170:0.5 and 1.5) arc (170:-10:0.5 and 1.5) -- cycle;
  \fill[green!10, shift={(5,0)}]  (170:1 and 3) arc (170:350:1 and 3) -- (350:0.5 and 1.5) arc (350:170:0.5 and 1.5) -- cycle;

  \draw (5,0) ellipse (0.5 and 1.5);
  \draw (5,0) ellipse (1 and 3);

  \fill[gray!30, shift={(10,0)}] (-10:0.5 and 1.5) arc (-10:170:0.5 and 1.5) -- cycle;
  \fill[gray!30, shift={(10,0)}]  (170:0.5 and 1.5) arc (170:350:0.5 and 1.5) -- cycle;

  \fill[cyan!10, shift={(10,0)}] (-10:1 and 3) arc (-10:170:1 and 3) -- (170:0.5 and 1.5) arc (170:-10:0.5 and 1.5) -- cycle;
  \fill[green!10, shift={(10,0)}]  (170:1 and 3) arc (170:350:1 and 3) -- (350:0.5 and 1.5) arc (350:170:0.5 and 1.5) -- cycle;

  \fill[cyan!20, shift={(10,0)}] (-10:1.5 and 4.5) arc (-10:170:1.5 and 4.5) -- (170:1 and 3) arc (170:-10:1 and 3) -- cycle;
  \fill[green!20, shift={(10,0)}]  (170:1.5 and 4.5) arc (170:350:1.5 and 4.5) -- (350:1 and 3) arc (350:170:1 and 3) -- cycle;

  \draw (10,0) ellipse (0.5 and 1.5);
  \draw (10,0) ellipse (1 and 3);
  \draw (10,0) ellipse (1.5 and 4.5);

  \draw[dashed] (0,-1.5) -- (10,-1.5)   (0,1.5) -- (10,1.5)
                (5,-3)   -- (10,-3)     (5,3)   -- (10,3);

  \draw (5,0) +(170:0.5 and 1.5) -- +(170:1 and 3);
  \draw (5,0) +(-10:0.5 and 1.5) -- +(-10:1 and 3);

  \draw (10,0) +(170:0.5 and 1.5) -- +(170:1.5 and 4.5);
  \draw (10,0) +(-10:0.5 and 1.5) -- +(-10:1.5 and 4.5);

  \node (W) at (0,-5.5)  {$W$};
  \node (V) at (5,-5.5)  {$V$};
  \node (U) at (10,-5.5) {$U$};

  \draw[->] (W) -- node[above] {$g$} (V);
  \draw[->] (V) -- node[above] {$f$} (U);

  \node at (5,0)      {$\ast$};
  \node at (5,2.25)   {$0$};
  \node at (5,-2.25)  {$1$};

  \node at (10,0)     {$\ast$};
  \node at (10,2.25)  {$0$};
  \node at (10,-2.25) {$1$};
  \node at (10,3.75)  {$0$};
  \node at (10,-3.75) {$1$};
\end{tikzpicture}
\end{center}
   \caption{A c-map $f\colon V\to U$ on the left; its composition with the c-map $g\colon W\to V$ on the right. In the diagram on the left, the concurrent events in $f(V)$ are active (labelled with $\ast$), those in the upper blue part of $U\setminus f(V)$ not yet active (labelled with $0$) and those in the lower green part no longer active (labelled with $1$). In the composite c-map $fg$ on the right, the activity of events in $f(V)$ is determined by the matching events in $V$;  that of events in $U\setminus f(V)$ is determined by the activity map $\varepsilon_f$.}
  \label{fig:concset-map-comp}
\end{figure}

\begin{definition}
The d-category $\CSet$ has c-sets as objects and c-maps as morphisms. Its d-structure is given, for all c-sets $U$, $V$, by 
\begin{align*}
\CSet^+(V,U) &= \{f\colon V\to U \mid \varepsilon^{-1}_f(1)=\emptyset\},\\
\CSet^-(V,U) &= \{f\colon V\to U \mid \varepsilon^{-1}_f(0)=\emptyset\}.
\end{align*}
\end{definition}
There is an obvious forgetful functor $\Forget\colon\cSet\to\Set$.
Isomorphisms in $\CSet$ are exactly those morphisms that are both for- and backmorphisms. Their underlying injections are bijections, and their activity maps have image $\{*\}$.

In the sequel, we overload notation for c-maps and their underlying injections, referring to the underlying category $\cSet$ or $\Set$ whenever confusion may arise.

In the following lemma we collect some basic properties of c-maps. 
\begin{lemma}
\label{l:cset-properties}
    Let $g\colon W\to V$ and $f\colon V\to U$ be c-maps.
\begin{enumerate}[1.]
\item \label{i:comp}
    The activity map $\varepsilon_{fg}$ is the unique function for which the following diagram commutes and, equivalently, the following two equations hold:

\begin{minipage}{.45\textwidth}
\begin{equation*}
\begin{tikzcd}
    V \ar[r,"f"] \ar[dr,"\varepsilon_g" swap]
    &
    U \ar[d,"\varepsilon_{fg}"]
    &
    U\setminus f(V) \ar[dl,"\varepsilon_f"]\ar[hookrightarrow, l, swap]
\\
    & \Omega &
\end{tikzcd}
\end{equation*}
\end{minipage}
\begin{minipage}{.5\textwidth}
\begin{align}
   \varepsilon_g &= \varepsilon_{fg}\circ f,\label{e:ConcsetComp1}\\
   \varepsilon_f|_{U\setminus f(V)} &=\varepsilon_{fg}|_{U\setminus f(V)}.\label{e:ConcsetComp2}
\end{align} 
\end{minipage}
\item \label{i:pure}
    If the c-map $fg$ is a formorphism (backmorphism), then so are $g$ and $f$.
\item \label{i:monic}
    All c-maps are monic.
\item \label{i:unique-ext}
    The injection $f$ extends uniquely to a formorphism and a backmorphism, setting $\varepsilon_f(u)$ to $0$ and to $1$ for all $u\in U\setminus f(V)$, respectively.
\end{enumerate}
\end{lemma}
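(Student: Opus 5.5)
The plan is to prove the four items in order, directly from Definition~\ref{d:concset-map}. Item~\ref{i:comp} is the foundation and the others follow from it by short case distinctions.

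For item~\ref{i:comp}, first note that $U = f(V)\sqcup (U\setminus f(V))$ and $f$ is injective. Hence a function $U\to\Omega$ is uniquely determined by its precomposition with $f$ and its restriction to $U\setminus f(V)$. Formula~\eqref{e:ConcsetComposition} says exactly that $\varepsilon_{fg}(f(v))=\varepsilon_g(v)$ for $v\in V$, and that $\varepsilon_{fg}=\varepsilon_f$ on $U\setminus f(V)$. These are equations~\eqref{e:ConcsetComp1} and~\eqref{e:ConcsetComp2}, and uniqueness is immediate from the disjoint decomposition.

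For item~\ref{i:pure}, suppose $fg$ is a formorphism, so $1\notin\im\varepsilon_{fg}$. By~\eqref{e:ConcsetComp1}, $\im\varepsilon_g\subseteq\im\varepsilon_{fg}$, so $g$ is a formorphism. For $f$, note that $\im\varepsilon_f=\{\ast\}\cup\varepsilon_f(U\setminus f(V))$, since $\varepsilon_f^{-1}(\ast)=f(V)$. By~\eqref{e:ConcsetComp2}, the second part lies in $\im\varepsilon_{fg}$, so $1\notin\im\varepsilon_f$. The backmorphism case is the same argument with $0$ in place of $1$.

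For item~\ref{i:monic}, suppose $f h=f h'$ for c-maps $h,h'\colon W\to V$. The underlying injections agree because $f$ is injective, hence monic in $\Set$. For the activity maps, apply~\eqref{e:ConcsetComp1} to both composites: $\varepsilon_h=\varepsilon_{fh}\circ f=\varepsilon_{fh'}\circ f=\varepsilon_{h'}$. So $h=h'$.

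For item~\ref{i:unique-ext}, a c-map structure on the injection $f$ is a function $\varepsilon_f$ that equals $\ast$ exactly on $f(V)$. For a formorphism, the value on $U\setminus f(V)$ must avoid both $\ast$ and $1$, so it is forced to be $0$. Symmetrically, for a backmorphism it is forced to be $1$. Both choices are valid c-maps, which gives existence and uniqueness. (The condition in Definition~\ref{d:concset-map} should read $\varepsilon_f^{-1}(\ast)=f(V)$; I use that reading throughout.)

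None of the steps is a real obstacle. The only point needing care is in item~\ref{i:pure}: the conclusion for $f$ must go through~\eqref{e:ConcsetComp2}, because $\varepsilon_{fg}$ and $\varepsilon_f$ differ on $f(V)$.
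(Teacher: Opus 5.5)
Your proposal is correct and follows the paper's proof essentially step for step. Items (1) and (2) are read off from the clauses of \eqref{e:ConcsetComposition}, (3) combines injectivity in $\Set$ with \eqref{e:ConcsetComp1}, and (4) follows directly from the definitions; your reading $\varepsilon_f^{-1}(\ast)=f(V)$ is the intended one.
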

\begin{proof}
The equations and the diagram in (1) rewrite the clauses for $\varepsilon_{fg}$ in \eqref{e:ConcsetComposition}. Item (2) is an immediate consequence of these clauses.
For (3), assume that the c-maps $f\colon V\to U$, $g,g'\colon W\to V$ satisfy $fg=fg'$ in $\cSet$. Then, in $\Set$, they also satisfy $fg=fg'$, which implies $g=g'$. Moreover, $\varepsilon_g = \varepsilon_{fg}\circ f = \varepsilon_{fg'}\circ f = \varepsilon_{g'}$ using \eqref{e:ConcsetComp1}.  Thus $g=g'$ in $\cSet$. Finally, (4) is immediate from the definitions.
\end{proof}

Next we record some elementary properties of pushouts of sets, which are frequently used in the sequel. We frequently use the diagram
\begin{equation}        
    \label{e:injective-pushout-square}
    \begin{tikzcd}
    W
        \ar[r,"g"]
        \ar[d,"f" swap]
    &
    V
        \ar[d, "j"]
    \\
    U\ar[r,"i" swap]
    &
    X
    \end{tikzcd}   
\end{equation}

\begin{lemma}
\label{l:pushout-square-desc}
Assume that \eqref{e:injective-pushout-square} is a pushout square in $\Set$ such that $f$ and $g$ are injections. Then
\begin{enumerate}[(1)]
\item the functions $i$ and $j$ are injective,
\item the square \eqref{e:injective-pushout-square} is a pullback diagram,
    \item there is a partition
    $X = X_U \sqcup X_W \sqcup X_V$, where 
    \begin{itemize}
        \item $X_U=X\setminus j(V)=i(U)\setminus j(V)=i(U\setminus f(W))$,
        \item $X_W=i(U)\cap j(V) = if(W) = jg(W)$,
        \item $X_V=X\setminus i(U)=j(V)\setminus i(U)=j(V\setminus g(W))$.
    \end{itemize}
\end{enumerate}
\end{lemma}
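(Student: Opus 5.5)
The plan is to use the explicit construction of pushouts in $\Set$ and transfer the conclusions to an arbitrary pushout via the canonical bijection. All three properties (injectivity of $i,j$, being a pullback, and the stated partition with its equalities) are invariant under composing $i$ and $j$ with a bijection $X\to X'$. So it suffices to verify them for the standard model $X=(U\sqcup V)/{\approx}$. Here $\approx$ is the equivalence relation generated by $f(w)\approx g(w)$ for $w\in W$, and $i,j$ are the induced maps.

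The key observation is that, because $f$ and $g$ are injective, the generating relation is already an equivalence relation after adding reflexivity and symmetry; no nontrivial transitive chains arise. Indeed, each element of $U$ is related to at most one element of $V$ (namely $g(f^{-1}(u))$ if $u\in f(W)$), and vice versa. Hence every equivalence class falls into one of three types:
\begin{itemize}
\item a singleton $\{u\}$ with $u\in U\setminus f(W)$,
\item a pair $\{f(w),g(w)\}$ with $w\in W$,
\item a singleton $\{v\}$ with $v\in V\setminus g(W)$.
\end{itemize}
Verifying this classification is the only step requiring care, since transitivity must be checked using injectivity of both $f$ and $g$. I expect it to be the main (though mild) obstacle.

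From the classification, (1) follows at once, because no two distinct elements of $U$, and no two distinct elements of $V$, share a class. For (3), I would define $X_U$, $X_W$ and $X_V$ as the sets of classes of the first, second and third type respectively. These are disjoint and cover $X$. The listed equalities then read off directly. For example, $i(U)\cap j(V)$ consists exactly of the classes meeting both $U$ and $V$, which are the pair classes, and this set equals $if(W)=jg(W)$. Similarly, $X\setminus j(V)$ consists of the classes not meeting $V$, which are the first-type classes, and these are exactly $i(U\setminus f(W))=i(U)\setminus j(V)$. The case of $X_V$ is symmetric.

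For (2), take $u\in U$ and $v\in V$ with $i(u)=j(v)$. Their common class meets both $U$ and $V$, so it is a pair class $\{f(w),g(w)\}$, which gives $u=f(w)$ and $v=g(w)$. This $w$ is unique by injectivity of $f$. Hence the comparison map from $W$ to $U\times_X V$ is bijective, and the square is a pullback in $\Set$.
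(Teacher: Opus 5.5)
Your proof is correct and complete. The crux is your observation that injectivity of $f$ (resp.\ $g$) lets each element of $U$ (resp.\ $V$) be related to at most one element of the other summand, so the reflexive--symmetric closure of $f(w)\approx g(w)$ is already transitive. From there, (1)--(3) read off from the three types of classes and transfer to an arbitrary pushout along the comparison bijection.

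The paper states this lemma without proof, as an elementary fact about pushouts in $\Set$, so there is no competing argument to compare against. Your explicit quotient model supplies exactly the omitted details. The same classification of classes also gives Lemma~\ref{l:inj-pushout-crit}, which the paper likewise leaves unproved, with essentially no extra work.
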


An \emph{injective pushout square} is a square satisfying the assumptions of Lemma \ref{l:pushout-square-desc} 

\begin{lemma}
\label{l:inj-pushout-crit}
    Assume that \eqref{e:injective-pushout-square} is a commutative square of injections. Then it is a pushout square if and only if $X=i(U)\cup j(V)$ and the induced map $W\to i(U)\cap j(V)$ is a bijection.
\end{lemma}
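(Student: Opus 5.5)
We compare the given commutative square of injections with the explicit pushout of $f$ and $g$ in $\Set$ and apply Lemma~\ref{l:pushout-square-desc} to the latter. Let $P = (U\sqcup V)/{\sim}$, where $\sim$ is generated by $f(w)\sim g(w)$. Let $i',j'$ be the canonical maps into $P$, and let $\phi\colon P\to X$ be the unique map with $\phi i'=i$ and $\phi j'=j$; it exists because the square commutes. The square is a pushout if and only if $\phi$ is a bijection, so it suffices to characterise when $\phi$ is a bijection.

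For the forward implication, assume the square is a pushout. Then Lemma~\ref{l:pushout-square-desc}(3) gives the partition $X=X_U\sqcup X_W\sqcup X_V$, so $X=i(U)\cup j(V)$, and it gives $X_W=i(U)\cap j(V)=if(W)$. The induced map $W\to i(U)\cap j(V)$ is $w\mapsto if(w)$. It is surjective onto $if(W)$, and it is injective because $i$ and $f$ are injections. Hence it is a bijection.

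For the converse, assume $X=i(U)\cup j(V)$ and that $W\to i(U)\cap j(V)$ is a bijection. Then $\phi$ is surjective, since its image $i(U)\cup j(V)$ is all of $X$. For injectivity, note that $P$ is the pushout of injections, so Lemma~\ref{l:pushout-square-desc} applies to it: $i'$ and $j'$ are injective and $P = i'(U)\cup j'(V)$. Take $p,q\in P$ with $\phi(p)=\phi(q)$ and split into cases by where they lie.
\begin{itemize}
\item[(a)] If $p=i'(u)$ and $q=i'(u')$, then $i(u)=i(u')$. Injectivity of $i$ gives $u=u'$, so $p=q$.
\item[(b)] If both lie in $j'(V)$, the same argument with $j$ gives $p=q$.
\item[(c)] If $p=i'(u)$ and $q=j'(v)$ with $i(u)=j(v)$, then this common value lies in $i(U)\cap j(V)$. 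By surjectivity of the induced map there is $w\in W$ with $if(w)=i(u)$; since $jg=if$, also $jg(w)=j(v)$. Injectivity of $i$ and $j$ gives $u=f(w)$ and $v=g(w)$. Therefore $p=i'f(w)=j'g(w)=q$.
\end{itemize}
So $\phi$ is a bijection and the square is a pushout.

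The only step needing care is case (c), which rests on surjectivity of $W\to i(U)\cap j(V)$. Injectivity of that map holds automatically for a square of injections, so the real content of the condition is that $i(U)\cap j(V)=if(W)$.
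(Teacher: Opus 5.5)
The paper gives no proof of this lemma; it records it, together with Lemma~\ref{l:pushout-square-desc}, as an elementary fact about pushouts in $\Set$. Your argument via the comparison map $\phi$ from the explicit pushout is correct and is the standard one.

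A few small points:
\begin{itemize}
\item In the converse you never use injectivity of $i'$ and $j'$, and $P=i'(U)\cup j'(V)$ holds by construction, so the appeal to Lemma~\ref{l:pushout-square-desc} for $P$ can be dropped.
\item The case $p\in j'(V)$, $q\in i'(U)$ is the mirror image of (c) and should be mentioned.
\item In the forward direction, all the content sits in the identity $i(U)\cap j(V)=if(W)$ taken from Lemma~\ref{l:pushout-square-desc}(3). Citing it is legitimate, since that lemma precedes this one.
\end{itemize}

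If you want the forward direction to be self-contained, argue as follows. Because $f$ and $g$ are injective, each element of $U\sqcup V$ has at most one partner under $f(w)\sim g(w)$, so $i'(u)=j'(v)$ forces $u=f(w)$ and $v=g(w)$ for some $w\in W$. Bijectivity of $\phi$ then yields both $X=\phi(P)=i(U)\cup j(V)$ and $i(U)\cap j(V)=if(W)$ directly.
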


Next we collect necessary and sufficient conditions for activity maps in commutative squares of c-maps.

\begin{lemma}
\label{l:cset-square-commutes}
Suppose \eqref{e:injective-pushout-square} is a square of c-maps that is a pushout in $\Set$. Then the following conditions are equivalent:
\begin{enumerate}
\item
    $\varepsilon_g=\varepsilon_i\circ  j$ and $\varepsilon_f=\varepsilon_j\circ i$,
\item
    $\varepsilon_i(x)=
    \begin{cases}
        \varepsilon_g(j^{-1}(x)) & \text{if } x\in j(V),\\
        * & \text{if } x\in i(U)
    \end{cases}
    $\ \
    and\ \
    $\varepsilon_j(x)=
    \begin{cases}
        \varepsilon_f(i^{-1}(x)) & \text{if } x\in i(U),\\
        * & \text{if } x\in j(V),
    \end{cases}$   
\item
the square \eqref{e:injective-pushout-square} commutes in $\CSet$.
\end{enumerate}
\end{lemma}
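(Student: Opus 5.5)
Note first that condition (2) does not clash on overlaps. On $X_W=i(U)\cap j(V)$ we have $j^{-1}(x)\in g(W)$, so $\varepsilon_g(j^{-1}(x))=\ast$. Dually, $\varepsilon_f(i^{-1}(x))=\ast$ there, since $\varepsilon_g^{-1}(\ast)=g(W)$ and $\varepsilon_f^{-1}(\ast)=f(W)$. The proof plan is to reduce everything to Lemma~\ref{l:cset-properties}(\ref{i:comp}), the partition of Lemma~\ref{l:pushout-square-desc}, and the requirement $\varepsilon_i^{-1}(\ast)=i(U)$, $\varepsilon_j^{-1}(\ast)=j(V)$ built into c-maps. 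Throughout, the underlying square commutes in $\Set$, because it is assumed to be a pushout there.

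For (1)$\Leftrightarrow$(2): the second clauses of (2), namely $\varepsilon_i=\ast$ on $i(U)$ and $\varepsilon_j=\ast$ on $j(V)$, hold automatically by the definition of c-maps. So (2) amounts to $\varepsilon_i(j(v))=\varepsilon_g(v)$ for all $v\in V$, together with the symmetric equation for $\varepsilon_j$. This is just (1) written pointwise. Since $j$ is injective (Lemma~\ref{l:pushout-square-desc}(1)), the formula in (2) is equivalent to $\varepsilon_g=\varepsilon_i\circ j$, and likewise for $\varepsilon_j$. Note that $X=i(U)\cup j(V)$, so (2) determines $\varepsilon_i$ and $\varepsilon_j$ completely.

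For (1)$\Leftrightarrow$(3): the square commutes in $\CSet$ iff $if=jg$ as injections, which holds, and $\varepsilon_{if}=\varepsilon_{jg}$. I will compute both composites with Lemma~\ref{l:cset-properties}(\ref{i:comp}), using the partition $X=X_U\sqcup X_W\sqcup X_V$.
\begin{itemize}
\item On $X_W=if(W)$, both composites take value $\ast$.
\item On $X_U=i(U\setminus f(W))$, we get $\varepsilon_{if}(i(u))=\varepsilon_f(u)$ by \eqref{e:ConcsetComp1}. Since $X_U\subseteq X\setminus j(V)$, \eqref{e:ConcsetComp2} gives $\varepsilon_{jg}(i(u))=\varepsilon_j(i(u))$.
\item On $X_V$, symmetrically, $\varepsilon_{jg}(j(v))=\varepsilon_g(v)$ and $\varepsilon_{if}(j(v))=\varepsilon_i(j(v))$.
\end{itemize}
Hence $\varepsilon_{if}=\varepsilon_{jg}$ iff $\varepsilon_j\circ i=\varepsilon_f$ on $U\setminus f(W)$ and $\varepsilon_i\circ j=\varepsilon_g$ on $V\setminus g(W)$. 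On $f(W)$ and $g(W)$, both sides of these equations are automatically $\ast$, because $i(f(W))=X_W\subseteq j(V)$, $j(g(W))=X_W\subseteq i(U)$, and $\varepsilon_f^{-1}(\ast)=f(W)$, $\varepsilon_g^{-1}(\ast)=g(W)$. So the conditions on the complements are equivalent to (1).

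The only delicate step is the bookkeeping on $X_W$, that is, checking that the overlap regions contribute trivially. This follows from the description of the partition pieces in Lemma~\ref{l:pushout-square-desc}(3); everything else is unwinding definitions.
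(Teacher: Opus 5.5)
Your proof is correct and takes essentially the paper's approach. You get (1)$\Leftrightarrow$(2) from injectivity and the c-map condition $\varepsilon^{-1}(\ast)=\text{image}$. You compare $\varepsilon_{if}$ and $\varepsilon_{jg}$ on the partition $X=X_U\sqcup X_W\sqcup X_V$ of Lemma~\ref{l:pushout-square-desc} using \eqref{e:ConcsetComp1}--\eqref{e:ConcsetComp2}; the only cosmetic difference is that the paper matches this computation directly against (2) rather than (1).
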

\begin{proof}
The equivalence between (1) and (2) is immediate from the definition of c-maps and injectivity of $i$ and $j$. 
For that between (2) and (3) we use the partition in Lemma \ref{l:pushout-square-desc} to obtain
 \begin{equation*}
     \varepsilon_{if}(x)=
     \begin{cases}
        \varepsilon_f(i^{-1}(x)) & \text{for } x\in X_U,\\
        \rng & \text{for } x\in X_W,\\
        \varepsilon_i(x) & \text{for } x\in X_V,
     \end{cases}
     \qquad
     \varepsilon_{jg}(x)=
     \begin{cases}
        \varepsilon_j(x) & \text{for $x\in X_U$},\\
        \rng & \text{for $x\in X_W$},\\
        \varepsilon_g(j^{-1}(x)) & \text{for $x\in X_V$}.
     \end{cases}
 \end{equation*}
 Since $i(U)=X_U\cup X_W$ and $j(V)=X_V \cup X_W$, we have $\varepsilon_{if}=\varepsilon_{jg}$ if and only if (2) holds.
\end{proof}

\begin{lemma}
\label{l:dsquares-are-injective-pushouts}
The forgetful functor $\cSet\to\Set$ defines a bijection between d-squares in $\cSet$ and injective pushout squares in $\Set$.
\end{lemma}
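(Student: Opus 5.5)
The plan is to show two things. First, the forgetful functor sends every d-square in $\cSet$ to an injective pushout square in $\Set$. Second, every injective pushout square in $\Set$ carries exactly one choice of activity maps that turns it into a d-square. We fix the notation of \eqref{e:injective-pushout-square} and read the d-square $\lozenge\to\cSet$ accordingly: $f\colon W\to U$ is a formorphism, $g\colon W\to V$ is a backmorphism, $i\colon U\to X$ is a backmorphism and $j\colon V\to X$ is a formorphism.

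For the second part (uniqueness and existence of the lift), start from an injective pushout square in $\Set$. By Lemma~\ref{l:pushout-square-desc}, $i$ and $j$ are injective, so all four maps are injections.

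\textbf{Activities of $f$ and $g$.} By Lemma~\ref{l:cset-properties}(\ref{i:unique-ext}), the injection $f$ has a unique formorphism structure: $\varepsilon_f$ is $0$ off $f(W)$. Likewise $g$ has a unique backmorphism structure: $\varepsilon_g$ is $1$ off $g(W)$.

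\textbf{Activities of $i$ and $j$.} Commutativity in $\cSet$ forces $\varepsilon_i$ and $\varepsilon_j$ to be given by the formulas of Lemma~\ref{l:cset-square-commutes}(2). These formulas are unambiguous because $X_W$ is the only overlap of $i(U)$ and $j(V)$, and on $X_W$ both formulas give $*$.

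\textbf{These are legitimate activity maps.} We have $\varepsilon_i^{-1}(*)=i(U)$ because $\varepsilon_g$ takes the value $*$ exactly on $g(W)$, and $j(g(W))=X_W\subseteq i(U)$. The same argument gives $\varepsilon_j^{-1}(*)=j(V)$.

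\textbf{Directions.} Off $i(U)$, $\varepsilon_i$ takes values of $\varepsilon_g$ outside $g(W)$, namely $1$; so $i$ is a backmorphism. Similarly $\varepsilon_j$ takes the values $0$ off $j(V)$, so $j$ is a formorphism.

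\textbf{Commutativity.} By Lemma~\ref{l:cset-square-commutes}, (2) implies (3), so the square commutes in $\cSet$. Uniqueness of the lift follows from the same lemma, (3) implies (2), together with the uniqueness of the formorphism and backmorphism structures on $f$ and $g$. This makes the assignment injective on d-squares.

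For the first part, take a d-square in $\cSet$. Its underlying square is a commutative square of injections, and we check the criterion of Lemma~\ref{l:inj-pushout-crit}.

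\textbf{Covering.} Let $x\in X$. Compute $\varepsilon_{if}(x)=\varepsilon_{jg}(x)$ in two ways using \eqref{e:ConcsetComposition}.
\begin{itemize}
\item If $x\notin i(U)$, then $\varepsilon_{if}(x)=\varepsilon_i(x)=1$, since $i$ is a backmorphism.
\item If also $x\notin j(V)$, then $\varepsilon_{jg}(x)=\varepsilon_j(x)=0$, since $j$ is a formorphism.
\end{itemize}
These contradict each other, so $X=i(U)\cup j(V)$.

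\textbf{Intersection.} Let $x=i(u)=j(v)$. Then $\varepsilon_{if}(x)=\varepsilon_f(u)$ if $u\notin f(W)$, and equals $*$ otherwise; this value lies in $\{0,*\}$. Likewise $\varepsilon_{jg}(x)=\varepsilon_g(v)$ if $v\notin g(W)$, and equals $*$ otherwise; this value lies in $\{1,*\}$. Equality forces both to be $*$, so $u\in f(W)$. Hence $i(U)\cap j(V)=if(W)$. Injectivity of $if$ then makes the induced map $W\to i(U)\cap j(V)$ a bijection.

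Lemma~\ref{l:inj-pushout-crit} now shows the square is an injective pushout. The main care needed is this intersection step, where the activity values force points of the overlap to come from $W$. The rest is bookkeeping with Lemmas~\ref{l:pushout-square-desc} and~\ref{l:cset-square-commutes}.
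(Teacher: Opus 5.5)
Your proof is correct and follows essentially the paper's argument. The forward direction is the same: covering comes from the $0$/$1$ clash outside $i(U)\cup j(V)$, and the overlap from $\{0,*\}\cap\{1,*\}=\{*\}$. For the lift, the paper gives all four injections their unique for-/backmorphism structures via Lemma~\ref{l:cset-properties}.\ref{i:unique-ext} and checks $\varepsilon_{if}=\varepsilon_{jg}$ on $X_U\sqcup X_W\sqcup X_V$; this yields the same maps you obtain through Lemma~\ref{l:cset-square-commutes}. Your orientation ($f,j$ forward, $g,i$ backward) is the mirror of the paper's, which is harmless by symmetry.
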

\begin{proof}
Suppose \eqref{e:injective-pushout-square} is a d-square of c-maps with formorphisms $i$ and $g$ and backmorphisms $j$ and $f$. For every $x\in X\setminus(i(U)\cup j(V))$ we have
    $
        0=\varepsilon_i(x)=\varepsilon_{if}(x)=\varepsilon_{jg}(x)=\varepsilon_j(x)=1
    $. Thus, $X=i(U)\cup j(V)$.
    Furthermore, for every $x\in i(U)\cap j(V)$ we have $0\neq \varepsilon_f(i^{-1}(x))=\varepsilon_{if}(x)=\varepsilon_{jg}(x)=\varepsilon_g(j^{-1}(x))\neq 1$, which implies that $x\in \varepsilon_{if}^{-1}(*)=if(W)$. 
    The square \eqref{e:injective-pushout-square} is thus an injective pushout square in $\Set$ by Lemma \ref{l:inj-pushout-crit}.

Suppose \eqref{e:injective-pushout-square} is an injective pushout square. We equip the injections $f$, $g$, $i$ and $j$ with activity maps as in Lemma \ref{l:cset-properties}.\ref{i:unique-ext}, so that $g$ and $i$ are formorphisms, and $f$ and $j$ backmorphisms.
A calculation as for Lemma \ref{l:cset-square-commutes} shows that 
    \[
     \varepsilon_{if}(x)=\varepsilon_{jg}(x)=
     \begin{cases}
        1 & \text{for } x\in X_U,\\
        \rng & \text{for } x\in X_W,\\
        0 & \text{for } x\in X_V,
     \end{cases}
    \]
     so that the resulting d-square commutes (using notation from Lemma \ref{l:pushout-square-desc}).
\end{proof}

Figure~\ref{fig:factorisation-square} shows a d-square in $\CSet$. 

\begin{figure}[t]
    \centering

\begin{tikzpicture}[x=.5cm, y=.4cm]
  \draw (0,0) ellipse (0.4 and 1.5);

  \fill[green!10]       (5,4) +(-4:0.8 and 3) arc[start angle=-4, end angle=176, x radius=0.8, y radius=3] -- cycle;
  \draw                (5,4) +(-4:0.8 and 3) arc[start angle=-4, end angle=176, x radius=0.8, y radius=3];
  \draw[fill=gray!30] (5,4) ellipse (0.4 and 1.5);

  \fill[cyan!10]       (5,-4) +(176:0.8 and 3) arc[start angle=176, end angle=356, x radius=0.8, y radius=3] -- cycle;
  \draw                (5,-4) +(176:0.8 and 3) arc[start angle=176, end angle=356, x radius=0.8, y radius=3];
  \draw[fill=gray!30] (5,-4) ellipse (0.4 and 1.5);

  \fill[green!10] (10,0) +(-4:0.8 and 3) arc[start angle=-4, end angle=176, x radius=0.8, y radius=3] -- cycle;
  \fill[cyan!10]  (10,0) +(176:0.8 and 3) arc[start angle=176, end angle=356, x radius=0.8, y radius=3] -- cycle;
  \draw (10,0) ellipse (0.8 and 3);
  \draw[fill=gray!30] (10,0) ellipse (0.4 and 1.5);

  \draw (5,4)   +(176:0.4 and 1.5) -- +(176:0.8 and 3)
        (5,4)   +(-4:0.4 and 1.5)  -- +(-4:0.8 and 3);

  \draw (5,-4)  +(176:0.4 and 1.5) -- +(176:0.8 and 3)
        (5,-4)  +(-4:0.4 and 1.5)  -- +(-4:0.8 and 3);

  \draw (10,0)  +(176:0.4 and 1.5) -- +(176:0.8 and 3)
        (10,0)  +(-4:0.4 and 1.5)  -- +(-4:0.8 and 3);

  \node (B) at (0,0) {};
  \node (C) at (5,4)  {$\ast$};
  \node (D) at (5,-4) {$\ast$};
  \node (E) at (10,0) {$\ast$};

  \node at (5,6.25)   {$0$};
  \node at (5,-6.25)  {$1$};
  \node at (10,2.25)  {$0$};
  \node at (10,-2.25) {$1$};

  \draw[->, shorten >=.5cm, shorten <=.3cm] (B) -- node[above] {$g$} node[below] {$+$} (C);
  \draw[->, shorten >=.5cm, shorten <=.3cm] (B) -- node[below] {$f$} node[above] {$-$} (D);
  \draw[->, shorten >=.5cm, shorten <=.5cm] (C) -- node[above] {$j$} node[below] {$-$} (E);
  \draw[->, shorten >=.5cm, shorten <=.5cm] (D) -- node[below] {$i$} node[above] {$+$} (E);
\end{tikzpicture}

    \caption{A d-square in $\CSet$ with values of the activity maps of $f$, $g$ and  $jg=if$. Intuitively, in such a d-square, the same events are active in the images of $f$ and $g$ (up to renaming), while the images of $f$ and $g$ contain no events that are no longer active and not yet active, respectively. In the rightmost set, all these events are combined consistently (up to renaming). If all injections in the diagram are inclusions, consistency means in particular that the rightmost set is the union of the upper and the lower set, while the leftmost set is the intersection of the upper and the lower set.}
    \label{fig:factorisation-square}    
\end{figure}

\begin{lemma}
    \label{l:dsquare-is-dpushout}
    Every d-square in $\cSet$ is a d-pushout.
\end{lemma}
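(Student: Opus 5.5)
Take a d-square \eqref{e:injective-pushout-square} in $\cSet$, with $g,i$ formorphisms and $f,j$ backmorphisms, and a cocone $U\xrightarrow{p}Y\xleftarrow{q}V$ of c-maps with $pf=qg$. We must produce a unique c-map $h\colon X\to Y$ with $hi=p$ and $hj=q$. Uniqueness is quick: by Lemma \ref{l:dsquares-are-injective-pushouts} the underlying square is an injective pushout in $\Set$, and $X=i(U)\cup j(V)$. Any c-map $h$ with $hi=p$ and $hj=q$ therefore has its underlying function fixed on $X$. Its activity map is fixed by \eqref{e:ConcsetComp2}: on $Y\setminus h(X)$ it must agree with $\varepsilon_{hi}=\varepsilon_p$.

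For existence, the underlying function $h$ is the map induced by the $\Set$-pushout. The main point is that $h$ is injective. Use the partition $X=X_U\sqcup X_W\sqcup X_V$ of Lemma \ref{l:pushout-square-desc}. Restricted to $i(U)$ and to $j(V)$, $h$ is injective because $p$ and $q$ are. Suppose $p(u)=q(v)$ with $i(u)\in X_U$ and $j(v)\in X_V$; we must show this is impossible. Apply \eqref{e:ConcsetComp1} to the equal composites $pf=qg$: then $\varepsilon_f=\varepsilon_{pf}\circ p$ and $\varepsilon_g=\varepsilon_{qg}\circ q$. Since $i(u)\notin X_W$, we have $u\notin f(W)$, so $\varepsilon_f(u)\neq\ast$, and $\varepsilon_f(u)=1$ because $f$ is a backmorphism. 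Likewise $\varepsilon_g(v)=0$ because $g$ is a formorphism. Evaluating $\varepsilon_{pf}=\varepsilon_{qg}$ at $p(u)=q(v)$ then gives $1=0$, a contradiction. Hence $h$ is injective.

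Next define $\varepsilon_h\colon Y\to\Omega$ by $\varepsilon_h(y)=\ast$ for $y\in h(X)$ and $\varepsilon_h(y)=\varepsilon_p(y)$ otherwise. We then check $hi=p$ and $hj=q$ as c-maps. The underlying injections agree by construction. For activity maps, formula \eqref{e:ConcsetComposition} gives $\varepsilon_{hi}(y)=\varepsilon_i(h^{-1}(y))$ on $h(X)$ and $\varepsilon_p(y)$ off $h(X)$.
\begin{itemize}
\item On $h(i(U))=p(U)$, both $\varepsilon_{hi}$ and $\varepsilon_p$ equal $\ast$.
\item On $h(X_V)$, we have $\varepsilon_{hi}(y)=\varepsilon_i(x)=0$, since $i$ is a formorphism and $x\notin i(U)$. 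For $\varepsilon_p$, write $y=q(v)$ with $v\notin g(W)$; then \eqref{e:ConcsetComp1} and $pf=qg$ give $\varepsilon_p(y)=\varepsilon_{pf}(y)=\varepsilon_{qg}(q(v))=\varepsilon_g(v)=0$.
\end{itemize}
For $hj=q$ the argument is symmetric; it additionally needs $\varepsilon_q=\varepsilon_p$ off $h(X)$. This holds by \eqref{e:ConcsetComp2}, since both restrict to $\varepsilon_{pf}=\varepsilon_{qg}$ on $Y\setminus h(X)$, which lies outside both $p(U)$ and $q(V)$.

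The main obstacle is the injectivity of $h$, since $\Set$-pushouts of injections need not map injectively into an arbitrary cocone. The activity bookkeeping resolves this: events in $X_U$ are forced to activity $1$, events in $X_V$ to activity $0$, and the shared value $\varepsilon_{pf}=\varepsilon_{qg}$ cannot take both. The remaining verifications are routine case checks over the partition of Lemma \ref{l:pushout-square-desc}.
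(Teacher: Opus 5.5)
Your proposal is correct and follows essentially the same route as the paper's proof: take the induced map of the underlying $\Set$-pushout, prove injectivity from the clash between activity $1$ on $X_U$ and activity $0$ on $X_V$ under $\varepsilon_{pf}=\varepsilon_{qg}$, set the activity map to $\ast$ on the image and to $\varepsilon_p$ elsewhere, and get uniqueness from \eqref{e:ConcsetComp2}. Your injectivity step, which evaluates $\varepsilon_{pf}=\varepsilon_{qg}$ at $p(u)=q(v)$, is stated more carefully than the paper's. The one tacit point is that $h(X_V)$ avoids $p(U)$; you need this when replacing $\varepsilon_p$ by $\varepsilon_{pf}$, and it follows from the injectivity of $h$.
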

\begin{proof}
    Consider the following d-square in $\cSet$:
    \[
    \begin{tikzcd}
    W\ar[r,"g", "+" swap]\ar[d, "f" swap, "-"] & V\ar[d,"j", "-" swap]\ar[ddr,dashed,"q", bend left=25] \\
    U\ar[r, "i" swap,"+"]\ar[drr,dashed,"p" swap, bend right=25] & X\ar[dr,dotted,"r"] \\
    & & M
\end{tikzcd} 
    \]
    Let $M$ be a c-set and let the c-maps $p\colon U\to M$ and $q\colon V\to M$ satisfy $pf=qg$. The underlying diagram in $\Set$ is a pushout by Lemma \ref{l:dsquares-are-injective-pushouts}, which provides a unique function $r\colon X\to M$ such that $ri=p$ and $rj=q$ in $\Set$. We must show that it extends uniquely to a c-map.

For injectivity of $r$ suppose $x\neq x'\in X$. There are three cases: 
\begin{itemize}
    \item If $x,x'\in i(U)$, then $r(x)\neq r(x')$ by injectivity of $i^{-1}|_{i(U)}$ and $p$. 
    \item For $x,x'\in j(V)$ there is a similar proof.
    \item If $x\in i(U)\setminus j(V)$ and $x'\in j(V)\setminus i(U)$, then $\varepsilon_j(x)=1$ and $\varepsilon_i(x')=0$. Hence 
    \[
    \varepsilon_{rj}(r(x))= \varepsilon_j(x)= 1 \neq 0 =\varepsilon_i(x') = \varepsilon_{rj}(r(x'))
    \]
by \eqref{e:ConcsetComp1} and thus $r(x)\neq r(x')$.
\end{itemize}
Next we define an activity map $\varepsilon_r\colon X\to \Omega$ by $\varepsilon_r(m)=\rng$ for $m\in r(X)$ and 
\[
    \varepsilon_r(m)=
    \varepsilon_p(m)=
    \varepsilon_{pf}(m)=
    \varepsilon_{qj}(m)=
    \varepsilon_q(m)
\]
for $m\notin r(X)$. All equalities in this definition follow by \eqref{e:ConcsetComp2}.
 Now we show that $p=ri$ using \ref{l:cset-properties}.\ref{i:comp}.
First, $\varepsilon_r|_{M\setminus r(X)}=\varepsilon_p|_{M\setminus r(X)}$ follows from the definition of $\varepsilon_r$. It remains to prove that $\varepsilon_i(x)=\varepsilon_p(r(x))$ for $x\in X$. If $x\in i(U)$, then $\varepsilon_i(x)=*=\varepsilon_p(r(x))$ because $r(x)\in p(X)$. If $x\in X\setminus i(X)$, then
\[
    \varepsilon_i(x)=
    \varepsilon_{if}(x)=
    \varepsilon_{jg}(x)=
    \varepsilon_g(j^{-1}(x))=
    \varepsilon_{qg}(q(j^{-1}(x)))=
    \varepsilon_{qg}(r(x)))=
    \varepsilon_{pf}(r(x)))=
    \varepsilon_{p}(r(x))).
\]
Finally, $\varepsilon_r$ is uniquely determined by $\varepsilon_p$ owing to \eqref{e:ConcsetComp2}.
\end{proof}

\begin{proposition}
\label{p:cset-is-dcocomplete}
The forgetful functor $\Forget:\cSet\to\Set$ creates d-pushouts. The category $\cSet$ is therefore d-cocomplete.
\end{proposition}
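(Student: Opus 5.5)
We have to solve the lifting problem \eqref{e:pushout-creation} for $F=\Forget$. Let $Y$ be a d-span $U\xleftarrow{f}W\xrightarrow{g}V$ in $\cSet$, with $f$ a backmorphism and $g$ a formorphism. Let $X$ be a d-pushout in $\Set$ extending $\Forget Y$, with structural maps $i\colon U\to X$ and $j\colon V\to X$. Since $\Set$ carries the full d-structure, being a d-pushout there just means being a pushout square. Because $f$ and $g$ are injections, this square is an injective pushout square in the sense of Lemma~\ref{l:pushout-square-desc}.

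\textbf{Existence.} We must lift $i$ and $j$ to c-maps so that $i$ is a formorphism, $j$ is a backmorphism, and the square commutes in $\cSet$. Lemma~\ref{l:cset-square-commutes}(2) already prescribes the activity maps once commutativity is demanded:
\begin{itemize}
\item $\varepsilon_i$ is $\ast$ on $i(U)$ and equals $\varepsilon_g\circ j^{-1}$ on $X_V$;
\item $\varepsilon_j$ is $\ast$ on $j(V)$ and equals $\varepsilon_f\circ i^{-1}$ on $X_U$.
\end{itemize}
By the partition in Lemma~\ref{l:pushout-square-desc}, $X=i(U)\cup j(V)$, so both maps are defined everywhere. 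The condition $\varepsilon_i^{-1}(\ast)=i(U)$ needs $\varepsilon_g\ne\ast$ on $V\setminus g(W)$, which holds by the definition of a c-map; the same argument works for $\varepsilon_j$.

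Next we check the d-structure. Since $g$ is a formorphism, $\varepsilon_g$ takes values in $\{0,\ast\}$, so $\varepsilon_i$ never takes the value $1$ and $i$ is a formorphism. Dually, $j$ is a backmorphism. (In fact, since $f$ is a backmorphism, it takes value $1$ off its image, so $\varepsilon_j=1$ on $X_U$. This agrees with the unique extension of Lemma~\ref{l:cset-properties}.\ref{i:unique-ext}.) The resulting square commutes by Lemma~\ref{l:cset-square-commutes}, so it is a d-square. Lemma~\ref{l:dsquare-is-dpushout} then shows it is a d-pushout.

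\textbf{Uniqueness.} Any lift $Y'$ has the prescribed underlying functions, since $\Forget$ is faithful on injections. Its activity maps must make the square commute in $\cSet$, so by Lemma~\ref{l:cset-square-commutes} ((3)$\Rightarrow$(2)) they are exactly the ones defined above. Hence the lift is unique.

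\textbf{Consequence.} $\Set$ has all pushouts, so every d-span in $\cSet$ has a pushout in $\Set$ of its image. Creation then yields a d-pushout in $\cSet$, and therefore $\cSet$ is d-cocomplete.

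The main obstacle is modest. It is making sure the activity maps forced by commutativity are compatible with the for/back constraints and with the condition that $\varepsilon^{-1}(\ast)$ equals the image. Both checks reduce to the partition $X=X_U\sqcup X_W\sqcup X_V$ of Lemma~\ref{l:pushout-square-desc}.
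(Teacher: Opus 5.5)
Your argument is correct and is essentially the paper's proof. The paper takes the unique lifting d-square directly from Lemma \ref{l:dsquares-are-injective-pushouts}, where the activity maps are forced by the for/back requirement via Lemma \ref{l:cset-properties}.\ref{i:unique-ext}, and then applies Lemma \ref{l:dsquare-is-dpushout}; you instead force the activity maps through commutativity using Lemma \ref{l:cset-square-commutes}, which is the same computation. One wording fix: $\Forget$ is not faithful on $\cSet$ (the two c-maps $\emptyset\to\{1\}$ share an underlying function), so the underlying functions of any lift are fixed not by faithfulness but because a lift must satisfy $\Forget\circ Y'=X$ and restrict to the given span.
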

\begin{proof}
We adopt notation from \eqref{e:pushout-creation}.
Let $X\colon\Lambda\to \cSet$ be a d-span and let $Y\colon\lozenge\to \Set$ by a pushout extending $\Forget\circ X$. Since $Y$ is an injective pushout square, it lifts uniquely to a d-square $Y'\colon \lozenge\to \cSet$ by Lemma \ref{l:dsquares-are-injective-pushouts}, which is a d-pushout by Lemma \ref{l:dsquare-is-dpushout}. Moreover, $Y'|_{\Lambda}=Y$ by Lemma \ref{l:cset-properties}.\ref{i:unique-ext}. Thus $\Forget$ creates d-pushouts. The second claim follows from completeness of $\Set$.
\end{proof}

\begin{proposition}
\label{p:cset-pullback-creation}
    The forgetful functor $\cSet\to\Set$ creates pullbacks.
\end{proposition}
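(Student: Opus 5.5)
Fix a cospan of c-maps $V\xrightarrow{f}U\xleftarrow{g}W$ in $\cSet$ and a pullback $V\xleftarrow{p}P\xrightarrow{q}W$ of the underlying cospan in $\Set$. We must show that $p,q$ carry unique activity maps making the square commute in $\cSet$, and that the resulting square is a pullback in $\cSet$. Since $f,g$ are injective, we take $P=f(V)\cap g(W)$ up to the canonical identification. Then $p$ and $q$ are injective, because pullbacks of monos are monos. Commutativity $fp=gq$ in $\cSet$ forces, by \eqref{e:ConcsetComp1}, that $\varepsilon_p=\varepsilon_{fp}\circ f$ and $\varepsilon_q=\varepsilon_{gq}\circ g$. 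By \eqref{e:ConcsetComp2}, $\varepsilon_{fp}$ must agree with $\varepsilon_f$ off $f(V)$ and with $\varepsilon_g$ off $g(W)$. The only candidate for the common composite $\varepsilon:=\varepsilon_{fp}=\varepsilon_{gq}$ is therefore
\[
\varepsilon(u)=\begin{cases}\ast & u\in f(V)\cap g(W),\\ \varepsilon_g(u) & u\in f(V)\setminus g(W),\\ \varepsilon_f(u) & u\notin f(V).\end{cases}
\]
We then set $\varepsilon_p=\varepsilon\circ f$ and $\varepsilon_q=\varepsilon\circ g$. A short check gives $\varepsilon_p^{-1}(\ast)=p(P)$: for $v\in V$, $f(v)$ has value $\ast$ iff $f(v)\in g(W)$. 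This also shows that $fp$ and $gq$ have the same activity map. This settles existence and uniqueness of the lift.

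\textbf{Well-definedness.} The point to watch is consistency where the two prescriptions overlap, namely $u\notin f(V)\cup g(W)$, where $\varepsilon_f(u)$ and $\varepsilon_g(u)$ may differ. I first treat this case as in the computation for commuting squares: for $u\notin f(V)\cup g(W)$, \eqref{e:ConcsetComp2} applied to $fp$ and $gq$ gives $\varepsilon_{fp}(u)=\varepsilon_f(u)$ and $\varepsilon_{gq}(u)=\varepsilon_g(u)$. So a commuting square over the cospan exists only if $\varepsilon_f=\varepsilon_g$ outside $f(V)\cup g(W)$, and when this fails no lift exists. I expect this to be the main obstacle. The resolution I would try is to read ``creates'' as in \eqref{e:pushout-creation}: we only consider set pullbacks that have a lift. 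If no commutative square over the cospan exists at all, there is no cone, and I would then check whether the empty-cone situation is compatible with the paper's notion. Otherwise, any existing cone $fa=gb$ forces the agreement condition by the same computation, so the lift exists whenever there is any cone. I would present the proof under this reading: the condition is necessary for any cone and sufficient for the lift.

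\textbf{Universal property.} Let $a\colon T\to V$ and $b\colon T\to W$ be c-maps with $fa=gb$. The set pullback gives a unique function $r\colon T\to P$ with $pr=a$ and $qr=b$, and $r$ is injective because $a$ is. Uniqueness of the c-map $r$ follows from Lemma \ref{l:cset-properties}.\ref{i:monic}, since $p$ is monic. For the activity map, \eqref{e:ConcsetComp1} applied to $pr=a$ forces $\varepsilon_r=\varepsilon_a\circ p$. We check $\varepsilon_r^{-1}(\ast)=r(T)$ and then verify $pr=a$ via Lemma \ref{l:cset-properties}.\ref{i:comp}. This requires $\varepsilon_a|_{V\setminus p(P)}=\varepsilon_p|_{V\setminus p(P)}$. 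To get it, apply \eqref{e:ConcsetComp2} to $fa=gb$: for $v\notin p(P)$, $f(v)\notin g(W)$, so $\varepsilon_a(v)=\varepsilon_{fa}(f(v))=\varepsilon_{gb}(f(v))=\varepsilon_g(f(v))=\varepsilon_p(v)$. The equation $qr=b$ follows symmetrically.
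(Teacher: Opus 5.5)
The second half of your universal-property paragraph is the paper's proof. The paper starts from a square of c-maps that already commutes in $\cSet$ and is a pullback in $\Set$. It gives the set-theoretic mediating map the activity map obtained by precomposing the cone's activity map with the pullback leg (your $\varepsilon_r=\varepsilon_a\circ p$). It checks agreement off the image by the same chain through the commuting square and \eqref{e:ConcsetComp2}, and concludes with Lemma~\ref{l:cset-properties}.\ref{i:comp}; uniqueness comes from \eqref{e:ConcsetComp1}, which is equivalent to your monicity argument. Your ``symmetrically'' for $qr=b$ hides one extra identity, $\varepsilon_a\circ p=\varepsilon_b\circ q$. It holds because $\varepsilon_a\circ p=\varepsilon_{fa}\circ f\circ p=\varepsilon_{gb}\circ g\circ q=\varepsilon_b\circ q$, and the paper writes this out.

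The difference is your first part, which the paper does not contain. The paper never lifts a $\Set$-pullback of a cospan of c-maps, nor proves such a lift unique. What it actually proves is that the forgetful functor \emph{reflects} pullbacks. That is all that is used later, in Lemma~\ref{l:dpushouts-are-pullbacks}, where the square is a d-pushout and so commutes in $\cSet$ from the start.

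Your obstruction is real, and it refutes the statement read literally. Take $f,g\colon\emptyset\to\{u\}$ with $\varepsilon_f(u)=0$ and $\varepsilon_g(u)=1$; this is even a d-cospan. Its $\Set$-pullback is $\emptyset$, the only c-map $\emptyset\to\emptyset$ is the identity, and $f\neq g$. So there is no lift, and in fact no cone at all. In particular $\cSet$ does not have all pullbacks, contrary to the opening of Section~\ref{s:concsets}.

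Your proposed rescue does not work, however. Creation in the sense of \eqref{e:pushout-creation}, transposed to pullbacks, still requires every lifting problem to have a solution, so it does not let you discard non-liftable pullbacks. Drop the reinterpretation and state what you have proved, which is the correct form of the proposition. For a cospan $V\xrightarrow{f}U\xleftarrow{g}W$ of c-maps, the $\Set$-pullback lifts iff $\varepsilon_f=\varepsilon_g$ on $U\setminus(f(V)\cup g(W))$, iff the cospan admits some cone in $\cSet$. When the lift exists it is unique and is a pullback in $\cSet$. Independently of this, the forgetful functor reflects pullbacks.
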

\begin{proof}
Assume that the following square in $\cSet$ is a pullback in $\Set$:
\[
\begin{tikzcd}
    M
        \ar[drr,bend left=25,"t",dashed]
        \ar[ddr,bend right=25,"s" swap ,dashed]
        \ar[dr,"r",dotted]    
    \\
    & W
        \ar[r,"g"]
        \ar[d, "f" swap]
    & V
        \ar[d,"j"]
    \\
    & U
        \ar[r, "i" swap]
    & X
\end{tikzcd}   
\]
Let $M$ be a c-set, let $s:M\to U$ and  $t:M\to V$ be c-maps satisfying $is=jt$. Since the square is a pullback in $\Set$, there exists a unique function $r:M\to W$ such that $s=fr$ and $t=gr$ in $\Set$. 

Define $\varepsilon_r=\varepsilon_s\circ f$. For $u\in U\setminus f(W)$ we have 
\[
    \varepsilon_f(u)=
    \varepsilon_{if}(i(u))=
    \varepsilon_{jg}(i(u))=
    \varepsilon_j(i(u))=
    \varepsilon_{jt}(i(u))=
    \varepsilon_{is}(i(u))=
    \varepsilon_s(u)
\]
since $i(u)\notin j(V)$.
Using Lemma \ref{l:cset-properties}.\ref{i:comp}, this shows that $s=fr$ in $\cSet$. Moreover, uniqueness is enforced by \eqref{e:ConcsetComp1}.
Since $\varepsilon_r=\varepsilon_s\circ f=\varepsilon_{is}\circ if =\varepsilon_{jt}\circ jg =\varepsilon_t\circ g$, a similar argument shows that $t=gr$ in $\cSet$.
\end{proof}

\begin{lemma}
\label{l:dpushouts-are-pullbacks}
Every d-pushout in $\cSet$ is a pullback.
\end{lemma}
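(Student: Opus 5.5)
The plan is to pass to $\Set$ and then lift back to $\cSet$ using the creation result for pullbacks. Let \eqref{e:injective-pushout-square} be a d-pushout in $\cSet$, with $g,i$ formorphisms and $f,j$ backmorphisms. It is in particular a d-square, so by Lemma \ref{l:dsquares-are-injective-pushouts} its image under $\Forget\colon\cSet\to\Set$ is an injective pushout square in $\Set$. Lemma \ref{l:pushout-square-desc}(2) then tells us that this underlying square is a pullback in $\Set$.

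Next I invoke Proposition \ref{p:cset-pullback-creation}: $\Forget$ creates pullbacks. What I actually need is the reflection part of creation, namely that a commutative square in $\cSet$ whose image in $\Set$ is a pullback is itself a pullback in $\cSet$. That is exactly what the proof of Proposition \ref{p:cset-pullback-creation} establishes. Its proof starts from a square in $\cSet$ that is a pullback in $\Set$. Given a cone $s\colon M\to U$, $t\colon M\to V$ with $is=jt$, it lifts the unique mediating function $r$ to a c-map, uniquely, by setting $\varepsilon_r=\varepsilon_s\circ f$. The square must commute in $\cSet$, which holds here because we start from a d-square. So the d-pushout is a pullback in $\cSet$.

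The only point to check with care is that the argument in Proposition \ref{p:cset-pullback-creation} needs nothing beyond commutativity in $\cSet$ and the pullback property in $\Set$. The step $\varepsilon_f(u)=\varepsilon_{if}(i(u))=\varepsilon_{jg}(i(u))=\varepsilon_j(i(u))$ for $u\notin f(W)$ uses $i(u)\notin j(V)$. This follows from the pullback property in $\Set$ together with injectivity of $i$: if $i(u)=j(v)$, then $u$ lies in $f(W)$. So I expect no real obstacle, and the lemma is a short composition of Lemma \ref{l:dsquares-are-injective-pushouts}, Lemma \ref{l:pushout-square-desc} and Proposition \ref{p:cset-pullback-creation}.
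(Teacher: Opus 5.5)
Your proposal is correct and is the paper's proof: the underlying square in $\Set$ is an injective pushout square (Lemma~\ref{l:dsquares-are-injective-pushouts}), hence a pullback by Lemma~\ref{l:pushout-square-desc}, and Proposition~\ref{p:cset-pullback-creation} lifts this to a pullback in $\cSet$. Your extra check is sound: the argument of that proposition uses only commutativity in $\cSet$ and the pullback property in $\Set$, including $i(u)\notin j(V)$ for $u\notin f(W)$. It justifies the reflection step that the paper uses implicitly.
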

\begin{proof}
Let $X\colon \lozenge\to \cSet$ be a d-pushout. The composition $\Forget^{\cSet}_\Set\circ X$ is an injective pushout square, and hence a pullback in $\Set$ by Lemma \ref{l:pushout-square-desc}. Thus $X$ is a pullback by Proposition \ref{p:cset-pullback-creation}.
\end{proof}

\begin{remark}
The span $\{1\}\forleftarrow{} \emptyset\forrightarrow{} \{1\}$ does not have a colimit in $\CSet$. Hence $\CSet$ does not have all pushouts. Furthermore, there is no initial object in $\cSet$. Hence arbitrary limits do not exist in this category.
\end{remark}

\section{Concurrent posets}
\label{s:c-posets}

In this section, we first introduce concurrent posets (c-posets) that generalise concurrent sets. They are meant to represent sets of events together with their temporal precedences. Then we construct directed pushouts of c-posets and show that they are preserved by the restricted Yoneda functor to the category of presheaves over c-sets. This result is used in the next section to verify that c-posets form a directed cocompletion of c-sets.

We now equip the activity set $\Omega$ with the partial order $\le_\Omega$ induced by
$1 <_\Omega \ast <_\Omega 0$.

\begin{definition}
A \emph{c-poset} $(P,<)$ consists of a c-set $P$ and a strict partial order $<$ on $P$ (an irreflexive transitive relation) called \emph{precedence}.
\end{definition}

We write $\parallel$ for the incomparability relation with respect to $\le $,
that is, ${\parallel} =({\not<} \cap {\not>} \cap {\neq})$. Elements $p,p'\in P$ such that $p\parallel p'$ are called \emph{parallel}.

By definition, c-posets are finite posets. Yet the entire development in this section does not require finiteness.

\begin{definition} \label{d:PosetMap}~
 A \emph{c-poset map} between c-posets $Q$ and $P$ is a c-map $f\colon Q\to P$
 such that
 \begin{enumerate}
\item its injection $f\colon Q\to P$ reflects precedence ($f(q)<_Pf(q')\implies q<_Q q'$),
\item its activity map $\varepsilon_f\colon P\to \Omega$ is weakly order-preserving ($p<_Pp'\implies \varepsilon_f(p)\leq_\Omega \varepsilon_f(p')$).
 \end{enumerate}
 Composition of c-poset maps is determined by the underlying c-maps.
\end{definition}

See Figure \ref{f:cposet-map} for an example of a c-poset map. 

\begin{figure}
    \centering
\begin{tikzpicture}[x=1.2cm, y=.8cm]
  \draw                (0,0) ellipse (0.5 and 1.5);

  \fill[gray!30, shift={(5,0)}] (-10:0.5 and 1.5) arc (-10:170:0.5 and 1.5) -- cycle;
  \fill[gray!30, shift={(5,0)}]  (170:0.5 and 1.5) arc (170:350:0.5 and 1.5) -- cycle;

  \fill[cyan!10, shift={(5,0)}] (-10:1 and 3) arc (-10:170:1 and 3) -- (170:0.5 and 1.5) arc (170:-10:0.5 and 1.5) -- cycle;
  \fill[green!10, shift={(5,0)}]  (170:1 and 3) arc (170:350:1 and 3) -- (350:0.5 and 1.5) arc (350:170:0.5 and 1.5) -- cycle;

  \draw (5,0) ellipse (1 and 3);
  \draw (5,0) ellipse (0.5 and 1.5);

  \draw[dashed] (0.1,-1.5) -- (4.9,-1.5)
                (0.1,1.5)  -- (4.9,1.5);

  \draw (5,0) +(170:0.5 and 1.5) -- +(170:1 and 3);
  \draw (5,0) +(-10:0.5 and 1.5) -- +(-10:1 and 3);

  \node (W) at (0,-4) {$Q$};
  \node (V) at (5,-4) {$P$};
  \draw[->] (W) -- node[above] {$f$} (V);

  \node (a0) at (0,-1)    {$\circ$};
  \node (b0) at (0.3,0.2) {$\circ$};
  \node (c0) at (-0.3,0.7){$\circ$};

  \draw[->, orange, thick, shorten >=-3pt, shorten <=-3pt] (a0) -- (b0);
  \draw[->, orange, thick, shorten >=-3pt, shorten <=-3pt] (a0) -- (c0);
  \draw[->, orange, thick, shorten >=-3pt, shorten <=-3pt] (b0) -- (c0);

  \node (a) at (5,-1)    {$\ast$};
  \node (b) at (5.3,0.2) {$\ast$};
  \node (c) at (4.7,0.7) {$\ast$};
  \node (d) at (5.4,-2)  {$1$};
  \node (e) at (5.8,0.5)  {$0$};
  \node (f) at (5.5,1.6)  {$0$};
  \node (g) at (4.3,-1.2) {$1$};
  \node (h) at (4.5,2.2)  {$0$};

  \draw[->, orange, thick, shorten >=-3pt, shorten <=-3pt] (a) -- (b);
  \draw[->, orange, thick, shorten >=-3pt, shorten <=-3pt] (a) -- (c);

  \draw[->, teal, thick, shorten >=-3pt, shorten <=-3pt] (d) -- (b);
  \draw[->, teal, thick, shorten >=-3pt, shorten <=-3pt] (d) -- (e);
  \draw[->, teal, thick, shorten >=-3pt, shorten <=-3pt] (b) -- (f);
  \draw[->, teal, thick, shorten >=-3pt, shorten <=-3pt] (c) -- (h);
  \draw[->, teal, thick, shorten >=-3pt, shorten <=-3pt] (g) -- (c);
\end{tikzpicture}
    \caption{A c-poset map $f\colon Q\to P$. As the precedence is order-reflecting, there cannot be fewer orange arrows on the left than on the right. As the activity map is weakly order-preserving, there cannot be any teal arrows from $0$ to $\ast$ or $1$ or from $\ast$ to $1$.}
    \label{f:cposet-map}
\end{figure}

\begin{lemma}
\label{l:cposet-maps-closed-under-composition}
    c-poset maps are closed under composition.
\end{lemma}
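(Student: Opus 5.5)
Take c-poset maps $g\colon R\to Q$ and $f\colon Q\to P$. Their composite $fg$ is already a c-map, so two things remain to be shown: the underlying injection $fg$ reflects precedence, and the activity map $\varepsilon_{fg}\colon P\to\Omega$ is weakly order-preserving. Identities are c-poset maps trivially: they reflect precedence, and their activity maps are constant at $\ast$.

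Reflection of precedence is immediate. If $fg(r)<_P fg(r')$, then reflection by $f$ gives $g(r)<_Q g(r')$, and reflection by $g$ then gives $r<_R r'$.

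For monotonicity, fix $p<_P p'$. I would split into four cases according to whether $p$ and $p'$ lie in $f(Q)$, and use the description of $\varepsilon_{fg}$ from \eqref{e:ConcsetComposition}, equivalently Lemma~\ref{l:cset-properties}.\ref{i:comp}.

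\begin{enumerate}
\item Both $p,p'\notin f(Q)$. Then $\varepsilon_{fg}$ agrees with $\varepsilon_f$ at both points by \eqref{e:ConcsetComp2}, and monotonicity of $\varepsilon_f$ gives the claim.
\item Both $p=f(q)$ and $p'=f(q')$. Since $f$ reflects precedence, $q<_Q q'$. Then $\varepsilon_{fg}(p)=\varepsilon_g(q)\le_\Omega\varepsilon_g(q')=\varepsilon_{fg}(p')$ by \eqref{e:ConcsetComp1} and monotonicity of $\varepsilon_g$.
\item $p=f(q)$ and $p'\notin f(Q)$. Here $\varepsilon_f(p)=\ast$ and $\varepsilon_f(p')=\varepsilon_{fg}(p')$, so monotonicity of $\varepsilon_f$ forces $\varepsilon_{fg}(p')\in\{\ast,0\}$. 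But $\varepsilon_{fg}(p')=\varepsilon_f(p')\neq\ast$ because $p'\notin f(Q)$, hence $\varepsilon_{fg}(p')=0$. Since $0$ is the top element of $\Omega$, the inequality $\varepsilon_{fg}(p)\le_\Omega 0$ holds whatever $\varepsilon_{fg}(p)$ is.
\item $p\notin f(Q)$ and $p'=f(q')$. This is symmetric to the previous case: monotonicity of $\varepsilon_f$ forces $\varepsilon_{fg}(p)=\varepsilon_f(p)=1$, the bottom element of $\Omega$.
\end{enumerate}

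The only step needing care is the pair of mixed cases. The key observation there is that the $\ast$-value taken by $\varepsilon_f$ on $f(Q)$ pins the outside point to an extreme value of $\Omega$, namely $0$ or $1$, which dominates or is dominated by anything $\varepsilon_g$ can produce on the other point. With the four cases done, $fg$ is a c-poset map, which proves the lemma.
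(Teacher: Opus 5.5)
Your proof is correct and is essentially the paper's argument: precedence reflection composes, and monotonicity of $\varepsilon_{fg}$ follows from a case analysis on whether $p,p'$ lie in $f(Q)$, using \eqref{e:ConcsetComp1}, \eqref{e:ConcsetComp2}, order reflection by $f$, monotonicity of $\varepsilon_f$ and $\varepsilon_g$, and the fact that $1$ and $0$ are the bottom and top of $\Omega$. The only difference is bookkeeping: the paper uses three cases. It handles ``$p\notin f(Q)$'' in one step: either $\varepsilon_f(p)=1$ is the bottom, or $\varepsilon_f(p)=0$ forces $\varepsilon_{fg}(p')=0$. This single case absorbs your cases 1 and 4, and the case $p'\notin f(Q)$ is symmetric.
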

\begin{proof}
Order-reflecting injections are closed under composition. For c-poset maps
  $g\colon R\to Q$ and $f\colon Q\to P$ it remains to check that $\varepsilon_{fg}$ is weakly order-preserving. Suppose $p<_P p'$. There are three cases:
  \begin{itemize}
  \item If $p,p'\in f(Q)$, then $f(q) = p <_P p' = f(q')$ for some $q,q'\in Q$ and thus $q<_Q q'$ by order-reflection. Then $\varepsilon_{fg}(f(q))= \varepsilon_g(q)\le_\Omega \varepsilon_g(q')=\varepsilon_{fg}(f(q')$ using \eqref{e:ConcsetComp1} and weak order-preservation of $\varepsilon_g$. 
  \item If $p\in P\setminus f(Q)$, then $\varepsilon_{fg}(p)=\varepsilon_f(p)\neq *$ by \eqref{e:ConcsetComp2} and the definition of c-maps. For $\varepsilon_f(p)=1$ there is nothing to
  show. Otherwise, $0=\varepsilon_f(p)\leq_\Omega\varepsilon_{f}(p')=0$ and thus $\varepsilon_{fg}(p')=0$ by \eqref{e:ConcsetComp2}.
  \item The case of $p'\in P\setminus f(Q)$ is symmetric.\qedhere
  \end{itemize}
\end{proof}

An injection $f:(Q,{\prec_Q})\to (P,{\prec_P})$ between relational structures is a \emph{relation embedding} if 
\[
q\prec_Qq'\iff f(q)\prec_Pf(q')
\]
for all $q,q'\in Q$.
A c-poset map is a \emph{precedence embedding} if its injection is a relation embedding.

\begin{definition}
\label{d:c-poset}
  The d-category $\cPos$ has c-posets as objects and c-poset maps as morphisms,
  \begin{itemize}
      \item formorphisms are precedence embeddings that are formorphisms of c-sets,
      \item backmorphisms are precedence embeddings that are backmorphisms of c-sets.
  \end{itemize}
\end{definition}

The forgetful functor $\cPos\to \CSet$ does not reflect the d-structure because for- and backmorphisms must preserve precedence.

Every c-set can be regarded as a c-poset with empty precedence; c-poset maps between c-sets are exactly c-maps.  We can thus regard $\CSet$ as a subcategory of $\cPos$.

\begin{lemma}
\label{l:xi-cpos-functors}
The category $\CSet$ is a full d-subcategory of $\cPos$. 
The forgetful d-functor $\Forget\colon\cPos\to\CSet$ is faithful and a left adjoint to the inclusion functor $\CSet\hookrightarrow\cPos$. Therefore, $\CSet$ is a reflective subcategory of $\cPos$.
\end{lemma}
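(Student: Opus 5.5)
We check the three claims of Lemma~\ref{l:xi-cpos-functors} directly from Definitions~\ref{d:PosetMap} and~\ref{d:c-poset}.

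\emph{Full d-subcategory.} Let $U,V$ be c-sets, viewed as c-posets with empty precedence. A c-map $f\colon V\to U$ satisfies condition (1) of Definition~\ref{d:PosetMap} vacuously, since nothing holds in $U$. It satisfies condition (2) vacuously for the same reason. So $\cPos(V,U)=\cSet(V,U)$. Every injection between empty relations is a relation embedding, so every such map is a precedence embedding. Hence $\cPos^\alpha(V,U)=\cSet^\alpha(V,U)$ for $\alpha\in\{+,-\}$. Composition in $\cPos$ is that of the underlying c-maps, so the inclusion is a full d-subcategory.

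\emph{Forgetful d-functor.} $\Forget$ sends $(P,<)$ to $P$ and a c-poset map to its underlying c-map. It is a functor because composition is defined on underlying c-maps, and it is faithful for the same reason. It is a d-functor because formorphisms and backmorphisms of $\cPos$ are, by definition, formorphisms and backmorphisms of c-sets.

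\emph{Adjunction.} We must show $\Forget\dashv J$, where $J\colon\cSet\hookrightarrow\cPos$ is the inclusion. I would exhibit a natural bijection
\[
\cSet(\Forget P, U)\cong\cPos(P, JU).
\]
Let $P$ be a c-poset and $U$ a c-set. A c-poset map $P\to JU$ is a c-map $f\colon P\to U$ satisfying two conditions.
\begin{itemize}
\item Order reflection: it is vacuous, because the precedence of $JU$ is empty.
\item Weak monotonicity of $\varepsilon_f\colon U\to\Omega$: it is vacuous, because the condition quantifies over precedences in the codomain $JU$, which is empty.
\end{itemize}
So both sides of the bijection are literally the same set of c-maps. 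Naturality holds because composition on both sides is computed on underlying c-maps.

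The unit $P\to J\Forget P$ is the identity c-map. It is a c-poset map even though it does not preserve precedence, since only reflection is required. The counit $\Forget J U\to U$ is the identity. The triangle identities are therefore trivial. Reflectivity then follows, since $J$ is a fully faithful right adjoint.

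The only step needing care is the direction of the two conditions in Definition~\ref{d:PosetMap}. Both refer to the precedence of the codomain, which is what makes maps \emph{into} c-sets unconstrained. This is the reason $\Forget$ is left rather than right adjoint to the inclusion.
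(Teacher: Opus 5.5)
Your proof is correct and takes essentially the same route as the paper. Both conditions of Definition~\ref{d:PosetMap} refer only to the precedence of the codomain, so they are vacuous when the codomain is a c-set, giving $\cSet(\Forget P,U)=\cPos(P,U)$ and hence the adjunction. You also spell out the full d-subcategory, faithfulness and unit/counit details that the paper dismisses as clear.
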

\begin{proof}
If $U$ is a c-set and $P$ a c-poset, then every c-map $\Forget(P)\to U$ defines a (unique) c-poset map $P\to U$. Both conditions in Definition \ref{d:c-poset} are trivially satisfied in the absence of precedence in $U$. This shows adjointness, and the remaining statements are clear.
\end{proof}

Next we list basic properties of c-poset maps. A subset $A$ of poset $(P,<)$ is \emph{down-closed} if $p<a$ implies $p\in A$ for all $a\in A$ and $p\in P$. An \emph{up-closed} subset is down-closed in the opposite poset.
\begin{lemma}
\label{l:cposet-map-properties}
    Let $P$ and $Q$ be c-posets.
\begin{enumerate}
\item
    If $f\colon Q\to P$ is a c-poset map, then $\varepsilon^{-1}(1)$ is a down-closed and $\varepsilon^{-1}(0)$ an up-closed subset of $P$.
\item
    The c-poset maps $f\colon \emptyset\to P$ are in bijective correspondence with partitions of $P$ into a down-closed subset $\varepsilon_f^{-1}(1)$ and an up-closed subset $\varepsilon_f^{-1}(0)$.
\item
    Each inclusion $i\colon A\rightarrow P$ of a maximal antichain $A$ into $P$ extends uniquely to a c-poset map, defining 
    \[
    \varepsilon_i(p)=
    \begin{cases}
        1 & \text{if } p<a \text{ for some } a\in A,\\
        * & \text{if } p\in A,\\
        0 & \text{if } a<p \text{ for some } a\in A.
    \end{cases}
    \]
\item 
\label{l:min-incl}
    An injection  $f:Q\to P$ extends to a formorphism if and only if it is a precedence embedding and $f(Q)$ is down-closed. It extends to a backmorphism if and only if it is a precedence embedding and $f(Q)$ is up-closed.
\end{enumerate}    
\end{lemma}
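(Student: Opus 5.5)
Each item follows directly from the two defining conditions of a c-poset map: the injection reflects precedence, and $\varepsilon_f$ is weakly order-preserving for $1<_\Omega\ast<_\Omega 0$. Throughout I also use that $\varepsilon_f^{-1}(\ast)=f(Q)$.

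For (1), suppose $p<p'$ with $\varepsilon_f(p')=1$. Then $\varepsilon_f(p)\le_\Omega 1$, and since $1$ is the bottom of $\Omega$ this forces $\varepsilon_f(p)=1$. Dually, $0$ is the top of $\Omega$, so $\varepsilon_f^{-1}(0)$ is up-closed.

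For (2), a c-map $\emptyset\to P$ is the same thing as a function $\varepsilon\colon P\to\{0,1\}$. Precedence reflection is vacuous, so the only condition is monotonicity. A function into the two-element chain $\{1<0\}$ is monotone exactly when the preimage of $1$ is down-closed, and its complement, the preimage of $0$, is then automatically up-closed. This gives the bijection.

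For (3), I first check that $\varepsilon_i$ is well defined. If $p\in P\setminus A$, then maximality of $A$ gives some $a\in A$ comparable to $p$. The cases cannot overlap: $a<p<a'$ would give $a<a'$, contradicting that $A$ is an antichain, and $p\in A$ with $p<a$ is likewise excluded. Uniqueness holds because $\varepsilon_i$ must take the value $\ast$ exactly on $A$, and on $p\notin A$ comparable to $a\in A$ monotonicity forces $1$ if $p<a$ and $0$ if $a<p$. Precedence reflection for $i$ is trivial, since both sides carry the restricted order and $A$ is an antichain. It remains to check monotonicity by cases on $p<p'$; each case reduces to transitivity plus the antichain property.

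For (4), I treat the formorphism case; the backmorphism case is dual. By definition a formorphism must be a precedence embedding with $\varepsilon_f^{-1}(1)=\emptyset$. Hence $\varepsilon_f$ equals $\ast$ on $f(Q)$ and $0$ elsewhere, which is also the unique extension from Lemma~\ref{l:cset-properties}.\ref{i:unique-ext}. Monotonicity of this $\varepsilon_f$ fails precisely when some $p\notin f(Q)$ lies below some $p'\in f(Q)$, since that would require $0\le_\Omega\ast$. So monotonicity holds if and only if $f(Q)$ is down-closed.

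The only mildly delicate point is the case analysis in (3). Everything else is immediate.
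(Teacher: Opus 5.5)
Your proof is correct. The paper states this lemma without proof and treats all four items as immediate from Definitions~\ref{d:PosetMap} and~\ref{d:c-poset}; your item-by-item check, using precedence reflection, monotonicity of $\varepsilon_f$ and $\varepsilon_f^{-1}(\ast)=f(Q)$, is exactly the routine verification left to the reader. One small nit: in (3), excluding $a<p<a'$ when $a=a'$ uses irreflexivity of $<$ rather than the antichain property, which changes nothing.
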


Next we consider d-pushouts in $\cPos$.
For a function $f\colon Y\to X$ and a relation $\prec$ on $Y$, we define the relation $f({\prec})=\{(f(y),f(y'))\mid y\prec y'\}$ on $X$.

\begin{theorem}
\label{t:cpos-has-proper-pushouts}
 Let $(P\backleftarrow{f}R\forrightarrow{g} Q)$ be a d-span in $\cPos$ and $\Forget\colon \cPos\to \cSet$ the forgetful functor.
 \begin{enumerate}
 \item Every d-pushout
 \begin{equation*}
    \begin{tikzcd}[sep=30]
        \Forget(R)\ar[r,"\Forget(g)","+" swap]\ar[d,"\Forget(f)" swap,"-"] & \Forget(Q)\ar[d,"j","-" swap] \\
        \Forget(P)\ar[r,"i" swap,"+"] & S
    \end{tikzcd}
\end{equation*}
in $\CSet$ lifts to a d-pushout 
\begin{equation}
\label{e:cposet-concatenation}
    \begin{tikzcd}[sep=30]
        R\ar[r,"g","+" swap]\ar[d,"f" swap,"-"] & Q\ar[d,"j","-" swap] \\
        P\ar[r,"i" swap,"+"] & (S,{<_S})
    \end{tikzcd}
\end{equation}
in $\cPos$ with precedence 
\begin{equation}
\label{e:concat-precedence}
    {<_S} = {i({<_P})} \cup {j({<_Q})} \cup {i(P\setminus f(R)) \times j(Q\setminus g(R)))}.
\end{equation}
\item The category $\cPos$ is d-cocomplete, and $\Forget\colon \cPos\to \cSet$ preserves d-pushouts.
\end{enumerate}
\end{theorem}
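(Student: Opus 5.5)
The plan is to check the claims of part (1) in turn, and then deduce part (2) from part (1) together with Proposition \ref{p:cset-is-dcocomplete}. Throughout I use the partition $S=S_P\sqcup S_R\sqcup S_Q$ from Lemma \ref{l:pushout-square-desc}, where $S_P=i(P\setminus f(R))$, $S_R=if(R)=jg(R)$ and $S_Q=j(Q\setminus g(R))$. By Lemma \ref{l:dsquares-are-injective-pushouts}, the activity maps satisfy $\varepsilon_i|_{S_Q}=0$ and $\varepsilon_j|_{S_P}=1$.

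\textbf{Step 1: ${<_S}$ is a strict partial order.}
Irreflexivity is clear, since the third summand of \eqref{e:concat-precedence} relates disjoint sets. Transitivity is proved by cases, and I expect this to be the main bookkeeping. The key input is Lemma \ref{l:cposet-map-properties}.\ref{l:min-incl}: since $f$ is a backmorphism, $f(R)$ is up-closed in $P$ and $f$ is a precedence embedding; since $g$ is a formorphism, $g(R)$ is down-closed in $Q$.

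Consider first a chain $i(p)<_S i(p')=j(q)<_S j(q')$, where the middle element lies in $S_R$. Then $p'\in f(R)$, and $q=g(r)$ for some $r$ with $f(r)=p'$. There are three cases.
\begin{itemize}
\item If $p\in f(R)$, the precedence-embedding property of $f$ and $g$ transports $p<p'$ to $R$ and then into $Q$, giving $j(g(\cdot))<_S j(q')$.
\item If $p\notin f(R)$ and $q'\notin g(R)$, the pair lies in the third summand.
\item If $p\notin f(R)$ and $q'\in g(R)$, we use the embeddings again: $q<q'$ pulls back to $R$ and pushes forward to $P$, and transitivity in $P$ gives the relation through $i$.
\end{itemize}
Chains involving the third summand are handled similarly, using up-closure of $f(R)$ and down-closure of $g(R)$. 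Two mixed chains cannot occur. A chain $i(p)<_S j(q)$ with $j(q)\in S_Q$ followed by $j(q)<_S i(p'')$ would force $q\in g(R)$ by down-closure. The dual chain is excluded by up-closure.

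\textbf{Step 2: the square is a d-square in $\cPos$.}
We show that $i$ and $j$ are c-poset maps which are precedence embeddings with down-closed and up-closed images respectively. The embedding property holds because, restricted to $i(P)$, the relation $<_S$ reduces to $i(<_P)$: elements of $j(<_Q)$ lying in $i(P)$ come from $S_R$ and hence from $R$, so they already lie in $i(<_P)$. Down-closure of $i(P)$ holds because no pair of $<_S$ goes from $S_Q$ into $i(P)$. We then apply Lemma \ref{l:cposet-map-properties}.\ref{l:min-incl}, and Lemma \ref{l:cset-properties}.\ref{i:unique-ext} guarantees that the activity maps agree with those given by the d-pushout in $\cSet$. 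Commutativity is inherited from $\cSet$, since $\Forget$ is faithful.

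\textbf{Step 3: the universal property.}
Let $p\colon P\to M$ and $q\colon Q\to M$ be c-poset maps with $pf=qg$. Lemma \ref{l:dsquare-is-dpushout} gives a unique c-map $r\colon S\to M$, so uniqueness is immediate by faithfulness, and it remains to show that $r$ is a c-poset map.
\begin{itemize}
\item \emph{Order reflection.} Suppose $r(s)<_M r(s')$. If $s,s'$ both lie in $i(P)$, reflection by $p$ applies; both in $j(Q)$ is analogous. If $s\in S_Q$ and $s'\in S_P$, then $\varepsilon_{ri}(s)=\varepsilon_i(s)=0$ and $\varepsilon_{ri}(s')=*$. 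Hence $\varepsilon_p(r(s))=0$ and $\varepsilon_p(r(s'))=*$, while $r(s)<r(s')$, which contradicts weak monotonicity of $\varepsilon_p$. If $s\in S_P$ and $s'\in S_Q$, then $s<_S s'$ holds by the third summand.
\item \emph{Weak monotonicity of $\varepsilon_r$.} Take $m<m'$ in $M$. If both lie outside $r(S)$, we have $\varepsilon_r=\varepsilon_p$ there. If $m\in r(S)$ and $m'\notin r(S)$, choose $p$ or $q$ according to where the preimage of $m$ lies, and use $\varepsilon_r(m')=\varepsilon_p(m')=\varepsilon_q(m')$ from the proof of Lemma \ref{l:dsquare-is-dpushout}; monotonicity of that map then gives the claim. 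The remaining case is symmetric.
\end{itemize}

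\textbf{Part (2).}
Every d-span in $\cPos$ maps under $\Forget$ to a d-span in $\cSet$, which has a d-pushout by Proposition \ref{p:cset-is-dcocomplete}. Part (1) lifts it, so $\cPos$ is d-cocomplete. Preservation follows from uniqueness of pushouts up to isomorphism: any d-pushout in $\cPos$ is isomorphic to the lifted one, whose image under $\Forget$ is a d-pushout.
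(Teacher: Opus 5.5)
Your proposal is correct and follows essentially the same route as the paper. You define $<_S$ by \eqref{e:concat-precedence}, check it is a strict order, and show that $i,j$ are precedence embeddings with down-/up-closed images, so the square is a d-square. You then verify that the c-map supplied by the $\cSet$ d-pushout reflects precedence (same four cases) and has a weakly monotone activity map, and get part (2) from Proposition~\ref{p:cset-is-dcocomplete}.

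The differences are minor. The paper proves the closure and embedding properties first and gets transitivity in three short cases, whereas you do transitivity by direct summand-wise case analysis. In the $S_Q$/$S_P$ reflection case you compare via $\varepsilon_{ri}$ (values $0$ versus $\ast$) rather than via the composite from $R$ (values $0$ versus $1$); both work.
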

\begin{proof}
For (1) we proceed by proving a sequence of claims.

\smallskip\noindent
(A)\; The set $j(Q)$ is an up-closed and the set $i(P)$ a down-closed  subset of $S$.\\
    Suppose $s<_Ss'$, where $s\in j(Q)$ and $s'\in S\setminus j(Q)$. Then $s'\in i(P)$ and, by \eqref{e:concat-precedence}, also $s\in i(P)$ and $i^{-1}(s)<_P i^{-1}(s')$. But $i^{-1}(s)\in f(R)$ and $i^{-1}(s')\not\in f(R)$. Since $f$ is a c-poset backmorphism,  $\varepsilon_f(i^{-1}(s))=*\not\leq_\Omega 1=\varepsilon_f(i^{-1}(s'))$ -- a contradiction.
    Thus $j(Q)$ is up-closed. The proof of the second claim is similar.

    \smallskip\noindent
    (B)\; The maps $i$ and $j$ are relation embeddings.\\
    Assume that $i(p)<_S i(p')$ with $p,p'\in P$. Using \eqref{e:concat-precedence} we have $p<_P p'$ or $j^{-1}(i(p))<_Q j^{-1}(i(p'))$. In the second case, $p,p'\in f(R)$ and then $f^{-1}(p)=g^{-1}(j^{-1}(i(p))<_R g^{-1}(j^{-1}(i(p'))=f^{-1}(p')$, since $g$ reflects precedence. Thus $p<_Pp'$, since $f$ preserves precedence. This shows that $i$ reflects $<$. As $i$ preserves $<$ by \eqref{e:concat-precedence}, it is a relation embedding. The proof for $j$ is similar.

    \smallskip\noindent
    (C)\; The relation $<_S$ is transitive.\\
    Suppose $s<_Ss'<_Ss''$. There are three cases:
    \begin{itemize}
    \item 
        If $s\in j(Q)$, then $s'\in j(Q)$ and further $s''\in j(Q)$ by (A). It follows from (B) that $j^{-1}(s)<_Q j^{-1}(s')<_Q j^{-1}(s'')$, which implies that $j^{-1}(s)<_Q j^{-1}(s'')$ by transitivity of $<_Q$. Therefore, $s=j(j^{-1}(s))<_S j(j^{-1}(s''))=s''$.
    \item 
        For $s''\in i(P)$, a similar argument shows that $s<_S s''$.
    \item 
        If $s\in S\setminus j(Q)=i(P\setminus f(R))$ and $s''\in S'\setminus i(P)=j(Q\setminus g(R))$, then $s<_S s''$ by \eqref{e:concat-precedence}.
    \end{itemize}

    \smallskip\noindent
    (D)\; The relational structure $(S,{<_S})$ is a c-poset.\\
    Irreflexivity of $<_S$ follows from (B) and irreflexivity of $<_P$ and $<_Q$; transitivity is shown in (C).

    \smallskip\noindent
    (E)\; Diagram \eqref{e:cposet-concatenation} is a d-square of c-posets.\\
    The diagram commutes in $\cPos$ because it commutes in $\cSet$. The c-map $i$ is a c-poset formorphism by Lemma \ref{l:cposet-map-properties}.4 because it is a precedence embedding by (B) and $i(P)$ is a down-closed subset of $S$ by (A). For similar reasons, $j$ is a c-poset backmorphism.

    \medskip
    It remains to show that \eqref{e:cposet-concatenation} is a pushout. Let $M$ be a c-poset and $u:P\to M$ and $v:Q\to M$ c-poset maps satisfying $uf=vg$. This yields the following commutative diagram of solid arrows in $\cPos$:
\begin{equation*}
  \begin{tikzcd}
R \ar[r,"g", "+" swap]\ar[d,"f" swap, "-"] & Q\ar[d,"j", 
 "-" swap]\ar[ddr, "v", bend left=25]&\\
P \ar[r, swap, "i", "+" swap]\ar[drr, swap, "u", bend right=25]& S\ar[dr, dashed, "t"]&\\
&& M
\end{tikzcd}
\end{equation*}
Since the d-square is a pushout in $\CSet$, there is a unique c-map $t\colon S\to M$ for which $ti=u$ and $tj=v$ in $\cSet$. We need to check that $t$ is a c-poset map. 

    \smallskip\noindent
    (F)\; The c-map $t$ reflects precedence.\\
    Suppose $t(s)<_Mt(s')$ for $s,s'\in S$. We show that $s<_Ss'$. There are four cases.
    \begin{itemize}
    \item
        If $s,s'\in i(P)$, then $i^{-1}(s)=u^{-1}(t(s))<_P u^{-1}(t(s'))=i^{-1}(s')$ since $u$ reflects precedence, and then $s<_Ss'$ since $i$ preserves precedence. 
    \item
        For $s,s'\in j(Q)$, the argument is similar.
    \item
        If $s\in i(P\setminus f(R))=i(P)\setminus j(Q)$ and $s'\in j(Q\setminus g(R))=j(Q)\setminus i(P)$, then $s<_Ss'$ by \eqref{e:cposet-concatenation}.
    \item
        If $s\in j(Q)\setminus i(P)$ and $s'\in i(P)\setminus j(Q)$, then $\varepsilon_{tif}(t(s))=\varepsilon_{if}(s)=\varepsilon_i(s)=0$ by identity \eqref{e:ConcsetComp1}, as $f$ is a formorphism and as $s\notin i(P)$. Similarly, $\varepsilon_{tjg}(t(s'))=1$. But $tif=uf=vg=tjg$ is a c-poset map, so that $t(s')<_M t(s)$ -- a contradiction.
    \end{itemize}

    \smallskip\noindent
    (G)\; The activity map $\varepsilon_t$ is weakly order-preserving.\\
    Assume that $m<_Mm'$, $m,m'\in M$. We show that $\varepsilon_t(m)\leq_\Omega \varepsilon_t(m')$. Again, there are several cases.
    \begin{itemize}
    \item
        If $m,m'\notin t(S)$, then $\varepsilon_t(m)=\varepsilon_u(m)\leq_\Omega \varepsilon_u(m')=\varepsilon_t(m')$ since $u=ti$ is a c-poset map.
    \item
        If $m,m'\in t(S)$, then $\varepsilon_t(m)=*=\varepsilon_t(m')$. 
    \item
        If $m\in t(i(P))$ and $m'\notin t(S)$, then $\varepsilon_t(m)=\varepsilon_u(m)=*$, $\varepsilon_t(m')=\varepsilon_u(m')$ and  $\varepsilon_u(m)\leq_\Omega\varepsilon_u(m')$, as $u$ is a c-poset map. 
    \item
        The remaining cases of $m\notin t(S)$ and $m'\in t(i(P))$,  $m\in t(j(Q))$ and $m'\notin t(X)$, as well as $m\notin t(X)$ and $m'\in t(j(Q))$ are similar.
\end{itemize}
Item (2) follows immediately from (1) and the fact that
$\cSet$ is d-cocomplete (Proposition \ref{p:cset-is-dcocomplete}).
\end{proof}

This construction generalises the gluing composition of posets in \cite{FJSZ_HDALang}.  See Figure~\ref{f:cposet-concat} for an example.

\begin{figure}
    \centering
\begin{tikzpicture}[node distance=2cm, shorten >=-3pt, shorten <=-3pt,on grid]

  \draw (3mm,5mm) ellipse (0.6 and 1);

  \path (5.3,2.5) +(-4:0.6 and 1)   coordinate (T_in_R);
  \path (5.3,2.5) +(-4:1.2 and 2)   coordinate (T_out_R);
  \path (5.3,2.5) +(176:0.6 and 1) coordinate (T_in_L);
  \path (5.3,2.5) +(176:1.2 and 2) coordinate (T_out_L);

  \path (5.3,-1.5) +(356:0.6 and 1) coordinate (B_in_R);
  \path (5.3,-1.5) +(356:1.2 and 2) coordinate (B_out_R);
  \path (5.3,-1.5) +(176:0.6 and 1) coordinate (B_in_L);
  \path (5.3,-1.5) +(176:1.2 and 2) coordinate (B_out_L);

  \path (10.3,0.5) +(-4:0.6 and 1)  coordinate (R_in_R);
  \path (10.3,0.5) +(-4:1.2 and 2)  coordinate (R_out_R);
  \path (10.3,0.5) +(176:0.6 and 1) coordinate (R_in_L);
  \path (10.3,0.5) +(176:1.2 and 2) coordinate (R_out_L);

  \fill[cyan!10] (T_out_R) arc[start angle=-4, end angle=176, x radius=1.2, y radius=2] -- cycle;
  \draw[shorten >=0pt, shorten <=0pt] (T_out_R) arc[start angle=-4, end angle=176, x radius=1.2, y radius=2];
  \draw[fill=gray!30] (5.3,2.5) ellipse (0.6 and 1);
  \draw[-, shorten >=0pt, shorten <=0pt] (T_in_R) -- (T_out_R);
  \draw[-, shorten >=0pt, shorten <=0pt] (T_in_L) -- (T_out_L);

  \fill[green!10] (B_out_L) arc[start angle=176, end angle=356, x radius=1.2, y radius=2] -- cycle;
  \draw[shorten >=0pt, shorten <=0pt] (B_out_L) arc[start angle=176, end angle=356, x radius=1.2, y radius=2];
  \draw[fill=gray!30] (5.3,-1.5) ellipse (0.6 and 1);
  \draw[-, shorten >=0pt, shorten <=0pt] (B_in_R) -- (B_out_R);
  \draw[-, shorten >=0pt, shorten <=0pt] (B_in_L) -- (B_out_L);

  \fill[cyan!10] (R_out_R) arc[start angle=-4, end angle=176, x radius=1.2, y radius=2] -- cycle;
  \fill[green!10]  (R_out_L) arc[start angle=176, end angle=356, x radius=1.2, y radius=2] -- cycle;
  \draw (10.3,0.5) ellipse (1.2 and 2);
  \draw[fill=gray!30] (10.3,0.5) ellipse (0.6 and 1);
  \draw[-, shorten >=0pt, shorten <=0pt] (R_in_R) -- (R_out_R);
  \draw[-, shorten >=0pt, shorten <=0pt] (R_in_L) -- (R_out_L);

  \node (a0) {$\circ$};
  \node (b0) [above=1cm of a0] {$\circ$};
  \node (c0) [below right=1mm and 6mm of a0] {$\circ$};
  \node (d0) [above=1cm of c0] {$\circ$};
  \path[-]
  (a0) edge[orange,thick] (b0)
  edge[orange,thick] (d0)
  (c0) edge[orange,thick] (b0)
  edge[orange,thick] (d0);

  \node (a1) [above right =2cm and 5cm of a0] {$\ast$};
  \node (b1) [above=1cm of a1] {$\ast$};
  \node (c1) [below right=1mm and 6mm of a1] {$\ast$};
  \node (d1) [above=1cm of c1] {$\ast$};
  \node (f11) [above right=1cm and 3mm of b1] {$0$};
  \path[-]
  (a1) edge[orange,thick] (b1)
  edge[orange,thick] (d1)
  (c1) edge[orange,thick] (b1)
  edge[orange,thick] (d1)
  (b1) edge[teal,thick] (f11);

  \node (a2) [below right =2cm and 5cm of a0] {$\ast$};
  \node (b2) [above=1cm of a2] {$\ast$};
  \node (c2) [below right=1mm and 6mm of a2] {$\ast$};
  \node (d2) [above=1cm of c2] {$\ast$};
  \node (d22) [below=1cm of a2] {$1$};
  \node (e2) [below=1cm of c2] {$1$};
  \path[-]
  (a2)edge[orange,thick] (b2)
  edge[orange,thick] (d2)
  (c2) edge[orange,thick] (d2)
  edge[orange,thick] (b2)
  (d22) edge[blue!50,thick] (a2);

  \node (a3) [above right =2cm and 5cm of a2] {$\ast$};
  \node (b3) [above=1cm of a3] {$\ast$};
  \node (c3) [below right=1mm and 6mm of a3] {$\ast$};
  \node (d3) [above=1cm of c3] {$\ast$};
  \node (dd3) [below=1cm of c3] {$1$};
  \node (e3) [below=1cm of a3] {$1$};
  \node (f3) [above right=1cm and 3mm of b3] {$0$};
  \path[-]
  (a3)edge[orange,thick] (b3)
  edge[orange,thick] (d3)
  (c3) edge[orange,thick] (d3)
  edge[orange,thick] (b3)
  (e3) edge[blue!50,thick] (a3)
  (b3) edge[teal,thick] (f3)
  (e3) edge[red!80,thick,dashed,bend left=45] (f3)
  (dd3) edge[red!80,thick,bend right=45] (f3);

  \node (g1) [right= 3mm of d0] {};
  \node (g2) [below left=8mm and 10mm of b1] {};
  \node (f1) [right= 3mm of c0] {};
  \node (f2) [below left= 3mm and 10mm of b2] {};
  \node (j1) [above right=3mm and 10mm of c1] {};
  \node (j2) [above left=3mm and 10mm of b3] {};
  \node (i1) [above right= 8mm and 10mm of c2] {};
  \node (i2) [below left= 3mm and 10mm of a3] {};

  \path[->]
  (f1) edge node[above] {$-$} node[below] {$f$} (f2)
  (g1) edge node[above] {$g$} node[below] {$+$} (g2)
  (i1) edge node[above] {$-$} node[below] {$i$} (i2)
  (j1) edge node[above] {$j$} node[below] {$+$} (j2);

\end{tikzpicture}

    \caption{A d-pushout of (Hasse diagrams of) c-posets. The red edges are added in
      the construction of the precedence for the pushout c-poset (the dashed red edge is subsumed by transitivity).}
\label{f:cposet-concat}
\end{figure}

\begin{remark}
d-pushouts in $\CSet$ do not coincide with d-pushouts in $\cPos$. It is apparent from \eqref{e:concat-precedence} that d-pushouts of c-sets in $\cPos$, considered as c-posets with discrete precedence, need not be c-sets because of the precedence introduced in the d-pushout.  
\end{remark}

In the remainder of the section, we show several facts about pushouts. These are needed below to show that $\cPos$ is a d-cocompletion of $\cSet$.

\begin{lemma}
\label{l:finite-cpos-generation}
    Let $p$ be a minimal and $p'$ a maximal element of a c-poset $P$ such that $p<p'$. Then 
  \begin{equation*}
    \begin{tikzcd}
      P\setminus\{p,p'\}\ar[r,"g","+" swap]\ar[d,"f" swap, "-"]
      & P\setminus\{p\}\ar[d,"j","-" swap] \\
      P\setminus\{p'\}\ar[r,"i" swap, "+"] & P
    \end{tikzcd}
  \end{equation*}
  is a d-pushout in $\cPos$, where the subsets of $P$ are regarded as c-posets with restricted precedence and the c-poset maps are the unique for- and backmorphisms induced by the inclusions.
\end{lemma}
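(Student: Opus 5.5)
The plan is to reduce the statement to Theorem~\ref{t:cpos-has-proper-pushouts}. That theorem constructs a d-pushout of the span $(P\setminus\{p'\}\backleftarrow{f} P\setminus\{p,p'\}\forrightarrow{g} P\setminus\{p\})$. I will check that this d-pushout is, up to the identifications given by the inclusions, exactly $P$ with its given order and the given maps $i,j$. Uniqueness of pushouts then finishes the argument.

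\textbf{Step 1: the maps are well-defined for- and backmorphisms.} All four maps are inclusions of subsets carrying the restricted precedence, so they are precedence embeddings. By Lemma~\ref{l:cposet-map-properties}.\ref{l:min-incl}, it suffices to check closure conditions on the images.
\begin{itemize}
\item For $g$ and $i$ (formorphisms) we need down-closed images. $P\setminus\{p'\}$ is down-closed in $P$ because $p'$ is maximal. $P\setminus\{p,p'\}$ is down-closed in $P\setminus\{p\}$ for the same reason.
\item For $f$ and $j$ (backmorphisms) we need up-closed images. $P\setminus\{p\}$ is up-closed in $P$ because $p$ is minimal. $P\setminus\{p,p'\}$ is up-closed in $P\setminus\{p'\}$.
\end{itemize}
The square commutes in $\cSet$ by Lemma~\ref{l:dsquares-are-injective-pushouts}. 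Its underlying square of inclusions is an injective pushout in $\Set$ by Lemma~\ref{l:inj-pushout-crit}: the union of the two middle sets is $P$ and their intersection is $P\setminus\{p,p'\}$. Note that $p\neq p'$ since $p<p'$. Hence it is a d-square in $\cSet$, and therefore also in $\cPos$, since commutativity is determined by the underlying c-maps.

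\textbf{Step 2: the precedence matches.} By Theorem~\ref{t:cpos-has-proper-pushouts}(1), the d-pushout in $\cPos$ has underlying c-set $P$ with the same $i$ and $j$. Its precedence is
\[
{<_S}={<}|_{P\setminus\{p'\}}\cup{<}|_{P\setminus\{p\}}\cup\big(\{p\}\times\{p'\}\big),
\]
because $i(P\setminus\{p'\}\setminus f(R))=\{p\}$ and $j(P\setminus\{p\}\setminus g(R))=\{p'\}$.

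We must show this equals $<$ on $P$. The inclusion ${<_S}\subseteq{<}$ holds by the hypothesis $p<p'$. For the converse, let $a<b$ in $P$.
\begin{itemize}
\item If $p'\notin\{a,b\}$ or $p\notin\{a,b\}$, the pair lies in one of the two restricted relations.
\item Otherwise $\{a,b\}=\{p,p'\}$. Since $b>a$ excludes $b=p$ by minimality, we get $(a,b)=(p,p')$, which lies in the third term.
\end{itemize}
So ${<_S}={<}$.

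\textbf{Step 3: conclusion.} The given square therefore coincides with the d-pushout produced in Theorem~\ref{t:cpos-has-proper-pushouts}, so it is a d-pushout. The main point needing care is Step~2, specifically that no precedence relation is lost: the only pair missing from both restricted orders is $(p,p')$, and it is exactly the product term of \eqref{e:concat-precedence}.
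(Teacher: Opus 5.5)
Your proposal is correct and follows the paper's own argument. The paper's proof simply asserts that the square is an instance of the d-pushout \eqref{e:cposet-concatenation} from Theorem~\ref{t:cpos-has-proper-pushouts}, and you carry out that verification in full; the key point is that the only pair of $<$ missing from both restricted orders is $(p,p')$, which is exactly the product term of \eqref{e:concat-precedence}. One step you leave implicit is that Theorem~\ref{t:cpos-has-proper-pushouts}(1) needs the underlying square in $\cSet$ to be a d-pushout, not merely a d-square; this follows from Lemma~\ref{l:dsquare-is-dpushout}.
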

\begin{proof}
It is immediate to verify that this square is an instance of \eqref{e:cposet-concatenation}.
\end{proof}  

\begin{proposition}
\label{p:ConcatPullback}
Every d-pushout 
	in $\cPos$ is a d-pullback.
\end{proposition}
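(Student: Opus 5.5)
The plan is to reduce the claim to the $\cSet$ case and then check the order conditions in $\cPos$ by hand. Start with a d-pushout \eqref{e:cposet-concatenation} in $\cPos$. By Theorem \ref{t:cpos-has-proper-pushouts}, $\Forget$ preserves d-pushouts, so its image in $\cSet$ is a d-pushout. By Lemma \ref{l:dpushouts-are-pullbacks} this image is a pullback in $\cSet$, and by Lemma \ref{l:dsquares-are-injective-pushouts} and Lemma \ref{l:pushout-square-desc} the underlying square of sets is an injective pushout square. We may also assume the precedence on $S$ is given by \eqref{e:concat-precedence}. Indeed, the d-pushout is unique up to isomorphism, and such an isomorphism preserves the pullback property.

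Next, take a c-poset $M$ and c-poset maps $s\colon M\to P$, $t\colon M\to Q$ with $is=jt$. Forgetting to $\cSet$ gives a unique c-map $r\colon M\to R$ with $fr=s$ and $gr=t$ in $\cSet$. Since $\Forget$ is faithful (Lemma \ref{l:xi-cpos-functors}), it suffices to show that $r$ is a c-poset map. Uniqueness and commutativity then transfer automatically.

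That $r$ reflects precedence is straightforward. If $r(m)<_R r(m')$, then $fr(m)<_P fr(m')$ because $f$ is a precedence embedding. Hence $s(m)<_P s(m')$, and $m<_M m'$ because $s$ reflects precedence.

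The main step is showing that $\varepsilon_r$ is weakly order-preserving. Suppose $x<_R x'$. Then $f(x)<_P f(x')$, since $f$ preserves precedence. By \eqref{e:ConcsetComp1} we have $\varepsilon_r=\varepsilon_s\circ f$, so $\varepsilon_r(x)=\varepsilon_s(f(x))\le_\Omega\varepsilon_s(f(x'))=\varepsilon_r(x')$, using that $\varepsilon_s$ is weakly order-preserving. The obstacle here is making sure this formula for $\varepsilon_r$ is the correct one. It is, because the proof of Proposition \ref{p:cset-pullback-creation} constructs $\varepsilon_r$ exactly as $\varepsilon_s\circ f$.

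Finally, the square is a d-square by assumption. So it is a d-pullback.
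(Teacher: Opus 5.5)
Your proposal is correct and follows essentially the same route as the paper. You pass to $\cSet$, where the square is a pullback and yields a unique c-map $r$. You then check that $r$ is a c-poset map: it reflects precedence because $f$ is a precedence embedding and $s$ reflects precedence, and $\varepsilon_r=\varepsilon_s\circ f$ is weakly order-preserving. Your extra remarks are harmless but unnecessary: normalising the precedence on $S$ via \eqref{e:concat-precedence} and appealing to the construction in Proposition~\ref{p:cset-pullback-creation} add nothing, since \eqref{e:ConcsetComp1} alone forces $\varepsilon_r=\varepsilon_s\circ f$ for any c-map $r$ with $fr=s$.
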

\begin{proof}
  By Theorem \ref{t:cpos-has-proper-pushouts}, every d-pushout in
  $\cPos$ is a d-pushout in $\cSet$ after applying the forgetful
  functor $\Forget\colon \cPos\to \CSet$, and a pullback in $\CSet$ by
  Lemma \ref{l:dpushouts-are-pullbacks}. Consider a d-pushout diagram
  as in \eqref{e:cposet-concatenation}. For every commuting square in
  $\cPos$, formed by $P\xleftarrow{x}M\xrightarrow{y} Q$ and
  $P\forrightarrow{i}S\backleftarrow{j} Q$, there is a unique c-map
  $w\colon M\to R$ satisfying $fw=x$ and $gw=y$.  It remains to show
  that $w$ is a c-poset map.  It reflects precedence since $f$
  preserves and $x$ reflects precendence.  For $r,r'\in R$ we have
\begin{equation*}
  r<_R r'
  \iff
  f(r)<_P f(r')
  \implies
  \varepsilon_x(f(r))\leq_\Omega \varepsilon_x(f(r'))
  \iff
  \varepsilon_w(r)\leq_\Omega \varepsilon_w(r'),
\end{equation*}
which shows that $\varepsilon_w$ is weakly order preserving.
\end{proof}

\begin{lemma}
\label{l:ConcatFactor}
Let $S=\colim\big(P\backleftarrow{f}R \forrightarrow{g} Q\big)$ be a d-pushout in $\cPos$.
Then every c-poset map $t\colon U\to S$ from a c-set $U$ factors through $P$ or $Q$.
\end{lemma}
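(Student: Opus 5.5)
The plan is to factor $t$ through the leg of the d-pushout on whose complement the activity map $\varepsilon_t$ is constant with the appropriate value. Throughout, use the explicit description of $S$ from Theorem~\ref{t:cpos-has-proper-pushouts}. Up to isomorphism, $S$ is the lifted d-pushout \eqref{e:cposet-concatenation}, with precedence \eqref{e:concat-precedence}. Its underlying square is an injective pushout square, so we have the partition of Lemma~\ref{l:pushout-square-desc}. In particular
\[
S\setminus j(Q)=i(P\setminus f(R)), \qquad S\setminus i(P)=j(Q\setminus g(R)),
\]
and every element of the first set precedes every element of the second in $<_S$.

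\emph{Step 1: choose the leg.} Let $x\in S\setminus j(Q)$ and $y\in S\setminus i(P)$. Then $x<_S y$, so weak order-preservation of $\varepsilon_t$ gives $\varepsilon_t(x)\le_\Omega\varepsilon_t(y)$.

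Both $x$ and $y$ lie outside $t(U)$. To see this, suppose some $t(u)\in S\setminus j(Q)$ and some $t(u')\in S\setminus i(P)$. Then $t(u)<_S t(u')$, and since $t$ reflects precedence this gives $u<_U u'$. That is impossible, because a c-set has empty precedence. More simply, this shows that $t(U)$ is an antichain. Even without this, the values $\varepsilon_t(x)$ and $\varepsilon_t(y)$ are $\ast$ only on $t(U)$, and the main argument below only needs the non-$\ast$ elements.

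Now split into cases.
\begin{itemize}
\item If some $y\in S\setminus i(P)$ has $\varepsilon_t(y)=1$, then every $x\in S\setminus j(Q)$ satisfies $\varepsilon_t(x)\le_\Omega 1$. So every such $x$ is either $1$ or $\ast$.
\item Symmetrically, if some $x\in S\setminus j(Q)$ has $\varepsilon_t(x)=0$, then every $y$ is $0$ or $\ast$.
\end{itemize}
Combining these with the antichain observation, we get one of two outcomes. Either $t(U)\subseteq i(P)$ and $\varepsilon_t\equiv 0$ on $S\setminus i(P)$, or $t(U)\subseteq j(Q)$ and $\varepsilon_t\equiv 1$ on $S\setminus j(Q)$.

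The way I would carry this out is as follows. If $\varepsilon_t$ is not identically $0$ on $S\setminus i(P)$, pick $y$ there with $\varepsilon_t(y)\neq 0$. If $\varepsilon_t(y)=1$, then all of $S\setminus j(Q)$ has value $1$ or $\ast$. If instead $\varepsilon_t(y)=\ast$, then $y\in t(U)$, and $x<_S y$ forces $\varepsilon_t(x)\in\{1,\ast\}$. In that case $x\notin t(U)$ by the antichain property, so $\varepsilon_t(x)=1$. Either way, $\varepsilon_t\equiv1$ on $S\setminus j(Q)$, and in particular $t(U)\subseteq j(Q)$. The bookkeeping in this case analysis is the main obstacle; everything else is formal.

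\emph{Step 2: construct the factorisation.} Say $\varepsilon_t\equiv 0$ on $S\setminus i(P)$, so that $t(U)\subseteq i(P)$. Put $u=i^{-1}\circ t$ as an injection and $\varepsilon_u=\varepsilon_t\circ i$. Then $\varepsilon_u^{-1}(\ast)=u(U)$.

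By Lemma~\ref{l:cset-properties}.\ref{i:comp}, we have $iu=t$ in $\cSet$. Indeed, $\varepsilon_u=\varepsilon_t\circ i$ holds by definition, and on $S\setminus i(P)$ we have $\varepsilon_i=0=\varepsilon_t$, because $i$ is a formorphism.

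\emph{Step 3: check that $u$ is a c-poset map.} The map $u$ reflects precedence trivially, since $U$ has empty precedence. Its activity map $\varepsilon_u=\varepsilon_t\circ i$ is weakly order-preserving, because $i$ preserves precedence and $\varepsilon_t$ is weakly order-preserving.

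The other case is dual: factor through $j$, using $\varepsilon_j\equiv1$ on $S\setminus j(Q)$, since $j$ is a backmorphism.
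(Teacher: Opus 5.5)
Your proof takes the same approach as the paper's. Every element of $S\setminus j(Q)$ precedes every element of $S\setminus i(P)$, and $U$ is discrete, so $\varepsilon_t\equiv 1$ on $S\setminus j(Q)$ or $\varepsilon_t\equiv 0$ on $S\setminus i(P)$. You then factor through $j$ or $i$ with activity map $\varepsilon_t\circ j$ or $\varepsilon_t\circ i$. Three justifications need repair, though none affects the approach.

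\emph{Step~3.} Precedence reflection is not trivial for a map out of a discrete c-set. It says exactly that $u(U)$ is an antichain in $P$. This holds because $u(a)<_P u(a')$ would give $t(a)<_S t(a')$, as $i$ preserves precedence, and hence $a<_U a'$. The paper's proof passes over this check silently.

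\emph{Step~1, order on $\Omega$.} $\varepsilon_t(x)\le_\Omega 1$ already forces $\varepsilon_t(x)=1$, since $1$ is the least element of $\Omega$. Dually, $0\le_\Omega\varepsilon_t(y)$ forces $\varepsilon_t(y)=0$. Your weaker ``$1$ or $\ast$'' does not by itself give $\varepsilon_t\equiv 1$ in the subcase $\varepsilon_t(y)=1$.

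\emph{Step~1, antichain remark.} The sentence saying $x$ and $y$ both lie outside $t(U)$ is misstated. You only showed that they cannot both lie in $t(U)$, and that is all you use later.

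The paper avoids the case bookkeeping by contradiction. If $\varepsilon_t(x)\neq 1$ and $\varepsilon_t(y)\neq 0$, then $\varepsilon_t(x)\le_\Omega\varepsilon_t(y)$ forces both values to be $\ast$, contradicting that $t(U)$ is an antichain.
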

\begin{proof}
As usual we assume a d-pushout square as in \eqref{e:cposet-concatenation}.
Suppose there are $s\in i(P)\setminus j(Q)$ and $s'\in j(Q)\setminus i(P)$ such that $\varepsilon_t(s)\neq 1$ and $\varepsilon_t(s')\neq 0$. By \eqref{e:concat-precedence} we have $s<s'$, which enforces $\varepsilon_f(s)=\varepsilon_f(s')=*$. Thus $s,s'\in t(U)$ and $t^{-1}(s)<t^{-1}(s')$, that is, $U$ cannot be discrete. Therefore, either $\varepsilon_f(s)=1$ for all $s\in i(P)\setminus j(Q)$, or $\varepsilon_f(s')=0$ for all $s'\in j(Q)\setminus i(P)$.
  
In the first case, $f(U)\subseteq j(Q)$, so there exists a function  $k\colon U\to Q$ such that $j\circ k=t$ in $\Set$. Defining $\varepsilon_k=\varepsilon_t\circ j$ makes $k$ into a c-map and in fact a c-poset map: the activity map $\varepsilon_k$ is weakly order-preserving because both $\varepsilon_t$ and $j$ have this property (recall that $j$ is a backmorphism and hence order-preserving). Furthermore, for $s\in S$ we have
    \[
        \varepsilon_{jk}(s)=
        \begin{cases}
            \varepsilon_j(s) & \text{for }s\in i(P)\setminus j(Q),\\
            \varepsilon_k(j^{-1}(s)) & \text{for }s\in j(Q)
        \end{cases}
        =
        \begin{cases}
            1 & \text{for } s\in i(P)\setminus j(Q),\\
            \varepsilon_t(s) & \text{for } s\in j(Q)
        \end{cases}
        =
        \varepsilon_t(s),
    \]
    so that $t=jk$ in $\cSet$.
    The second case is similar.
\end{proof}

The final proposition in this section is a consequence of Proposition~\ref{p:ConcatPullback} and Lemma \ref{l:ConcatFactor}.

\begin{proposition}
  \label{p:ColimitPPreservation}
  The restricted Yoneda functor $\yoneda^\cPos_\CSet$ preserves d-pushouts.
\end{proposition}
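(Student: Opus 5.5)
Fix a d-pushout square \eqref{e:cposet-concatenation} in $\cPos$. Applying $\yoneda^\cPos_\CSet$ gives a commutative square of presheaves on $\CSet$, with value $\cPos(U,-)$ at each c-set $U$. Colimits in $\psh\CSet$ are computed objectwise, so the plan is to show that for every c-set $U$ the square of sets
\[
\begin{tikzcd}
\cPos(U,R)\ar[r,"g_*"]\ar[d,"f_*" swap] & \cPos(U,Q)\ar[d,"j_*"]\\
\cPos(U,P)\ar[r,"i_*" swap] & \cPos(U,S)
\end{tikzcd}
\]
is a pushout in $\Set$. The image square is a d-square because the restricted Yoneda functor is a d-functor, with the d-structure inherited as in the paper. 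I will verify the pushout property using the criterion of Lemma~\ref{l:inj-pushout-crit} for commutative squares of injections.

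\emph{Injectivity.} The maps $f_*,g_*,i_*,j_*$ are injective because all c-maps are monic (Lemma~\ref{l:cset-properties}.\ref{i:monic}), and composition in $\cPos$ is determined by the underlying c-maps.

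\emph{Joint surjectivity.} This is exactly Lemma~\ref{l:ConcatFactor}: every c-poset map $t\colon U\to S$ from a c-set $U$ factors through $i$ or through $j$. Hence $\cPos(U,S)=i_*(\cPos(U,P))\cup j_*(\cPos(U,Q))$.

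\emph{Intersection.} I must show that the induced map $\cPos(U,R)\to i_*\cPos(U,P)\cap j_*\cPos(U,Q)$ is a bijection. Injectivity is already known. For surjectivity, suppose $t=ix=jy$ with $x\colon U\to P$ and $y\colon U\to Q$. Then $(x,y)$ is a cone over the cospan $P\forrightarrow{i}S\backleftarrow{j}Q$. By Proposition~\ref{p:ConcatPullback} the square is a pullback in $\cPos$, so there is $w\colon U\to R$ with $fw=x$ and $gw=y$. Therefore $t$ lies in the image of $\cPos(U,R)$. By Lemma~\ref{l:inj-pushout-crit} the square is a pushout in $\Set$ for each $U$. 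Naturality in $U$ is automatic, so the image square is a pushout in $\psh\CSet$, and hence a d-pushout.

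The only substantive input is the joint surjectivity, already isolated in Lemma~\ref{l:ConcatFactor}. The one point needing care is that the square in Lemma~\ref{l:inj-pushout-crit} consists of injections; this holds here by monicity. The rest is bookkeeping.
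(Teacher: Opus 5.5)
Your proof is correct and follows the paper's own argument. You reduce to an objectwise statement, get injectivity of all four maps from monicity of c-maps, use Proposition~\ref{p:ConcatPullback} for the intersection condition, and use Lemma~\ref{l:ConcatFactor} for joint surjectivity. The only quibble is that the paper puts no d-structure on $\psh\CSet$, so the claim (as used in Definition~\ref{d:dCocompletion}(4)) is just that the image square is a pushout, which you establish.
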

\begin{proof}
  Consider a d-pushout of the form \eqref{e:cposet-concatenation}
in $\cPos$. Colimits of presheaves are computed
pointwise, so we need to show that the diagram
 \begin{equation*}
 \begin{tikzcd}
   R^\wedge_\CSet[U]
   \ar[r,"{g^\wedge_\CSet[U]}"]
   \ar[d,"{{f^\wedge_\CSet}[U]}",swap]
   &	{Q^\wedge_\CSet}[U]
   \ar[d,"{{j^\wedge_\CSet}[U]}"]
   \\
   {P^\wedge_\CSet}[U]
   \ar[r,"{{i^\wedge_\CSet}[U]}", swap]
   &	{S^\wedge_\CSet}[U]
 \end{tikzcd}
 \quad
 =
 \quad
 \begin{tikzcd}
   \cPos(U,R)
   \ar[r,"{g\circ(-)}"]
   \ar[d,"{f\circ(-)}",swap]
   &
   \cPos(U,Q)
   \ar[d,"{j\circ(-)}"]
   \\
   \cPos(U,P)
   \ar[r,"{i\circ(-)}", swap]
   &	
   \cPos(U,S)
 \end{tikzcd}
\end{equation*}
is a pushout in $\Set$ for every $U\in\Ob(\CSet)$. The diagram on the
right is a pullback  by Proposition \ref{p:ConcatPullback}, and all
functions in the diagram are injective as c-maps are monic by
Lemma~\ref{l:cset-properties}.\ref{i:monic}. Finally, we have that
  $\im(i^\wedge_\CSet)\cup \im(j^\wedge_\CSet)=S^\wedge_\CSet[U]$
by Lemma \ref{l:ConcatFactor}, so that the diagram is a pushout.
\end{proof}

\section{Directed cocompletions}
\label{s:d-cocompletions}

The category $\CSet$ has all d-pushouts, but they are not universal -- the Yoneda functor $\yoneda_\CSet$ does not preserve them. In this section, we show how any d-category $\cC$ can be extended to a d-category $\cP$ with universal d-pushouts which ideally is smaller than $\psh\cC$. We call such an extension a directed cocompletion, or d-cocompletion for short. We prove that the inclusion $\CSet\subseteq\cPos$ is an example of a d-cocompletion.
Then we show how a d-cocompletion of a given d-category produces d-cocompletions of its comma categories and subcategories.

\begin{definition} 
\label{d:d-pushout-generation}
    A d-category $\cP$ is \emph{d-generated} by a class $\mathscr{A}\subseteq \Ob(\cP)$ if $\Ob(\cP)=\mathscr{P}$ for every class $\mathscr{A}\subseteq\mathscr{P}\subseteq \Ob(\cP)$ that is closed under d-pushouts.
\end{definition}

\begin{definition} 
\label{d:dCocompletion}
    A d-category $\cP$ is a \emph{d-cocompletion} (\emph{by d-pushouts}) of its d-subcategory $\cC$ if
    \begin{enumerate}
        \item  $\cC$ is a full d-subcategory of $\cP$ (that is, $\cC\hookrightarrow \cP$),
        \item $\cP$ is d-cocomplete,
        \item $\cP$ is d-generated by $\Ob(\cC)$,
        \item the restricted Yoneda functor $\yoneda^\cP_\cC\colon \cP\to\psh\cC$ sends d-pushouts to pushouts.
    \end{enumerate}
    We also refer to the inclusion $\cC\hookrightarrow \cP$ as a d-cocompletion.
\end{definition}
While the above definition generalises to arbitrary fully faithful d-functors, we prefer the version given for its simplicity.

\begin{lemma}
\label{l:cpos-generated-by-xi}
The d-category $\cPos$ is d-generated by $\Ob(\CSet)$.
\end{lemma}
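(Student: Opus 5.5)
We argue by induction on the number of comparable pairs of a c-poset, i.e.\ on $|{<_P}|$. Fix a class $\mathscr{P}$ with $\Ob(\CSet)\subseteq\mathscr{P}\subseteq\Ob(\cPos)$ that is closed under d-pushouts. We must show that every c-poset $P$ lies in $\mathscr{P}$. It is understood that $\mathscr{P}$ is isomorphism-closed, since a d-pushout is determined only up to isomorphism and closure under d-pushouts includes isomorphic copies. Since c-posets are finite, the relation $<_P$ is a finite set and the induction is well founded.

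\emph{Base case.} If $<_P$ is empty, then $P$ is a c-set regarded as a c-poset with empty precedence, so $P\in\Ob(\CSet)\subseteq\mathscr{P}$.

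\emph{Inductive step.} Suppose $<_P$ is nonempty. Choose $p<p'$ with $p$ minimal and $p'$ maximal in $P$. Such a pair exists: take any comparable pair $a<b$, descend from $a$ to a minimal element $p\le a$, and ascend from $b$ to a maximal element $p'\ge b$; this uses finiteness, and transitivity gives $p<p'$. Lemma~\ref{l:finite-cpos-generation} then exhibits $P$ as the d-pushout of the d-span
\[
P\setminus\{p'\}\backleftarrow{} P\setminus\{p,p'\}\forrightarrow{} P\setminus\{p\}.
\]
The three c-posets in this span are subsets of $P$ carrying the restricted precedence. Each omits at least one of $p,p'$, so each has strictly fewer comparable pairs than $P$: the pair $(p,p')$ is lost. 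By the induction hypothesis all three lie in $\mathscr{P}$, and closure under d-pushouts gives $P\in\mathscr{P}$.

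The step that needs the most care is confirming that the square in Lemma~\ref{l:finite-cpos-generation} really is a d-pushout, i.e.\ an instance of \eqref{e:cposet-concatenation}. I would check this directly against \eqref{e:concat-precedence}. The only pairs missing from the union of the two restricted precedences are those between $i(P\setminus f(R))=\{p\}$ and $j(Q\setminus g(R))=\{p'\}$, and the product term of \eqref{e:concat-precedence} contributes exactly $(p,p')$. So the glued precedence is precisely $<_P$. The lemma asserts this as given, so the remaining argument is just the well-founded induction above.
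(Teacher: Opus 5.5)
Your proof is correct and follows the paper's argument. You use the same decomposition of a non-discrete $P$ via Lemma~\ref{l:finite-cpos-generation}, with $p$ minimal, $p'$ maximal and $p<p'$, and then closure of $\mathscr{P}$ under d-pushouts. The only difference is the induction measure: the paper inducts on $|P|$ rather than on $|{<_P}|$, but the three pieces are strictly smaller in both measures, so either works.
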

\begin{proof} 
Let $\Ob(\CSet)\subseteq \mathscr{P}\subseteq  \Ob(\cPos)$ be a class closed under d-pushouts. We show that every c-poset $P$ belongs to $\mathscr{P}$ by induction on the size of $P$. Clearly $\emptyset\in\mathscr{P}$. Suppose $P\neq\emptyset$ and all c-posets of size less than $|P|$ are in $\mathscr{P}$. If $P$ is discrete, then $P\in \Ob(\CSet)\subseteq \mathscr{P}$. Otherwise there are $p,p'\in P$ such that $p$ is minimal, $p'$ is maximal and $p<p'$. By Lemma \ref{l:finite-cpos-generation}, $P$ is a d-pushout of smaller c-posets which are in $\mathscr{P}$ by the induction hypothesis. Hence $P\in\mathscr{P}$.
\end{proof}

\begin{proposition}
\label{p:concset-cocompletion}
The category $\cPos$ is a d-cocompletion of $\CSet$.
\end{proposition}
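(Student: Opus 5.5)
The plan is to check the four conditions of Definition~\ref{d:dCocompletion} for the inclusion $\CSet\hookrightarrow\cPos$ one at a time. Each of them has essentially been established in the preceding results, so the proof should amount to assembling them.

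For condition (1), I will use Lemma~\ref{l:xi-cpos-functors}, which says that $\CSet$ is a full d-subcategory of $\cPos$. I also need that the d-structures agree on c-sets. A c-map between c-sets with empty precedence is automatically a precedence embedding. Hence a c-poset formorphism (backmorphism) between c-sets is exactly a c-set formorphism (backmorphism), so $\cPos^\alpha(V,U)=\CSet^\alpha(V,U)$ for c-sets $U,V$.

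Condition (2) is Theorem~\ref{t:cpos-has-proper-pushouts}(2). Condition (3) is Lemma~\ref{l:cpos-generated-by-xi}. Condition (4) is Proposition~\ref{p:ColimitPPreservation}, which states that $\yoneda^\cPos_\CSet$ preserves d-pushouts. Since presheaf categories carry no d-structure here, this is to be read as sending d-pushouts to pushouts, and that is exactly what its proof establishes pointwise.

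I do not expect a real obstacle, since all the substantive work has already been done. The one point needing care is making sure the reading of condition (4) matches the statement of Proposition~\ref{p:ColimitPPreservation}, namely that it produces pushouts in $\psh\CSet$. I will confirm this by recalling that its proof shows that each component square $\cPos(U,-)$ is a pushout in $\Set$, and that colimits in $\psh\CSet$ are computed pointwise.
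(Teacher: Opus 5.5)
Your proposal is correct and matches the paper's proof, which checks conditions (1)--(4) by citing Lemma~\ref{l:xi-cpos-functors}, Theorem~\ref{t:cpos-has-proper-pushouts}, Lemma~\ref{l:cpos-generated-by-xi} and Proposition~\ref{p:ColimitPPreservation}, respectively. Your extra verification that the d-structures agree on c-sets is harmless but redundant, since the paper's notion of full d-subcategory already includes it; your reading of condition (4) as ``sends d-pushouts to pushouts in $\psh\CSet$'' is the right one.
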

\begin{proof}
Conditions (1)--(4) of Definition~\ref{d:dCocompletion} have been checked in Lemma \ref{l:xi-cpos-functors}, Theorem \ref{t:cpos-has-proper-pushouts}, Lemma \ref{l:cpos-generated-by-xi} and Proposition \ref{p:ColimitPPreservation}, respectively.
\end{proof}

Next, we present a crucial property of d-cocompletions.

\begin{proposition}
\label{p:dCocoImpliesFaithful}
If $\cC\hookrightarrow\cP$ is a d-cocompletion, then
  $\yoneda^\cP_\cC$ is fully faithful. 
\end{proposition}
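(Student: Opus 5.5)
We show by a generation argument that for every $P\in\Ob(\cP)$ the map
\[
\Phi_{P,M}\colon \cP(P,M)\to \psh\cC(P^\wedge_\cC, M^\wedge_\cC),\qquad h\mapsto h^\wedge_\cC=h\circ(-),
\]
is a bijection for all $M\in\Ob(\cP)$. Let $\mathscr{P}\subseteq\Ob(\cP)$ be the class of those $P$ for which $\Phi_{P,M}$ is bijective for every $M$. By condition (3) of Definition~\ref{d:dCocompletion}, it suffices to show two things: that $\Ob(\cC)\subseteq\mathscr{P}$, and that $\mathscr{P}$ is closed under d-pushouts.

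\emph{Base case.} Let $U\in\Ob(\cC)$. Since $\cC$ is a full subcategory, $U^\wedge_\cC$ is the representable presheaf $\cC(-,U)=\cP(-,U)|_\cC$. By the Yoneda lemma,
\[
\psh\cC(U^\wedge_\cC,M^\wedge_\cC)\cong M^\wedge_\cC[U]=\cP(U,M).
\]
Tracing $h\in\cP(U,M)$ through this isomorphism, it is sent to $h\circ\id_U=h$, so the composite with $\Phi_{U,M}$ is the identity. Hence $\Phi_{U,M}$ is bijective and $U\in\mathscr{P}$.

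\emph{Closure under d-pushouts.} Suppose $S$ is a d-pushout of a d-span $P\backleftarrow{f}R\forrightarrow{g}Q$ with $P,Q,R\in\mathscr{P}$, and write $i\colon P\to S$, $j\colon Q\to S$ for the structure maps. Fix $M$.
\begin{itemize}
\item Since $S$ is a pushout in $\cP$, the set $\cP(S,M)$ is the pullback of $\cP(P,M)\to\cP(R,M)\leftarrow\cP(Q,M)$.
\item By condition (4), the square $R^\wedge_\cC, P^\wedge_\cC, Q^\wedge_\cC, S^\wedge_\cC$ is a pushout in $\psh\cC$. Hence $\psh\cC(S^\wedge_\cC,M^\wedge_\cC)$ is the pullback of the corresponding hom sets out of $P^\wedge_\cC$, $Q^\wedge_\cC$ and $R^\wedge_\cC$.
\end{itemize}
The maps $\Phi_{-,M}$ form a map of these cospans of sets, by functoriality of $\yoneda^\cP_\cC$ (naturality with respect to precomposition by $f,g,i,j$). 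They are bijections at $P$, $Q$ and $R$, so the induced map of pullbacks is a bijection. We must still check that this induced map is $\Phi_{S,M}$. This follows because $\Phi_{S,M}(h)$ restricts along $i^\wedge_\cC$ and $j^\wedge_\cC$ to $\Phi_{P,M}(hi)$ and $\Phi_{Q,M}(hj)$. Therefore $S\in\mathscr{P}$.

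\emph{Main obstacle.} The step needing care is the identification of the two universal properties. One must note that $\cP$-hom sets out of a d-pushout are pullbacks for \emph{all} targets $M$, because a d-pushout is an honest pushout in $\cP$ (not merely among d-squares). One must also make sure the comparison map commutes with the pullback structure maps. Both reduce to functoriality of $\yoneda^\cP_\cC$, so I expect no real difficulty. Finally, since $\mathscr{P}$ contains $\Ob(\cC)$ and is closed under d-pushouts, d-generation gives $\mathscr{P}=\Ob(\cP)$, and hence $\yoneda^\cP_\cC$ is fully faithful.
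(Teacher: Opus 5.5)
Your proposal is correct and follows essentially the same route as the paper: you take the class of objects $P$ for which $\cP(P,M)\to\psh\cC(P^\wedge_\cC,M^\wedge_\cC)$ is bijective for all $M$, get $\Ob(\cC)$ into it via the Yoneda lemma, and close it under d-pushouts by comparing the two pullbacks of hom-sets using condition (4). Your explicit check that the induced bijection of pullbacks is the map $\Phi_{S,M}$ itself, via functoriality of $\yoneda^\cP_\cC$, is a detail the paper's chain of isomorphisms leaves implicit.
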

\begin{proof}
	Let $\mathscr{P}\subseteq \Ob(\cP)$ be a class of objects $P$ for which
	$
		\yoneda^\cP_\cC\colon\cP(P,Q)\to \psh\cC(P^\wedge_\cC,Q^\wedge_\cC)
	$
	is bijective for all $Q\in\Ob(\cP)$.
   If $U\in \Ob(\cC)$, then 
   \[
    \cP(U,Q)
    \cong
    Q^\wedge_{\cC}[U]
    \cong
    \psh\cC(U^\wedge_\cC,Q^\wedge_\cC)
  \]
  and thus $\Ob(\cC)\subseteq \mathscr{P}$.
  Next, suppose $P,Q,R\in\mathscr{P}$ and
  $
    S=\colim\big(P\backleftarrow{f} R \forrightarrow{g} Q\big).
  $
  Then, for all  $M\in\cP$,
  \begin{align*}
    \cP(S,M)
    &\cong
    \lim\Big(
      \cP(P,M)\xrightarrow{(-)\circ f}\cP(R,M)\xleftarrow{(-)\circ g}\cP(Q,M)
    \Big)
    \\
    &\cong
    \lim\Big(
      \psh\cC(P^\wedge_\cC,M^\wedge_\cC)
	      \xrightarrow{(-)\circ f^\wedge_\cC}
      \psh\cC(R^\wedge_\cC,M^\wedge_\cC)
      	\xleftarrow{(-)\circ g^\wedge_\cC}
      \psh\cC(Q^\wedge_\cC,M^\wedge_\cC)
    \Big)
    \\
    &
\cong
    \psh\cC\Big(
      \colim\big(P^\wedge_\cC\xleftarrow{f^\wedge_\cC}R^\wedge_\cC \xrightarrow{g^\wedge_\cC}Q^\wedge_\cC\big),
      M^\wedge_\cC
    \Big)
    \\
    &
    \cong
    \psh\cC(S^\wedge_\cC, M^\wedge_\cC).
  \end{align*}
   The first and third step use the universal property of pushouts. The second follows from the assumption that $P,Q,R\in\mathscr{P}$ and
  the fourth from \ref{d:dCocompletion}(4). Thus $S\in\mathscr{P}$. Finally, \ref{d:dCocompletion}(3) implies that $\mathscr{P}=\Ob(\cP)$.
\end{proof}

We now describe a situation where d-cocompletions extend to comma categories. We start with two technical facts.

\begin{proposition}
\label{p:slice-creates-pushouts}
Let $\cP$ be a d-category, $\cE$ a category, $F\colon \cP\to\cE$ a functor that sends d-pushouts to pushouts and $L\in \Ob(\cE)$. Then the projection functor $\ProjFun\colon\cP\slice_FL\to \cP$ creates d-pushouts. (Recall that $\cP\slice_FL$ has d-structure reflected by $\ProjFun$.)
\end{proposition}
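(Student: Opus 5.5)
We have to unfold the definition of creation in \eqref{e:pushout-creation}, with $\cD=\cP\slice_FL$ and the functor $\ProjFun$. So we start from a d-span $(P,\lambda)\backleftarrow{f}(R,\rho)\forrightarrow{g}(Q,\kappa)$ in $\cP\slice_FL$ together with a d-pushout square
\[
\begin{tikzcd}
R\ar[r,"g","+" swap]\ar[d,"f" swap,"-"] & Q\ar[d,"j","-" swap]\\
P\ar[r,"i" swap,"+"] & S
\end{tikzcd}
\]
in $\cP$ extending its image under $\ProjFun$. The span is a d-span in the comma category exactly when $f,g$ are back- and formorphisms in $\cP$, because the d-structure is reflected. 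Also, $\lambda F(f)=\rho=\kappa F(g)$ holds since $f,g$ are morphisms over $L$.

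\emph{Existence and uniqueness of the lift.} By hypothesis $F$ maps the d-pushout to a pushout $F(S)=F(P)\sqcup_{F(R)}F(Q)$ in $\cE$. The compatible pair $\lambda,\kappa$ therefore induces a unique $\sigma\colon F(S)\to L$ with $\sigma F(i)=\lambda$ and $\sigma F(j)=\kappa$. With this $\sigma$, the maps $i,j$ become morphisms $(P,\lambda)\to(S,\sigma)$ and $(Q,\kappa)\to(S,\sigma)$. They are a formorphism and a backmorphism because the d-structure is reflected, and the square commutes because $\ProjFun$ is faithful. So we obtain a d-square $Y'$ lifting the given one.

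Uniqueness: any lift has underlying square equal to the given one. Its apex therefore has the form $(S,\sigma')$ with $\sigma'F(i)=\lambda$ and $\sigma'F(j)=\kappa$, so $\sigma'=\sigma$ by the uniqueness part of the pushout property of $F(S)$. Morphisms in the comma category are determined by their images under $\ProjFun$, so the lift is unique.

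\emph{The lift is a pushout in $\cP\slice_FL$.} Let $u\colon(P,\lambda)\to(M,\mu)$ and $v\colon(Q,\kappa)\to(M,\mu)$ satisfy $uf=vg$. Since the square is a pushout in $\cP$, there is a unique $t\colon S\to M$ with $ti=u$ and $tj=v$. It remains to check that $t$ is a morphism over $L$, i.e.\ that $\mu F(t)=\sigma$. We have
\[
\mu F(t)F(i)=\mu F(u)=\lambda
\qquad\text{and}\qquad
\mu F(t)F(j)=\mu F(v)=\kappa,
\]
so $\mu F(t)=\sigma$ by the uniqueness part of the pushout $F(S)$. Uniqueness of $t$ in the comma category follows from its uniqueness in $\cP$. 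So $Y'$ is a d-pushout.

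The only step needing care is the repeated use of the fact that $F$ sends the square to a pushout. It is used twice: once to define $\sigma$, and once to verify the factorization condition, which is where the hypothesis on $F$ is essential. The rest is routine bookkeeping of the reflected d-structure and the faithfulness of $\ProjFun$.
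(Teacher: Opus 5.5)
Your proposal is correct and follows essentially the same route as the paper: the pushout property of $F(S)$ in $\cE$ yields the unique $\sigma\colon F(S)\to L$, and the same uniqueness property then shows that the mediating map $t$ obtained in $\cP$ is a morphism over $L$. You also spell out explicitly why the lift is unique, which the paper leaves implicit.
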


\begin{proof}
Consider the d-span $\big((P,\lambda)\backleftarrow{f}(R,\nu)\forrightarrow{g}(Q,\mu)\big)$ in $\cP\slice_F L$ in the diagram on the left in Figure~\ref{f:slice-pushout}. 

\begin{figure}
    \centering
   \begin{tikzcd}[
    row sep=.7cm, 
    column sep=.2cm, 
    cells={nodes={minimum width=1.2cm, minimum height=0.8cm, inner sep=0pt}
    }
]
    &
    (R,\nu)\ar[dl, "-", "f" swap]\ar[dr, "g", "+" swap] &&&&
    R\ar[dl, "-", "f" swap]\ar[dr, "g", "+" swap]  &&&&
    F(R)\ar[dl,"F(f)"' {xshift=4pt}]\ar[dr,"F(g)" {xshift=-4pt}]\ar[ddrrrr, bend left=40, "\nu", color=blue]
    \\
    (P,\lambda)\ar[dr, dashed, "+", "i" swap]\ar[ddr, bend right=30, dotted, swap, "x"] &&
    (Q,\mu)\ar[dl, dashed, "j", "-" swap]\ar[ddl, dotted, bend left=30, "y"]&&
    P\ar[dr, "+", "i" swap, dashed]\ar[ddr, dotted, bend right=30, swap, "x"] &&
    Q\ar[dl, "j", "-" swap, dashed]\ar[ddl, dotted, bend left=30, "y"] &&
    F(P)\ar[dr,"F(i)"' {xshift=4pt}, dashed]\ar[ddr, bend right=30, "F(x)"' {xshift=4pt},dotted]  &&
    F(Q)\ar[dl,"F(j)" {xshift=-4pt}, dashed]\ar[ddl, bend left=30, "F(y)" {xshift=-4pt}, pos=0.65, dotted] \ar[drrr, bend left=10, "\mu", color=blue]
    \\
    &
    (S,\xi)\ar[d, dotted, swap, "z"]&&&&
    S\ar[d, dotted, swap, "z"] &&&&
    F(S)\ar[d, "F(z)"' {xshift=2pt, yshift=2pt}, dotted] \ar[rrrr, "\xi", pos=0.2, bend right=10, dashed, color=blue]&&&&
    L \ar[from=ulllll, bend left=10, "\lambda", pos=0.3, color=blue]
    \\
    &
    (T,\tau) &&&&
    T &&&&
    F(T)\ar[urrrr, bend right=20, "\tau"', color=blue, dotted]
    \\
    &
    \cP\slice_F L\ar[rrrr, "\ProjFun"] &&&&
    \cP\ar[rrrr, "F"]&&&&
    \cE
\end{tikzcd}
    \caption{d-pushout diagrams for the d-pushout construction in the comma category in Proposition~\ref{p:slice-creates-pushouts}.}
    \label{f:slice-pushout}
\end{figure}

For d-pushout creation, consider further a d-pushout  $S=\colim\big(P\backleftarrow{f} R\forrightarrow{g} Q\big)$ in $\cP$, as in the square in the central diagram of Figure~\ref{f:slice-pushout}. Its image under $F$, in the diagram in $\cE$ on the right in this Figure, is a pushout by assumption. As the morphisms $\lambda$, $\mu$ and $\nu$ form a cocone, there is a unique morphism $\xi\colon F(S)\to L$ in $\cE$ such that 
\begin{equation}
\label{e:xi-unique}
  \xi\circ F(i)= \lambda\qquad\text{and}\qquad \xi\circ F(j)=\mu. 
\end{equation} 
Thus $i\colon (P,\lambda)\to (S,\xi)$ and $j\colon (Q,\mu)\to(S,\xi)$ are morphisms in $\cP\slice_FL$, completing the d-square in the left diagram, which commutes because it commutes in $\cP$.
It remains to show that this d-square is a pushout, that is, $(S,\xi) = \colim\big((P,\lambda)\backleftarrow{f}(R,\nu)\forrightarrow{g}(Q,\mu)\big)$ in $\cP\slice_F L$.

Assume that the morphisms $x\colon (P,\lambda)\to (T,\tau)$ and $y\colon (Q,\mu)\to (T,\tau)$ in $\cP\slice_F L$, for some $(T,\tau)\in \Ob(\cP\slice_F L)$ with $\tau\colon F(T)\to L$, form another commuting square, as indicated above. As morphisms in $\cP\slice_F L$, they satisfy 
\begin{equation}
\label{e:xy-comma}
  \lambda = \tau\circ F(x)\qquad\text{and}\qquad \mu = \tau\circ F(y).  
\end{equation}
 Since $x\colon P\to T$ and $y\colon Q\to T$ form a cocone for the d-cospan $(P\backleftarrow{f}R\forrightarrow{g}Q)$, there is thus a unique $z\colon S\to T$ in $\cP$ for which $x=zi$ and $y=zj$. Combining \eqref{e:xi-unique} and \eqref{e:xy-comma} with these identities, $\xi\circ F(i)=\tau\circ F(z)\circ F(i)$ and  $\xi\circ F(j)=\tau\circ F(z)\circ F(j)$, so that $\tau\circ F(z)=\xi$, since the solid square at the right is a pushout.
 Thus $z\colon (S,\xi)\to (T,\tau)$ is indeed a morphism in $\cP\slice_F L$ and in fact the unique morphism for which the diagram in $\cP\slice_F L$ commutes.
\end{proof}

\begin{lemma}
\label{l:slice-pushout-reflection}
    Let $F\colon\cC\to\cE$ be a functor and $L\in\Ob(\cE)$.
    Then the pushforward functor 
    \[
        \ProjFun_*\colon\psh{\cC\slice_FL}\to \psh\cC
    \]
    reflects pushouts.
\end{lemma}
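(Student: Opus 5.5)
The plan is to identify $\psh{\cC\slice_FL}$ with a slice category of $\psh\cC$, under which $\ProjFun_*$ becomes the forgetful functor of a slice. Forgetful functors out of slices create (hence reflect) all colimits, pushouts in particular, so this would settle the claim.

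\textbf{Step 1: the object $L^\sharp$.} Define a presheaf $L^\sharp\in\psh\cC$ by $L^\sharp[U]=\cE(F(U),L)$, with restriction along $f\colon V\to U$ given by precomposition with $F(f)$. This is the restricted Yoneda functor $\yoneda(F)$ applied to $L$, in the sense of the preliminaries.

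\textbf{Step 2: the equivalence.} The category of elements of $L^\sharp$ is, by definition,
\[
\cell{L^\sharp}=\cC\slice_{\yoneda_\cC}L^\sharp,
\]
whose objects are pairs $(U,\lambda)$ with $\lambda\in\cE(F(U),L)$. Its morphisms $(V,\mu)\to(U,\lambda)$ are maps $f\colon V\to U$ with $\lambda\circ F(f)=\mu$. Thus $\cell{L^\sharp}$ is literally $\cC\slice_FL$. The standard equivalence
\[
\psh{\cell X}\simeq\psh\cC\slice X
\]
then gives $\Phi\colon\psh{\cC\slice_FL}\simeq\psh\cC\slice L^\sharp$, where $\Phi(Y)=\ProjFun_*(Y)$ equipped with a canonical map to $L^\sharp$.

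\textbf{Step 3: the triangle.} Next I would check that the composite $\Forget\circ\Phi$, with $\Forget\colon\psh\cC\slice L^\sharp\to\psh\cC$ the forgetful functor, is isomorphic to $\ProjFun_*$. Both functors are cocontinuous: $\Phi$ because it is an equivalence, and $\Forget$ because slice forgetful functors create colimits. Both send the representable $(U,\lambda)^\wedge$ to $U^\wedge_\cC$. By uniqueness of cocontinuous extensions, as recalled in the preliminaries for the pushforward, they therefore agree up to isomorphism.

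Concretely, $\ProjFun_*(Y)[U]=\coprod_{\lambda}Y[(U,\lambda)]$, which one can also verify directly via the coend formula. The structure map to $L^\sharp$ sends the $\lambda$-summand to $\lambda$.

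\textbf{Step 4: reflection.} Take a commutative square in $\psh{\cC\slice_FL}$ whose image under $\ProjFun_*$ is a pushout. Apply $\Phi$ to obtain a square in $\psh\cC\slice L^\sharp$ whose image under $\Forget$ is a pushout. Since $\Forget$ reflects pushouts, this square is a pushout in the slice, and since $\Phi$ is an equivalence, the original square is a pushout.

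\textbf{Alternative route.} One can avoid the equivalence by working pointwise. By Step 3 the functor $\ProjFun_*$ sends a square of presheaves on $\cC\slice_FL$ to its componentwise coproduct over $\lambda\in\cE(F(U),L)$, and the maps respect the $\lambda$-decomposition. A coproduct-decomposed square of sets, taken over a fixed index set, is a pushout if and only if each summand square is a pushout. Since colimits of presheaves are pointwise, this again gives the claim.

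\textbf{Main obstacle.} The main care is needed in Step 3: identifying $\ProjFun_*$ explicitly as the summed presheaf. The pointwise description is what makes reflection transparent, so I would establish it directly via the left Kan extension formula. For $\ProjFun$ this is easy, because $\ProjFun$ is a discrete fibration, which makes the colimit over the comma category $\ProjFun\slice U$ reduce to a coproduct over $\lambda$.
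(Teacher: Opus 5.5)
Your argument is correct, and your ``alternative route'' is essentially the paper's proof. The paper evaluates at each $U\in\Ob(\cC)$ and uses the decomposition $\ProjFun_*(X)[U]\cong\coprod_{\lambda\in\cE(F(U),L)}X[(U,\lambda)]$, which it simply asserts; you justify it via the discrete-fibration property of $\ProjFun$ in the Kan extension formula. It then concludes because a summand-respecting coproduct of squares in $\Set$ is a pushout if and only if every summand is.

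Your main route is genuinely different and more conceptual. You recognise $\cC\slice_FL$ as the category of elements of $L^\sharp=\yoneda(F)(L)$ and transport along $\psh{\cell{L^\sharp}}\simeq\psh\cC\slice L^\sharp$. This reduces the claim to the fact that forgetful functors out of slice categories reflect (indeed create) colimits. What this buys is a stronger, structural statement: up to equivalence $\ProjFun_*$ is a slice projection, so it creates all colimits, not just pushouts. It also ties the lemma to the restricted Yoneda functor already used elsewhere in the paper.

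The price is that your Step 3 is not free. To know that the standard equivalence followed by $\Forget$ really is $\ProjFun_*$, you have two options:
\begin{itemize}
\item a density/uniqueness argument, which needs a natural isomorphism $\Forget\circ\Phi\circ\yoneda_{\cC\slice_FL}\cong\yoneda_\cC\circ\ProjFun$ (not just agreement on objects) and a smallness hypothesis to invoke density;
\item the explicit fibrewise-coproduct formula.
\end{itemize}
The second option is exactly the ingredient the paper's short proof uses directly. So the conceptual route packages that computation rather than avoiding it, while the paper's version stays entirely elementary and pointwise.
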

\begin{proof}
    Let $D\colon\lozenge\to\psh{\cC\slice_FL}$ be a diagram such that $\ProjFun_*\circ D$ is a pushout. Then the diagram $(\ProjFun_*\circ D)[U]$ is a pushout for every $U\in\Ob(\cC)$. For every presheaf $X\in\psh{\cC\slice_FL}$ there is the decomposition 
    \[
        \ProjFun_*(X)[U]\cong \coprod_{\lambda\in \cE(F(U),L)} X[(U,\lambda)].
    \]
    Thus $(\ProjFun_*\circ D)[U]\cong \coprod_{\lambda\in \cE(F(U),L)} D[(U,\lambda)]$. A coproduct of pushout diagrams in $\Set$ is a pushout diagram if and only if  all summands are pushouts. Thus $D[(U,\lambda)]$ is a pushout for all $(U,\lambda)\in\Ob(\cC\slice_F L)$, which implies that $D$ is a pushout.
\end{proof}

\begin{theorem}
\label{t:SlicedCocompletion}
    Let $\cC\hookrightarrow \cP$ be a d-cocompletion, $F\colon \cP\to\cE$ a functor sending d-pushouts to pushouts, and $L\in\Ob(\cE)$.
    Then the inclusion $\cC\slice_{F|_{\cC}} L \hookrightarrow \cP\slice_F L$ is a d-cocompletion.
\end{theorem}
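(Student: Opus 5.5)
We verify the four conditions of Definition~\ref{d:dCocompletion} for $\cC\slice_{F|_\cC}L\hookrightarrow\cP\slice_FL$, writing $\cC_L$ and $\cP_L$ for these comma categories.

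\emph{Condition (1).} Objects of $\cC_L$ are pairs $(U,\lambda)$ with $U\in\Ob(\cC)$, and these are objects of $\cP_L$. Morphisms in both comma categories are morphisms of $\cP$ (equivalently of $\cC$, by fullness) satisfying the same compatibility equation. The d-structures are both reflected from $\cP$ via the projections, and $\cC$ is a full d-subcategory of $\cP$. Hence $\cC_L$ is a full d-subcategory of $\cP_L$.

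\emph{Condition (2).} By Proposition~\ref{p:slice-creates-pushouts}, $\ProjFun\colon\cP_L\to\cP$ creates d-pushouts. Given a d-span in $\cP_L$, its projection is a d-span in $\cP$, which has a d-pushout because $\cP$ is d-cocomplete. This d-pushout lifts to a d-pushout in $\cP_L$. So $\cP_L$ is d-cocomplete, and $\ProjFun$ preserves and reflects d-pushouts.

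\emph{Condition (3).} Let $\mathscr{P}\supseteq\Ob(\cC_L)$ be closed under d-pushouts in $\cP_L$. Consider the class $\mathscr{Q}$ of $P\in\Ob(\cP)$ such that $(P,\lambda)\in\mathscr{P}$ for every $\lambda\colon F(P)\to L$. Then $\Ob(\cC)\subseteq\mathscr{Q}$. We show $\mathscr{Q}$ is closed under d-pushouts. Let $S$ be the d-pushout of $P\backleftarrow{f}R\forrightarrow{g}Q$ with $P,Q,R\in\mathscr{Q}$, and take any $\xi\colon F(S)\to L$. Set $\lambda=\xi F(i)$, $\mu=\xi F(j)$ and $\nu=\lambda F(f)=\mu F(g)$. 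Then $f,g$ form a d-span in $\cP_L$ whose legs are in $\mathscr{P}$. Its creation gives $(S,\xi')$ with $\xi'$ the unique map satisfying \eqref{e:xi-unique}. Since $F(S)$ is a pushout, $\xi'=\xi$, so $(S,\xi)\in\mathscr{P}$. Therefore $\mathscr{Q}=\Ob(\cP)$, and hence $\mathscr{P}=\Ob(\cP_L)$.

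\emph{Condition (4).} This is the main obstacle. We must relate $\yoneda^{\cP_L}_{\cC_L}$ to $\yoneda^\cP_\cC$ and then use Lemma~\ref{l:slice-pushout-reflection}. The plan is to prove a natural isomorphism
\[
\ProjFun_*\circ\yoneda^{\cP_L}_{\cC_L}\cong\yoneda^\cP_\cC\circ\ProjFun
\]
of functors $\cP_L\to\psh\cC$.

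The pushforward along the projection $\ProjFun\colon\cC_L\to\cC$ satisfies the decomposition used in the proof of Lemma~\ref{l:slice-pushout-reflection}:
\[
\ProjFun_*(X)[U]\cong\coprod_{\lambda}X[(U,\lambda)].
\]
Applied to $X=(P,\xi)^\wedge_{\cC_L}$, this gives
\[
\coprod_{\lambda\colon F(U)\to L}\cP_L\big((U,\lambda),(P,\xi)\big).
\]
A morphism $h\colon U\to P$ lies in $\cP_L((U,\lambda),(P,\xi))$ exactly when $\lambda=\xi F(h)$. So the coproduct is in bijection with $\cP(U,P)=P^\wedge_\cC[U]$, naturally in $U$ and in $(P,\xi)$.

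Then take a d-pushout square $D$ in $\cP_L$. Its projection is a d-pushout in $\cP$, which $\yoneda^\cP_\cC$ sends to a pushout. By the isomorphism, $\ProjFun_*\circ\yoneda^{\cP_L}_{\cC_L}\circ D$ is a pushout. Lemma~\ref{l:slice-pushout-reflection} then shows $\yoneda^{\cP_L}_{\cC_L}\circ D$ is a pushout.

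The only delicate point is checking that the coproduct decomposition is natural in morphisms of $\cC$. This should be routine bookkeeping.
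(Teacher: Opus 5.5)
Your proposal is correct and follows the paper's proof essentially step by step. Condition (2) uses creation of d-pushouts by $\ProjFun$, condition (3) uses the same auxiliary class of objects of $\cP$ all of whose slices lie in the given class, and condition (4) uses the square $\ProjFun_*\circ\yoneda^{\cP_L}_{\cC_L}\cong\yoneda^\cP_\cC\circ\ProjFun$ together with Lemma~\ref{l:slice-pushout-reflection}. The one addition is that you explicitly check this isomorphism via the coproduct decomposition of $\ProjFun_*$ on representables, which the paper simply asserts as a commutative diagram.
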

\begin{proof}
We need to check the conditions of Definition \ref{d:dCocompletion}. Condition (1) follows immediately and (2) follows from d-cocompleteness of $\cP$ by Proposition \ref{p:slice-creates-pushouts}.
We write $\cC\slice L$ and $\cP\slice L$ for $\cC\slice_{F|_{\cC}}L$ and $\cP\slice_FL$, respectively.
    
For condition (3), let $\mathscr{Q}$ be a subclass of $\Ob(\cP\slice L)$ that contains $\Ob(\cC\slice L)$ and is closed under d-pushouts.  The class 
$
\mathscr{P}=\{P\in\Ob(\cP)\mid \forall\lambda\colon F(P)\to L.\ (P,\lambda)\in\mathscr{Q}\}
$
contains $\Ob(\cC)$ by definition. We show that it is closed under d-pushouts. So consider the d-pushout  in $\cP$ in the following diagram on the left (taken from Figure~\ref{f:slice-pushout}).  
\begin{equation*}
\begin{tikzcd}
        R\ar[r,"g","+" swap]\ar[d,"f" swap,"-"]
        &
        Q\ar[d,"j", "-" swap]
    \\
        P\ar[r,"i" swap, "+"]
        &
        S
\end{tikzcd}
\qquad\qquad\qquad
\begin{tikzcd}
        (R,\nu)\ar[r,"g","+" swap]\ar[d,"f" swap,"-"]
        &
        (Q,\mu)\ar[d,"j", "-" swap]
    \\
        (P,\lambda)\ar[r,"i" swap, "+"]
        &
        (S,\xi)
\end{tikzcd}
\end{equation*}
Fixing the morphism $\xi\colon F(S)\to L$ and defining
\begin{equation*}
  \lambda=\xi\circ F(i),\qquad \mu=\xi\circ F(j),\qquad \nu=\xi\circ F(if)=
  \xi\circ F(jg)
\end{equation*}
yields the commutative square
in $\cP\slice_F L$ on the right (taken again from Figure~\ref{f:slice-pushout}), which is a d-pushout by Proposition \ref{p:slice-creates-pushouts}. Then $(S,\xi)\in \mathscr{Q}$ whenever $(P,\lambda),(Q,\mu),(R,\nu)\in\mathscr{Q}$. This argument holds for arbitrary $\xi$, so that $P\in\mathscr{P}$ whenever $P,Q,R\in\mathscr{P}$. As $\cP$ is a d-cocompletion of $\cC$, $\mathscr{P}=\Ob(\cP)$ and $\mathscr{Q}=\Ob(\cP\slice L)$. Thus $\cP\slice L$ is generated from $\cC\slice L$ by d-pushouts.

For condition (4), suppose that $D\colon \lozenge\to \cP\slice L$ is a d-pushout. In the commutative diagram 
\[
    \begin{tikzcd}
        \lozenge\ar[r,"D"]& \cP\slice L
            \ar[r,"\yoneda^{\cP\slice L}_{\cC\slice L}"]
            \ar[d,"\ProjFun" swap] 
        & \psh{\cC\slice L} \ar[d,"\ProjFun_*"]
        \\
        & \cP \ar[r, swap, "\yoneda^\cP_\cC"]
        & \psh{\cC}
    \end{tikzcd}
\]
the projection d-functor $\ProjFun$ preserves d-pushouts since it creates d-pushouts by Proposition~\ref{p:slice-creates-pushouts}. Moreover, $\yoneda_{\cC}^\cP$ sends d-pushouts to pushouts because $\cC\hookrightarrow \cP$ is a d-cocompletion and $\ProjFun_*$ reflects pushouts by Lemma \ref{l:slice-pushout-reflection}. It follows that $\yoneda^\cP_\cC\circ \ProjFun\circ D=\ProjFun_*\circ \yoneda^{\cP\slice L}_{\cC\slice L}\circ D$ is a pushout and so is $\yoneda^{\cP\slice L}_{\cC\slice L}\circ D$. 
\end{proof}

In the final part of this section, we apply Theorem \ref{t:SlicedCocompletion} to a special case to show that d-cocomple\-tions can be restricted in domain and codomain to d-subcategories with sieve-like properties.

\begin{definition}
\label{d:subobject-closed}
A subcategory $\cD$ of $\cC$ is a \emph{sieve}~\cite[Def 4.1]{ArtinGV72} if it is full and, for all $U,V \in \Ob(\cC)$,
\[
    \forall f\in \cC(U,V).\ V\in\Ob(\cD)\implies U\in \Ob(\cD).
\]
\end{definition}

The next lemma follows immediately from the definition.
\begin{lemma}
\label{l:indicator-functor}
    A full subcategory $\cD$ of $\cC$ is a sieve if and only if there is a functor $\chi^\cC_\cD\colon\cC\to\{1\to 0\}$ such that $P\in\Ob(\cD)\iff \chi^\cC_\cD(U)=1$.
    Also, the functor $\cD\to\cC\slice_{\chi^\cC_\cD}1$ given by $U\mapsto(U,\id_1)$ is an isomorphism of categories.
\end{lemma}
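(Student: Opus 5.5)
For the first claim I would define $\chi^\cC_\cD$ on objects by $\chi(U)=1$ if $U\in\Ob(\cD)$ and $\chi(U)=0$ otherwise. On morphisms there is no choice: $\{1\to 0\}$ is a poset, so between any two objects there is at most one arrow. Hence $\chi$ is a functor exactly when, for every $f\in\cC(U,V)$, there is an arrow $\chi(U)\to\chi(V)$. The only forbidden combination is $\chi(U)=0$ and $\chi(V)=1$, that is, $U\notin\Ob(\cD)$ while $V\in\Ob(\cD)$. Excluding this for all morphisms is precisely the sieve condition of Definition~\ref{d:subobject-closed}. So fullness together with the sieve condition holds if and only if this assignment is a functor.

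Conversely, suppose some functor $\chi$ into $\{1\to 0\}$ satisfies $U\in\Ob(\cD)\iff\chi(U)=1$. Then any $f\colon U\to V$ with $\chi(V)=1$ forces $\chi(U)=1$, because there is no arrow $0\to 1$. Since $\cD$ is full by assumption, this makes $\cD$ a sieve.

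For the second claim, I would first describe $\cC\slice_{\chi}1$. Its objects are pairs $(U,\lambda)$ with $\lambda\colon\chi(U)\to 1$, and such a $\lambda$ exists only if $\chi(U)=1$, in which case $\lambda=\id_1$ is forced. So the objects are exactly the pairs $(U,\id_1)$ with $U\in\Ob(\cD)$. A morphism $(U,\id_1)\to(V,\id_1)$ is a morphism $\alpha\in\cC(U,V)$ satisfying $\id_1=\id_1\circ\chi(\alpha)$, and this condition holds automatically. Fullness gives $\cC(U,V)=\cD(U,V)$.

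The functor $U\mapsto(U,\id_1)$, $\alpha\mapsto\alpha$, is therefore bijective on objects and on hom-sets. Its inverse is the projection $\ProjFun$, so it is an isomorphism of categories. None of these steps is a real obstacle; the only point needing care is that functoriality into a poset reduces to the existence of arrows, with uniqueness automatic.
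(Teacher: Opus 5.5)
Your proof is correct and is the direct unfolding of definitions the paper intends; the paper gives no argument beyond remarking that the lemma follows immediately from the definition. Your two observations carry the whole proof. First, functoriality into the thin category $\{1\to 0\}$ amounts exactly to the sieve condition. Second, $\cC\slice_{\chi^\cC_\cD}1$ has precisely the objects $(U,\id_1)$ with $U\in\Ob(\cD)$, and by fullness its hom-sets are $\cC(U,V)=\cD(U,V)$.
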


\begin{definition}
    Let $\cP$ be a category, let $\cC$ and $\cD$ be full subcategories satisfying $\cD\subseteq \cC\subseteq \cP$. The \emph{residuation} $\cD\setminus_\cP \cC$ of $\cD$ and $\cC$ relative to $\cP$ is the full subcategory of $\cP$ with objects
    \[
        \Ob(\cD\setminus_\cP \cC)
        =
        \{P\in \Ob(\cP)\mid \forall f\in\cP(U,P).\ U\in \Ob(\cC) \implies U\in \Ob(\cD)\}.
    \]
\end{definition}
See~\cite[p.56]{MacLane-Moerdijk_Sheaves} for a related construction.

\begin{lemma}
\label{l:Q-domain-closed}
Let $\cP$ be a category, $\cC$ a full subcategory of $\cP$ and $\cD\subseteq\cC$ a sieve.
Then
\begin{enumerate}
    \item $\cD\setminus_\cP \cC\subseteq \cP$ is a sieve,
    \item $(\cD\setminus_\cP \cC)\cap \cC=\cD$,
    \item $\chi^{\cP}_{\cD\setminus_\cP \cC}|_{\cC}=\chi^\cC_\cD$ and $\cD\setminus_\cP \cC\cong \cP\slice_{\chi^{\cP}_{\cD\setminus_\cP \cC}}1$.
\end{enumerate}
\end{lemma}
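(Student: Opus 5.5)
We verify the three items directly from the definitions, in the order stated. Throughout we write $\cQ=\cD\setminus_\cP\cC$.

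For (1), $\cQ$ is full by definition. For the sieve property, let $h\in\cP(P',P)$ with $P\in\Ob(\cQ)$. For every $f\in\cP(U,P')$ with $U\in\Ob(\cC)$, the composite $hf\in\cP(U,P)$ is a morphism from an object of $\cC$ into $P$. Since $P\in\Ob(\cQ)$, this forces $U\in\Ob(\cD)$. Hence $P'\in\Ob(\cQ)$.

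For (2), we prove both inclusions. Suppose $U\in\Ob(\cQ)\cap\Ob(\cC)$. Applying the defining condition of $\cQ$ to $\id_U\in\cP(U,U)$, whose domain lies in $\cC$, gives $U\in\Ob(\cD)$. Conversely, let $U\in\Ob(\cD)$ and let $f\in\cP(V,U)$ with $V\in\Ob(\cC)$. Since $\cC$ is full in $\cP$, we have $f\in\cC(V,U)$. Since $\cD$ is a sieve in $\cC$, it follows that $V\in\Ob(\cD)$, so $U\in\Ob(\cQ)$. Both $\cQ\cap\cC$ and $\cD$ are full subcategories of $\cP$ with the same objects, so they coincide.

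For (3), we use Lemma~\ref{l:indicator-functor} in both directions. By (1), $\cQ$ is a sieve, so the lemma gives the functor $\chi^\cP_\cQ\colon\cP\to\{1\to0\}$, which sends an object to $1$ exactly when it lies in $\cQ$. Such a functor is determined by its object map, because $\{1\to0\}$ is a poset. By (2), the restriction of $\chi^\cP_\cQ$ to $\cC$ takes the value $1$ exactly on $\Ob(\cD)$. The same is true of $\chi^\cC_\cD$, so the two functors agree. The isomorphism $\cQ\cong\cP\slice_{\chi^\cP_\cQ}1$ is then the second statement of Lemma~\ref{l:indicator-functor}, applied to the sieve $\cQ\subseteq\cP$.

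The only step needing care is the converse inclusion in (2). It is the one place where both hypotheses are used, namely fullness of $\cC$ in $\cP$ and the sieve property of $\cD$ in $\cC$. Everything else is unwinding definitions.
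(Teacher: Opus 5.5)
Your proposal is correct and follows essentially the same route as the paper. Item (1) composes with the given morphism (the paper argues by contradiction, you argue directly). Item (2) uses identity morphisms together with fullness of $\cC$ in $\cP$ and the sieve property of $\cD$ in $\cC$. Item (3) follows from (1), (2) and Lemma~\ref{l:indicator-functor}; your remark that a functor into the poset $\{1\to 0\}$ is determined by its object map spells out what the paper leaves implicit.
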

\begin{proof}
    We abbreviate $\cQ=\cD\setminus_\cP \cC$.
    For (1), assume by contradiction that there exist $P\notin \Ob(\cQ)$, $Q\in \Ob(\cQ)$ and $f\in \cP(P,Q)$. Since $P\notin\cQ$, there exists a morphism $g\in \cP(U,P)$ such that  $U\in\Ob(\cC)$ and $U\notin\Ob(\cD)$. But then $fg\in \cP(U,Q)$, which contradicts $Q\in\cQ$.
    
    For (2), assume that $V\in \Ob(\cC)$. Then $V\notin\Ob(\cD)$ if there exists some $f\in\cC(U,V)$ for some $U\in\Ob(\cC)\setminus\Ob(\cD)$. Moreover, $V\notin\Ob(\cQ)$ if there exists an $f'\in\cP(U,V)$ for $U$ satisfying the same condition. Since $\cC\subseteq \cP$ is a full subcategory, these two conditions are equivalent. Therefore, $\cC\cap \cQ=\cC\cap \cD=\cD$.

    Item (3) follows from (1) and (2) by Lemma \ref{l:indicator-functor}.
\end{proof}

If $\cC$ is a d-category, then a sieve $\cD$ in $\cC$ is obviously a full d-subcategory.

\begin{proposition}
\label{p:subcat-cocompletion}
    Let $\cC\hookrightarrow\cP$ be a d-cocompletion and $\cD\subseteq\cC$ a sieve. Then
    \begin{enumerate}
    \item 
        the indicator functor $\chi^\cP_{\cD\setminus_\cP\cC}:\cP\to\{1\to 0\}$ sends d-pushouts to pushouts, 
    \item 
        $\cD\hookrightarrow\cD\setminus_\cP \cC$ is a d-cocompletion.
    \end{enumerate}
\end{proposition}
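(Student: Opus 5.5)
Write $\cQ=\cD\setminus_\cP\cC$ and $\chi=\chi^\cP_{\cQ}\colon\cP\to\{1\to 0\}$. For part (2), I will apply Theorem \ref{t:SlicedCocompletion} with $\cE=\{1\to0\}$, $F=\chi$ and $L=1$. Given part (1), the theorem shows that $\cC\slice_{\chi|_\cC}1\hookrightarrow\cP\slice_\chi 1$ is a d-cocompletion. By Lemma \ref{l:Q-domain-closed}(3), $\chi|_\cC=\chi^\cC_\cD$ and $\cP\slice_\chi1\cong\cQ$. By Lemma \ref{l:indicator-functor}, $\cC\slice_{\chi^\cC_\cD}1\cong\cD$. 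These isomorphisms are compatible with the inclusions $\cD\hookrightarrow\cQ$, since both send $U$ to $(U,\id_1)$. The d-structures agree, because on both sides they are reflected from $\cP$ through the projection. Hence part (2) follows once part (1) is established.

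So the main work is part (1): $\chi$ sends every d-pushout to a pushout in the poset category $\{1\to0\}$. Take a d-pushout $S=\colim(P\backleftarrow{f}R\forrightarrow{g}Q)$ with coprojections $i,j$. A commutative square in $\{1\to0\}$ is a pushout exactly when $\chi(S)$ is the join of $\chi(P)$ and $\chi(Q)$. Here $0$ is the top element, since there is a map $1\to 0$. Because $\chi$ is a functor, $\chi(P),\chi(Q)\le\chi(S)$ holds automatically. If either $\chi(P)$ or $\chi(Q)$ equals $0$, then $\chi(S)=0$ and the square is a pushout. The remaining case is $P,Q\in\cQ$, and there I must show $S\in\cQ$.

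Take $U\in\Ob(\cC)$ and $t\in\cP(U,S)$; I need $U\in\Ob(\cD)$. The key step is to factor $t$ through $i$ or $j$, generalising Lemma \ref{l:ConcatFactor}. I will use condition (4) of Definition \ref{d:dCocompletion}. The restricted Yoneda functor $\yoneda^\cP_\cC$ sends the d-pushout to a pushout in $\psh\cC$, and colimits of presheaves are computed pointwise. Evaluating at $U$ therefore gives a pushout in $\Set$,
\[
\cP(U,S)=\cP(U,P)\sqcup_{\cP(U,R)}\cP(U,Q).
\]
Every element of a pushout of sets lies in the image of one of the two coprojections. Hence $t=i\circ t'$ for some $t'\in\cP(U,P)$, or $t=j\circ t''$ for some $t''\in\cP(U,Q)$. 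Since $P,Q\in\cQ$ and $U\in\Ob(\cC)$, either case gives $U\in\Ob(\cD)$. Therefore $S\in\cQ$.

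I expect no serious obstacle. The points that need care are recognising that pushouts in $\{1\to0\}$ are joins, and checking that the isomorphisms in part (2) respect the d-structures and the inclusion. The substantive ingredient is the surjectivity of the pair of coprojections in a pushout of sets, which supplies the needed factorisation.
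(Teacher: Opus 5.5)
Your proposal is correct and follows essentially the same route as the paper. For part (1), you evaluate the image of the d-pushout under $\yoneda^\cP_\cC$ at each $U\in\Ob(\cC)$ and use that a pushout of sets is jointly covered by its two coprojections, so $\chi(S)=0$ iff $\chi(P)=0$ or $\chi(Q)=0$; for part (2), you apply Theorem~\ref{t:SlicedCocompletion} with $F=\chi$ and $L=1$ together with Lemmas~\ref{l:Q-domain-closed} and~\ref{l:indicator-functor}, and your explicit check that these isomorphisms respect the inclusion and the d-structures fills in a detail the paper leaves implicit.
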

\begin{proof}
    We abbreviate again $\cQ=\cD\setminus_\cP \cC$. 
    For (1), let $S=\colim \big(P\backleftarrow{}R\forrightarrow{}Q\big)$ be a d-pushout in $\cP$. Then $S^\wedge_\cC\cong \colim (P^\wedge_\cC\gets R^\wedge_\cC\to Q^\wedge_\cC)$ by Definition~\ref{d:dCocompletion}.4.
Taking the value at $U\in\Ob(\cC)$ yields
\[
        S^\wedge_\cC[U]=\cP(U,S)\cong\colim \big(\cP(U,P)\gets\cP(U,R)\to\cP(U,Q)\big).
\]
Thus $\cP(U,S)\neq \emptyset$ if and only if $\cP(U,P)\neq\emptyset$ or $\cP(U,Q)\neq\emptyset$, so that
\begin{align*}
    \chi^\cP_\cQ(S)=0
    &\iff 
    S\notin \cQ
    \\
    &\iff
    \exists U\in\Ob(\cC).\; U\notin\Ob(\cD)\;\land\; \cP(U,P)\neq\emptyset
    \\
    &\iff
    \exists U\in\Ob(\cC).\; U\notin\Ob(\cD)\;\land\; 
    (\cP(U,P)\neq\emptyset\;\lor\;\cP(U,Q)\neq\emptyset)
    \\
    &\iff 
    \chi^\cP_\cQ(P)=0\;\lor\;\chi^\cP_\cQ(Q)=0.
\end{align*}
This translates to $\chi^\cP_\cQ(S)=\colim \big(\chi^\cP_\cQ(P)\gets\chi^\cP_\cQ(R)\to\chi^\cP_\cQ(Q)\big)$ in the category $\{1\to 0\}$.

For (2), we use Theorem \ref{t:SlicedCocompletion} to obtain that
\begin{equation}
    \cD
    \cong
    \cC\slice_{\chi^\cC_\cD} 1
    \hookrightarrow
    \cP\slice_{\chi^\cP_\cQ} 1
    \cong
    \cQ,
\end{equation}
is a d-cocompletion. The assumptions of Theorem \ref{t:SlicedCocompletion} are satisfied by (1) and Lemma \ref{l:Q-domain-closed}, the equivalences follow from Lemma \ref{l:indicator-functor}.
\end{proof}

\section{Directed cospans}
\label{s:d-cospans}

Track objects in a d-category $\cC$ are, by definition, cospans of presheaves over $\cC$. We now show a d-cocompletion $\cP$ of $\cC$ can provide an explicit description of the bicategory of track objects. We introduce the notion of directed cospan and identify a class of those d-cospans (interval d-cospans) that correspond precisely to track objects. We also study  comma categories in this context.

\begin{definition}
\label{d:d-cospan}
Let $\cP$ be a d-category and let $\mathscr{A}\subseteq \Ob(\cP)$ be a class
of objects such that all d-spans 
$\big(P\backleftarrow{}U\forrightarrow{}Q\big)$ have d-pushouts for
each $U\in\mathscr{A}$.  The \emph{bicategory
  $\dCosp(\cP,\mathscr{A})$ of d-cospans} on $\cP$ restricted to
$\mathscr{A}$ is defined by the following data:
\begin{itemize}
    \item It has objects $\mathscr{A}$. 
    \item For $U,V\in\mathscr{A}$, the hom category $\dCosp(\cP,\mathscr{A})_U^V$ has all d-cospans  $
    \big(U\forrightarrow{s} P\backleftarrow{t}V\big)
    $ for $P\in \Ob(\cP)$ as objects and commutative diagrams in $\cP$
    \begin{equation}
    \label{e:dcospan-map}
    \begin{tikzcd}[row sep=0.5cm, column sep=0.8cm]
        & Q\ar[dd,"f"] & \\
        U\ar[ru,"s","+"'{inner sep=0pt}]\ar[rd,"s'"',"+"{inner sep=0pt}] & & V\ar[lu,"t"',"-"{inner sep=0pt}]\ar[ld,"t'","-"'{inner sep=0pt}] \\
        & P & 
    \end{tikzcd}
    \end{equation}
    as morphisms. 
    \item The horizontal composition of d-cospans is given by
    \[
    \big(U\forrightarrow{s} P\backleftarrow{t}V\big)
    *
    \big(V\forrightarrow{s'} Q\backleftarrow{t'}W\big)
    =
    \big(U\forrightarrow{is} R\backleftarrow{jt'}W\big),
    \]
for a d-pushout
\begin{equation}
\label{e:dcospan-composition}
    \begin{tikzcd}[row sep=0.5cm, column sep=0.8cm]
        & & R 
        & & \\
        & P\ar[ru,"i","+"'{inner sep=0pt}] & & Q \ar[lu,"j"',"-"{inner sep=0pt}] & \\
        U 
        \ar[ru,"s","+"'{inner sep=0pt}]
        & &
        V \ar[ru,"s'"',"+"{inner sep=0pt}] \ar[lu,"t","-"'{inner sep=0pt}] & &
        W 
        \ar[lu,"t'"',"-"{inner sep=0pt}]
    \end{tikzcd}
\end{equation}
\end{itemize}
\end{definition}

Suitable d-pushouts are required for constructing horizontal compositions: this guarantees that compositions of d-cospans are again d-cospans. As usual, we keep silent about coherence conditions, referring the reader to \cite{Johnson-Yau_2Categories}.

We write $\dCosp(\cP,\cC)$ for $\dCosp(\cP,\Ob(\cC))$ if $\cC$ is a subcategory of $\cP$, and $\dCosp(\cP)$ for $\dCosp(\cP,\cP)$.

\begin{remark}
    We cannot define $\dCosp(\cP)$ as a sub-bicategory of $\Cosp(\cP)$, since the latter might not even be well-defined because we do not assume that all pushouts in $\cP$ exist.
    Yet if $\cP$ is a category equipped with the full d-structure and all pushouts exist in $\cP$, then $\dCosp(\cP)=\Cosp(\cP)$.
    For similar reasons, $\dCosp(\cP,\mathscr{A})$ may be well-defined even if $\dCosp(\cP)$ is not.
\end{remark}

\begin{proposition}
\label{p:cospan-pseudofunctors}
Let $F:\cQ\to\cP$ be a d-functor, let $\mathscr{A}\subseteq \Ob(\cP)$
and $\mathscr{B}\subseteq \Ob(\cQ)$. Assume that
$F(\mathscr{B})\subseteq \mathscr{A}$ and that $F$ preserves
d-pushouts of d-spans $\big(Q\backleftarrow{}V\forrightarrow{}Q'\big)$
for all $V\in\mathscr{B}$ and all $Q,Q'\in\Ob(\cQ)$.  Then
\begin{enumerate}
    \item 
$F$ induces the pseudofunctor $F_*:\dCosp(\cQ,\mathscr{B})\to \dCosp(\cP,\mathscr{A})$
given by
\begin{align*}\
\Ob(F_*) &=F|_{\mathscr{B}}\colon \mathscr{B}\to\mathscr{A},\\
    F_*\Big(P\forrightarrow{s} R\backleftarrow{t}Q\Big)
    &=
    \Big(F(P)\forrightarrow{F(s)} F(R)\backleftarrow{F(t)}F(Q)\Big),
\end{align*}
\item $F_*$ is locally fully faithful if $F$ is fully faithful.
\end{enumerate}
\end{proposition}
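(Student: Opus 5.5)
We construct $F_*$ directly from the data and then verify the bicategorical structure. On objects we set $\Ob(F_*)=F|_{\mathscr{B}}$, which lands in $\mathscr{A}$ by assumption. For $U,V\in\mathscr{B}$ the local functor $(F_*)_U^V$ sends a d-cospan $\big(U\forrightarrow{s}R\backleftarrow{t}V\big)$ to $\big(F(U)\forrightarrow{F(s)}F(R)\backleftarrow{F(t)}F(V)\big)$. This is a d-cospan because $F$ is a d-functor, so $F(s)\in\cP^+$ and $F(t)\in\cP^-$. A morphism $f$ as in \eqref{e:dcospan-map} is sent to $F(f)$, and the two triangles still commute by functoriality. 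Local functoriality is inherited from $F$.

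Next come the laxity constraints. Take composable d-cospans $\big(U\forrightarrow{s}P\backleftarrow{t}V\big)$ and $\big(V\forrightarrow{s'}Q\backleftarrow{t'}W\big)$ with $V\in\mathscr{B}$. Their composite is built from a d-pushout $R$ as in \eqref{e:dcospan-composition}. Since $V\in\mathscr{B}$, the hypothesis says $F$ preserves this d-pushout, so $F(R)$ with $F(i),F(j)$ is a d-pushout of $F(P)\backleftarrow{}F(V)\forrightarrow{}F(Q)$. The chosen composite in $\dCosp(\cP,\mathscr{A})$ uses some d-pushout $R'$ of the same span, which exists because $F(V)\in\mathscr{A}$. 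By the universal property there is a unique isomorphism $\phi\colon R'\to F(R)$ compatible with the coprojections. This $\phi$ is an invertible 2-cell
\[
F_*(P)*F_*(Q)\Rightarrow F_*(P*Q)
\]
in the hom category. The unit constraint is handled the same way. The identity d-cospan on $U$ is $(U\xrightarrow{\id}U\xleftarrow{\id}U)$, and $F$ preserves identities, so this constraint is the identity.

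Naturality of $\phi$ in both arguments, and the associativity and unit coherence axioms, all follow from the uniqueness of maps out of pushouts. In each coherence diagram both composites are maps out of an iterated d-pushout that agree on the coprojections. This is the usual argument that a pushout-preserving functor induces a pseudofunctor on cospan bicategories, and I do not expect trouble here.

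For (2), suppose $F$ is fully faithful. A morphism between the images $F_*(U\to R\gets V)$ and $F_*(U\to R'\gets V)$ is a map $h\colon F(R)\to F(R')$ with $hF(s)=F(s')$ and $hF(t)=F(t')$. By fullness $h=F(f)$ for a unique $f\colon R\to R'$. By faithfulness, $F(fs)=F(s')$ and $F(ft)=F(t')$ give $fs=s'$ and $ft=t'$, so $f$ is a morphism of d-cospans. Faithfulness also gives injectivity on morphisms, so each local functor is fully faithful.

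The only genuinely delicate point is the laxity step. Preservation is only assumed for d-spans whose apex lies in $\mathscr{B}$, so every composite formed in $\dCosp(\cQ,\mathscr{B})$ must have its middle object in $\mathscr{B}$. It does, because composition there only glues along objects of $\mathscr{B}$.
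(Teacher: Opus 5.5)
Your proposal is correct and takes essentially the paper's route. The paper only cites the span version in Johnson--Yau (Proposition 4.1.24) and says the dual argument applies. What you write out is exactly that dual argument: comparison isomorphisms between chosen and image d-pushouts as laxity constraints, trivial unit constraints, coherence by uniqueness of maps out of pushouts, and local full faithfulness inherited from $F$. Your check that composition only ever forms d-pushouts along objects of $\mathscr{B}$ is the right observation for why the restricted preservation hypothesis suffices.
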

\begin{proof}
A similar fact for spans is proven in~\cite[Proposition 4.1.24]{Johnson-Yau_2Categories}; a dual argument applies here.
\end{proof}

\begin{proposition}
\label{p:cospan-pseudofunctors2}
Let $F:\cQ\to\cP$ be a functor from a d-category $\cQ$ to a category $\cP$ and let $\mathscr{A}\subseteq \Ob(\cC)$, $\mathscr{B}\subseteq \Ob(\cD)$.
Assume that $F(\mathscr{B})\subseteq \mathscr{A}$ and that $F$ sends d-pushouts of d-spans $\big(Q\backleftarrow{}V\forrightarrow{}Q'\big)$ to pushouts, for $V\in\mathscr{B}$ and $Q,Q'\in\Ob(\cQ)$.  Then
\begin{enumerate}
    \item $F$ induces a pseudofunctor $F_*:\dCosp(\cQ,\mathscr{B})\to \Cosp(\cP,\mathscr{A})$ defined as in Proposition~\ref{p:cospan-pseudofunctors};
    \item $F_\ast$ is locally fully faithful whenever $F$ is fully faithful.
\end{enumerate}
\end{proposition}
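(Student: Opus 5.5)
The plan is to mimic the proof of Proposition~\ref{p:cospan-pseudofunctors}, the dual of \cite[Proposition 4.1.24]{Johnson-Yau_2Categories}, with the target now the ordinary cospan bicategory $\Cosp(\cP,\mathscr{A})$. One caveat: $\Cosp(\cP)$ need not exist, since $\cP$ may lack pushouts. However, horizontal composition is only ever needed on cospans whose middle leg sits over an object of $\mathscr{A}$. So I will interpret $\Cosp(\cP,\mathscr{A})$ with horizontal composition defined by \emph{some} chosen pushout. This works in the intended application $\cP=\psh\cC$ in any case.

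\textbf{Construction of $F_*$.} On objects $F_*$ is $F|_{\mathscr{B}}$. On hom categories, a d-cospan $(U\forrightarrow{s}P\backleftarrow{t}V)$ goes to the cospan $(F(U)\xrightarrow{F(s)}F(P)\xleftarrow{F(t)}F(V))$. A morphism as in \eqref{e:dcospan-map} goes to its image under $F$, which is again a commutative diagram. Functoriality of these local functors is inherited from $F$.

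\textbf{Laxity constraints.} For composable d-cospans, take the chosen d-pushout \eqref{e:dcospan-composition} with $V\in\mathscr{B}$. By hypothesis its image under $F$ is a pushout of $F(P)\leftarrow F(V)\to F(Q)$. It is therefore canonically isomorphic, under $F(U)$ and $F(W)$, to the chosen pushout $F(P)\sqcup_{F(V)}F(Q)$ used in $\Cosp(\cP,\mathscr{A})$. This comparison isomorphism is the invertible $2$-cell $F_*(X)*F_*(Y)\cong F_*(X*Y)$. The unit constraint is the identity, since both identity cospans are given by identities.

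\textbf{Coherence.} Naturality of the laxity constraints and the associativity and unity axioms all follow from uniqueness of maps out of pushouts. Each axiom asserts equality of two morphisms out of an iterated pushout $F(R)$ that agree after precomposition with the pushout injections. This step is the main bookkeeping burden, but it is formal and identical to the cited span argument.

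\textbf{Local full faithfulness.} Suppose $F$ is fully faithful. A morphism $F_*(X)\to F_*(Y)$ in the hom category is a map $h\colon F(Q)\to F(P)$ commuting with the legs. By fullness $h=F(f)$ for some $f\colon Q\to P$. By faithfulness, $F(fs)=F(s')$ gives $fs=s'$, and similarly $ft=t'$. Hence $f$ is a morphism of d-cospans, unique by faithfulness.
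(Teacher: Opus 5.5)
Your argument is correct, but it takes a different route from the paper. The paper does not redo any construction. It equips $\cP$ with the full d-structure ($\cP^+=\cP^-=\cP$). Then every commutative square is a d-square, $F$ is automatically a d-functor, and ``sends d-pushouts to pushouts'' becomes ``preserves d-pushouts''. Moreover $\dCosp(\cP,\mathscr{A})$ coincides with $\Cosp(\cP,\mathscr{A})$ (cf.\ the remark after Definition~\ref{d:d-cospan}), so both claims follow verbatim from Proposition~\ref{p:cospan-pseudofunctors}.

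Your direct verification is exactly the argument the paper only cites (dually, from Johnson--Yau) for Proposition~\ref{p:cospan-pseudofunctors}. That argument consists of:
\begin{itemize}
\item comparison isomorphisms between $F$ of the chosen d-pushout and the chosen pushout in $\cP$ as composition constraints;
\item identity unit constraints;
\item coherence from the joint epimorphicity of the injections into iterated pushouts;
\item local full faithfulness, using that the middle map in \eqref{e:dcospan-map} need not be a for- or backmorphism.
\end{itemize}
So your version makes the proof self-contained. The paper's reduction avoids duplicating the coherence bookkeeping and shows that ordinary cospans are just d-cospans for the full d-structure.

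Your caveat about the existence of pushouts applies equally to the paper's route. With the full d-structure, $\dCosp(\cP,\mathscr{A})$ is well defined only if every span in $\cP$ with apex in $\mathscr{A}$ has a pushout, not merely the spans in the image of $F$. This holds for $\cP=\psh\cC$.
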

\begin{proof}
    We equip $\cP$ with the full d-structure. The conclusion then follows from Proposition~\ref{p:cospan-pseudofunctors}.
\end{proof}

Constructions of track objects and interval cospans in the next two subsections require the following two induced pseudofunctors.
\begin{lemma}
\label{l:examples-of-induced-pseudofunctors}
    Let $\cC\hookrightarrow\cP$ be a d-cocompletion.
    \begin{enumerate}
    \item The restricted Yoneda functor $\yoneda^\cP_\cC$ induces the locally fully faithful pseudofunctor
    \[
    (\yoneda^\cP_\cC)_*:
    \dCosp(\cP,\cC)
    \to
    \Cosp(\psh\cC,\cC).
    \]
    \item 
    If the functor $F:\cP\to\cE$ sends d-pushouts to pushouts and $L\in\Ob(\cE)$, then the projection functor $\ProjFun:\cP\slice_F L\to\cP$ induces the pseudofunctor
    \[
    \ProjFun_*:\dCosp(\cP\slice_FL,\cC\slice_{F|_{\cC}}L)\to \dCosp(\cP,\cC).
    \]
    \end{enumerate}
\end{lemma}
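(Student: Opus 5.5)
Both parts come from checking the hypotheses of Proposition~\ref{p:cospan-pseudofunctors2} (for (1)) and Proposition~\ref{p:cospan-pseudofunctors} (for (2)). In each case we must check three things: that the relevant bicategories of d-cospans are well defined, that the functor maps the chosen class of objects into the target class, and that it sends the relevant d-pushouts to (d-)pushouts.

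For (1), take $\cQ=\cP$, $\mathscr{B}=\Ob(\cC)$, $\cP$ of the proposition to be $\psh\cC$ and $\mathscr{A}$ the class of representable presheaves $U^\wedge_\cC$, $U\in\Ob(\cC)$. The bicategory $\dCosp(\cP,\cC)$ is well defined because $\cP$ is d-cocomplete by Definition~\ref{d:dCocompletion}(2), so every d-span has a d-pushout. The target $\Cosp(\psh\cC,\cC)$ is well defined since $\psh\cC$ has all pushouts. For $U\in\Ob(\cC)$ we have $\yoneda^\cP_\cC(U)=\cP(-,U)|_\cC=\cC(-,U)=U^\wedge_\cC$, because $\cC\hookrightarrow\cP$ is full. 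Hence $\yoneda^\cP_\cC$ maps $\mathscr{B}$ into $\mathscr{A}$. By Definition~\ref{d:dCocompletion}(4), $\yoneda^\cP_\cC$ sends all d-pushouts to pushouts, in particular those of d-spans with apex in $\Ob(\cC)$. Proposition~\ref{p:cospan-pseudofunctors2} therefore yields the pseudofunctor. Local full faithfulness follows from part (2) of that proposition together with Proposition~\ref{p:dCocoImpliesFaithful}, which says that $\yoneda^\cP_\cC$ is fully faithful.

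For (2), take $\cQ=\cP\slice_FL$ with $\mathscr{B}=\Ob(\cC\slice_{F|_\cC}L)$, and the target $\cP$ with $\mathscr{A}=\Ob(\cC)$. The projection $\ProjFun$ is a d-functor, since the d-structure on the comma category is the one reflected by $\ProjFun$. It clearly maps objects $(U,\lambda)$ with $U\in\Ob(\cC)$ into $\Ob(\cC)$. Proposition~\ref{p:slice-creates-pushouts} shows that $\ProjFun$ creates d-pushouts. This gives two things at once. First, $\cP\slice_FL$ is d-cocomplete, so $\dCosp(\cP\slice_FL,\cC\slice_{F|_\cC}L)$ is well defined. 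Second, $\ProjFun$ preserves d-pushouts. As in (1), $\dCosp(\cP,\cC)$ is well defined by d-cocompleteness of $\cP$. Proposition~\ref{p:cospan-pseudofunctors}(1) then gives $\ProjFun_*$.

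There is no real obstacle here. The only point needing care is the well-definedness of the source bicategories, that is, the existence of the d-pushouts required for horizontal composition, and this is supplied by d-cocompleteness and by Proposition~\ref{p:slice-creates-pushouts}.
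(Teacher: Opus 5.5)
Your proposal is correct and is essentially the paper's proof: it applies Propositions~\ref{p:cospan-pseudofunctors2} and~\ref{p:cospan-pseudofunctors} to $\yoneda^\cP_\cC$ and $\ProjFun$, using Definition~\ref{d:dCocompletion}(2) and (4), Proposition~\ref{p:slice-creates-pushouts} and Proposition~\ref{p:dCocoImpliesFaithful}. The only cosmetic difference is that the paper cites Theorem~\ref{t:SlicedCocompletion} for d-cocompleteness of $\cP\slice_F L$, whereas you get it directly from Proposition~\ref{p:slice-creates-pushouts}, which is exactly how that part of the theorem is proved.
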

\begin{proof}
    The d-cospan bicategories considered in (1) and (2) are well-defined: both $\cP$ and $\cP\slice_FL$ are d-cocomplete by condition \ref{d:dCocompletion}(2) and Theorem \ref{t:SlicedCocompletion}, respectively. Hence (1) and (2) follow from Propositions \ref{p:cospan-pseudofunctors2} and \ref{p:cospan-pseudofunctors} applied to $\yoneda^\cP_\cC$ and $\ProjFun$, respectively. The functors send d-pushouts to (d-)pushouts by condition \ref{d:dCocompletion}(4) and Proposition \ref{p:slice-creates-pushouts} while $\yoneda^\cP_\cC$ is fully faithful by Proposition \ref{p:dCocoImpliesFaithful}.
\end{proof}

Within the bicategory of d-cospans, we now define the locally full sub-bicategory of interval d-cospans and relate it with the bicategory of track objects.
Recall from Section~\ref{s:preliminaries} that the bicategory of track objects $\tro\cC$ is the smallest locally strictly full sub-bicategory of $\Cosp(\psh\cC,\cC)$ that contains all elementary track objects $\big(V^\wedge_\cC\xrightarrow{f^\wedge_\cC} U^\wedge_\cC \xleftarrow{g^\wedge_\cC} W^\wedge_\cC\big)$ for all $U,V,W\in\Ob(\cC)$, $f\in\cC^+(V,U)$ and $g\in \cC^-(W,U)$.

\begin{definition}
\label{d:interval-d-cospan}
    Let $\cC\hookrightarrow \cP$ be a d-cocompletion (and hence $\cP$ d-cocomplete).
\begin{enumerate}
\item
    A d-cospan in $\dCosp(\cP,\cC)$ is \emph{elementary} if it is of the form $(V\forrightarrow{} U \backleftarrow{} W)$ for some $U,V,W\in\Ob(\cC)$ (as an object of $\dCosp(\cP,\cC)_V^W$
    ).
\item
    The \emph{interval d-cospan bicategory} $\dICosp(\cP,\cC)$ is the smallest locally strictly full sub-bicategory of $\dCosp(\cP,\cC)$ that contains all elementary d-cospans.
\end{enumerate}
We refer to d-cospans in $\dICosp(\cP,\cC)$ as  \emph{interval d-cospans}.
\end{definition}

The motivation for the term ``interval'' becomes clear in Section \ref{s:enriched-posets}, where we establish a relationship between interval d-cospans in $\dICosp(\cPos,\cSet)$ and interval posets.

The following theorem establishes a relationship between track objects and interval d-cospans.

\begin{theorem}
\label{t:dcospans-are-track-objects}
    For each d-cocompletion $\cC\hookrightarrow\cP$ there exists a unique pseudofunctor $\Phi^\cP_\cC$ for which the following diagram of bicategories and pseudofunctors commutes: 
    \[
    \begin{tikzcd}
        \dICosp(\cP,\cC)
            \ar[d,dashed,"\Phi_\cC^\cP"']
            \ar[rr,hook]
    &&
        \dCosp(\cP,\cC)
            \ar[d,"(\yoneda^\cP_\cC)_*"]
    \\
        \tro\cC
            \ar[r,hook]
    &
        \saut\cC
            \ar[r,double distance=2pt,-]
    &
        \Cosp(\psh\cC,\cC)
    \end{tikzcd}
    \]
    Moreover, $\Phi^\cP_\cC$ is a biequivalence.
\end{theorem}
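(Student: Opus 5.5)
The plan is to obtain $\Phi^\cP_\cC$ as the restriction of $(\yoneda^\cP_\cC)_*$ from Lemma~\ref{l:examples-of-induced-pseudofunctors}(1) to $\dICosp(\cP,\cC)$, and then check that it is a biequivalence directly.

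First, existence and uniqueness of the factorisation. Since $\tro\cC\hookrightarrow\Cosp(\psh\cC,\cC)$ is a locally full sub-bicategory with the same objects, a pseudofunctor into $\Cosp(\psh\cC,\cC)$ factors through it in at most one way. It factors exactly when every $1$-cell in the image is a track object, with laxity constraints automatically living in the locally full sub-bicategory. So uniqueness is immediate, and for existence I need: $(\yoneda^\cP_\cC)_*$ maps every interval d-cospan to a track object. Consider the class $\mathscr{I}$ of d-cospans in $\dCosp(\cP,\cC)$ whose image is a track object. I will show that $\mathscr{I}$ defines a locally strictly full sub-bicategory containing all elementary d-cospans; minimality of $\dICosp(\cP,\cC)$ then gives $\dICosp\subseteq\mathscr{I}$.
\begin{itemize}
\item Elementary d-cospans: for $U,V,W\in\Ob(\cC)$, the image of $(V\forrightarrow{f}U\backleftarrow{g}W)$ is $(V^\wedge_\cC\xrightarrow{f^\wedge_\cC}U^\wedge_\cC\xleftarrow{g^\wedge_\cC}W^\wedge_\cC)$. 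Here I use that $\yoneda^\cP_\cC$ restricted to $\cC$ is the Yoneda embedding, since $\cC$ is full in $\cP$. This is an elementary track object.
\item Closure under horizontal composition: the laxity constraint of $(\yoneda^\cP_\cC)_*$ gives an isomorphism $(\yoneda^\cP_\cC)_*(A*B)\cong(\yoneda^\cP_\cC)_*(A)*(\yoneda^\cP_\cC)_*(B)$, using Definition~\ref{d:dCocompletion}(4). Track objects are closed under concatenation and isomorphism.
\item Closure under isomorphism: invertible 2-cells are preserved by the functor.
\end{itemize}

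Second, the biequivalence. On objects $\Ob(\Phi)$ is the identity, so it suffices to show each local functor $\dICosp(\cP,\cC)_V^W\to\tro\cC_V^W$ is an equivalence. Full faithfulness follows from Lemma~\ref{l:examples-of-induced-pseudofunctors}(1): $(\yoneda^\cP_\cC)_*$ is locally fully faithful, and both $\dICosp$ and $\tro\cC$ are locally full sub-bicategories. Essential surjectivity goes by induction on the number of elementary factors. Every track object is isomorphic to some concatenation $E_1*\dots*E_n$ of elementary track objects, and each $E_k$ is the image of an elementary d-cospan $D_k$. Then $D_1*\dots*D_n$ is an interval d-cospan, since the bicategory is closed under horizontal composition; iterating the pushout-preservation isomorphism shows its image is isomorphic to $E_1*\dots*E_n$.

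The main obstacle I expect is the bookkeeping that the class $\mathscr{I}$ genuinely forms a sub-bicategory, in the sense that identity 1-cells lie in it. The identity on $U$ in $\dCosp$ is $(U\xrightarrow{\id}U\xleftarrow{\id}U)$, which is elementary, so this should be harmless. I also need to make sure that minimality of $\dICosp$ is being used in the form ``smallest class of 1-cells containing elementary d-cospans and closed under composition and isomorphism''. I would spell that characterization out first and then run the argument above.
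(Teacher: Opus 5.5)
Your proposal is correct and follows essentially the same route as the paper: restrict $(\yoneda^\cP_\cC)_*$ to interval d-cospans, observe that elementary d-cospans go to elementary track objects and that composites go to concatenations up to the laxity isomorphism, and get local full faithfulness from Lemma~\ref{l:examples-of-induced-pseudofunctors}(1). The only difference is that you spell out the local essential surjectivity (lifting each elementary factor $E_k$ to an elementary d-cospan $D_k$) and the minimality bookkeeping, both of which the paper leaves implicit.
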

\begin{proof}
    The class of objects of all bicategories in the above diagram is $\Ob(\cC)$;  all solid arrows are identities on objects.
    Further, the pseudofunctor $(\yoneda^\cP_\cC)_*$ maps
    elementary d-cospans in $\dICosp(\cP,\cC)$ to elementary track objects in $\Cosp(\psh\cC,\cC)$. Since every d-cospan in $\dICosp(\cP,\cC)$ is isomorphic to a horizontal composition of elementary d-cospans, its image under  $(\yoneda^\cP_\cC)_*$ is a track object. The restriction of $(\yoneda^\cP_\cC)_*$ to $\dICosp(\cP,\cC)$ therefore factors through $\tro\cC$.
    Moreover, all solid pseudofunctors in the diagram are locally fully faithful: by Lemma \ref{l:examples-of-induced-pseudofunctors}(1) for $(\yoneda^\cP_\cC)_*$ and by definition for the two horizontal inclusions. 
\end{proof}

Finally, we show that a d-cospan in a comma category is interval if and only if projects to an interval d-cospan in the base category.

\begin{proposition}
\label{p:interval-slice}
Let $\cC\hookrightarrow\cP$ be a d-cocompletion, $F\colon\cP\to\cE$  a functor sending d-pushouts to pushouts and $L\in\Ob(\cE)$.
Consider the following commutative diagram of bicategories (we write $\cP\slice L$ for $\cP\slice_F L$ and $\cC\slice L$ for $\cC\slice_{F|_\cC} L$):
\begin{equation}
\label{e:slice-interval}
\begin{tikzcd}
    \dICosp(\cP\slice L,\cC\slice L)
        \ar[r,hook]
        \ar[d,"\ProjFun_*"']
    &
    \dCosp(\cP\slice L,\cC\slice L)    
        \ar[d,"\ProjFun_*"]
\\
    \dICosp(\cP,\cC)
        \ar[r,hook]
    &
    \dCosp(\cP,\cC)    
\end{tikzcd}
\end{equation}
where $\ProjFun_*$ is the pseudofunctor induced by $\ProjFun:\cP\slice_FL\to\cP$ from Lemma \ref{l:examples-of-induced-pseudofunctors}. Then
    \[
        \dICosp(\cP\slice L,\cC\slice L)
        =
        (\ProjFun_*)^{-1}(\dICosp(\cP,\cC)).
    \]  
\end{proposition}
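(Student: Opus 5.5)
We prove the two inclusions separately. Both sides are locally strictly full sub-bicategories of $\dCosp(\cP\slice L,\cC\slice L)$ with the same objects $\Ob(\cC\slice L)$. The preimage $(\ProjFun_*)^{-1}(\dICosp(\cP,\cC))$ is closed under isomorphism of d-cospans, since $\ProjFun_*$ preserves isos and $\dICosp(\cP,\cC)$ is locally strictly full. It is also closed under horizontal composition, because $\ProjFun_*$ preserves horizontal composition up to isomorphism: d-pushouts in $\cP\slice L$ are created by $\ProjFun$ (Proposition~\ref{p:slice-creates-pushouts}).

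\emph{Inclusion ``$\subseteq$''.} An elementary d-cospan $(V,\mu)\forrightarrow{}(U,\lambda)\backleftarrow{}(W,\omega)$ in $\cP\slice L$ with objects in $\cC\slice L$ projects to $V\forrightarrow{} U\backleftarrow{} W$ with $U,V,W\in\Ob(\cC)$. This is elementary, and it is a d-cospan because the d-structure on the comma category is reflected from $\cP$. Hence all elementary d-cospans lie in the preimage. By the closure properties above, the preimage is a locally strictly full sub-bicategory containing the elementary ones, so by minimality $\dICosp(\cP\slice L,\cC\slice L)$ is contained in it.

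\emph{Inclusion ``$\supseteq$''.} This is the main step. Take a d-cospan $\Gamma=\big((V,\mu)\forrightarrow{s}(P,\xi)\backleftarrow{t}(W,\omega)\big)$ whose projection $V\forrightarrow{s} P\backleftarrow{t} W$ is interval. By minimality, $\dICosp(\cP,\cC)$ is exactly the class of d-cospans isomorphic to finite horizontal composites of elementary ones; indeed this class is already closed under composition and isomorphism. So we may write
\[
\ProjFun_*\Gamma\cong \Gamma_1*\cdots*\Gamma_n, \qquad \Gamma_k=\big(U_{k-1}\forrightarrow{}P_k\backleftarrow{}U_k\big),
\]
with all $U_k,P_k\in\Ob(\cC)$, and fix an isomorphism $\phi\colon R\to P$ from the iterated d-pushout $R$ compatible with the legs. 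We lift $\xi\circ F(\phi)\colon F(R)\to L$ down the tower of iterated d-pushouts in the following way. Each object $P_k$, $U_k$ maps into $R$ by a composite of structural maps $c$, and we equip it with $\xi\circ F(\phi)\circ F(c)$. This makes every $\Gamma_k$ a d-cospan in $\cP\slice L$ whose objects lie in $\cC\slice L$, so each is elementary. The end labels agree with $\mu$ and $\omega$ because $\phi$ commutes with $s$ and $t$.

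It then remains to check that the horizontal composite of the lifted $\Gamma_k$ in $\cP\slice L$ is isomorphic to $\Gamma$. By Proposition~\ref{p:slice-creates-pushouts}, the d-pushout label on each iterated pushout is the unique morphism $F(-)\to L$ restricting to the given labels. Since $F$ sends d-pushouts to pushouts, that unique map is the one induced from $\xi\circ F(\phi)$. So the composite is $(R,\xi\circ F(\phi))$, and $\phi$ becomes an isomorphism $(R,\xi F(\phi))\to(P,\xi)$ in $\cP\slice L$ commuting with the legs. This proves $\Gamma\in\dICosp(\cP\slice L,\cC\slice L)$.

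The main obstacle is this uniqueness bookkeeping: one has to verify inductively, pushout by pushout, that the labels induced on intermediate stages agree with the created d-pushout labels. An induction on $n$ handles this, using the uniqueness of $\xi$ in \eqref{e:xi-unique} at each step.
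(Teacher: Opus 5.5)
Your proof is correct and follows essentially the paper's argument: the inclusion $\subseteq$ holds because elementary d-cospans project to elementary ones. The inclusion $\supseteq$ restricts the apex label $\xi$ along the structural maps and uses the uniqueness of labels on d-pushouts created by $\ProjFun$ (Proposition~\ref{p:slice-creates-pushouts}). The paper packages this second step as a closure argument: the class of d-cospans in $\cP$ all of whose lifts are interval contains the elementary ones and is closed under binary horizontal composition. That avoids your explicit tower and induction on $n$, while your version spells out the isomorphism bookkeeping (the iso $\phi$ and the isomorphism-closure of the classes) that the paper leaves implicit.
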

\begin{proof}
Since $\ProjFun$ is a d-functor, the pseudofunctor $\ProjFun_*$ sends elementary d-cospans to elementary d-cospans, so that
\[
        \ProjFun_*(\dICosp(\cP\slice L,\cC\slice L))
        \subseteq
        \dICosp(\cP,\cC).
\] 
For the converse inclusion, consider the class 
\[
    \mathscr{P}=\left\{\big(U\forrightarrow{f}P\backleftarrow{g}V\big)\in \dCosp(\cP,\cC)\mid \ProjFun^{-1}(U\forrightarrow{f}P\backleftarrow{g}V)\subseteq \dICosp(\cP\slice L, \cC\slice{L})\right\}.
\]
It contains all elementary d-cospans.
Consider a horizontal composition of d-cospans
\[
    \big(P\forrightarrow{s}S\backleftarrow{t}Q\big)
    *
    \big(Q\forrightarrow{s'}T\backleftarrow{t'}R\big)
    =
    \big(P\forrightarrow{is}X\backleftarrow{jt'}R\big)
\]
as in \eqref{e:dcospan-composition} and suppose $\big(P\forrightarrow{s}S\backleftarrow{t}Q\big), \big(Q\forrightarrow{s'}T\backleftarrow{t'}R\big)\in\mathscr{P}$. For every $\xi\in \cE(F(X),L)$ we have 
\begin{equation*}
    (X,\xi)=\colim\big((S,\xi\circ F(i))\backleftarrow{t} (Q,\xi\circ F(it))\forrightarrow{s'}(T,\xi\circ F(j))\big)
\end{equation*} since $\ProjFun$ creates d-pushouts by Proposition \ref{p:slice-creates-pushouts}. Thus each d-cospan in $(\ProjFun_*)^{-1}\big(P\forrightarrow{is}X\backleftarrow{jt'}R\big)$ is a composition of d-cospans in $\dICosp(\cP\slice L,\cC\slice L)$.
It follows that $\mathscr{P}$ is closed under horizontal composition and therefore
$
    \ProjFun_*^{-1}(\dICosp(\cP,\cC))\subseteq \ProjFun_*^{-1}(\mathscr{P}) \subseteq \dICosp(\cP\slice{L},\cC\slice{L})
$.
\end{proof}

As sieves are a special case of comma categories, we obtain the following corollary.

\begin{lemma}
\label{l:interval-sieve}
    Let $\cC\hookrightarrow\cP$ is a d-cocompletion and $\cD\subseteq\cC$ a sieve. Then
    \[
    \dICosp(\cD\setminus_\cP \cC,\cD)
    \quad\text{and}\quad
    \dICosp(\cC,\cP)\cap \dCosp(\cD\setminus_\cP \cC,\cD)
    \]
    are identical sub-bicategories of $\dCosp(\cC,\cP)$.
\end{lemma}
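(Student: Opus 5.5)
This is the special case of Proposition~\ref{p:interval-slice} obtained for the indicator functor of a sieve, and I would prove it that way. (The notation $\dCosp(\cC,\cP)$ in the statement should be read as $\dCosp(\cP,\cC)$.) Write $\cQ=\cD\setminus_\cP\cC$ and $\chi=\chi^\cP_\cQ\colon\cP\to\{1\to0\}$. By Proposition~\ref{p:subcat-cocompletion}(1), $\chi$ sends d-pushouts to pushouts. By Lemma~\ref{l:Q-domain-closed}(3), $\chi|_\cC=\chi^\cC_\cD$. So Proposition~\ref{p:interval-slice} applies with $F=\chi$, $\cE=\{1\to 0\}$ and $L=1$. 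It gives
\[
\dICosp(\cP\slice_\chi 1,\cC\slice_{\chi|_\cC}1)=(\ProjFun_*)^{-1}(\dICosp(\cP,\cC)).
\]

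The next step is to transport this equality along the isomorphisms $\cQ\cong\cP\slice_\chi 1$ and $\cD\cong\cC\slice_{\chi|_\cC}1$ of Lemma~\ref{l:indicator-functor}, given by $P\mapsto(P,\id_1)$. Over $L=1$ the only morphism $\chi(P)\to 1$ is $\id_1$, and it exists exactly when $P\in\Ob(\cQ)$. Hence $\ProjFun\colon\cP\slice_\chi1\to\cP$ is, up to this isomorphism, the full inclusion $\cQ\hookrightarrow\cP$. Because $\cQ$ is full and carries the reflected d-structure, it is a full d-subcategory. The pseudofunctor $\ProjFun_*$ therefore becomes the inclusion $\dCosp(\cQ,\cD)\hookrightarrow\dCosp(\cP,\cC)$.

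This inclusion is well defined and is the identity on cospans. The d-pushouts in $\cQ$ are those of $\cP$, because $\ProjFun$ creates d-pushouts (Proposition~\ref{p:slice-creates-pushouts}). The same creation property is what makes horizontal composition agree in the two bicategories. The preimage $(\ProjFun_*)^{-1}(\dICosp(\cP,\cC))$ is then exactly the set of d-cospans of $\cQ$ with feet in $\cD$ that are interval in $\cP$. That set is $\dICosp(\cP,\cC)\cap\dCosp(\cQ,\cD)$, and the left-hand side becomes $\dICosp(\cQ,\cD)$, which proves the lemma.

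The main obstacle is purely bookkeeping: checking that these identifications respect the d-structures and the choice of d-pushouts, so that ``identical sub-bicategories'' holds literally rather than only up to biequivalence. Elementary d-cospans correspond on the nose, since their apexes lie in $\cD$ by the sieve property. Both sides are defined as locally strictly full sub-bicategories closed under composition, so I expect no further difficulty.
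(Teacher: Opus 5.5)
Your proposal is correct and follows essentially the same route as the paper. You apply Proposition~\ref{p:interval-slice} to the indicator functor $\chi^\cP_{\cD\setminus_\cP\cC}$ with $L=1$, justified by Proposition~\ref{p:subcat-cocompletion}(1), and use Lemmas~\ref{l:indicator-functor} and~\ref{l:Q-domain-closed} to identify the comma categories with $\cD\setminus_\cP\cC$ and $\cD$, so that $\ProjFun_*$ becomes an inclusion and diagram~\eqref{e:slice-interval} becomes a square of inclusions.

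One small attribution fix: an elementary d-cospan of $\dCosp(\cP,\cC)$ lying in $\dCosp(\cD\setminus_\cP\cC,\cD)$ has its apex in $\cD$ because $(\cD\setminus_\cP\cC)\cap\cC=\cD$ (Lemma~\ref{l:Q-domain-closed}(2)), not by the sieve property of $\cD$ alone. This point is in any case already covered by Proposition~\ref{p:interval-slice}.
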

\begin{proof}
    By Proposition \ref{p:subcat-cocompletion}, the indicator functor $\chi^\cP_{\cD\setminus_\cP\cC}$ sends pushouts to pushouts and hence satisfies the conditions of Proposition \ref{p:interval-slice}. Taking into account that $\chi^\cP_{\cD\setminus_\cP\cC}|_{\cC}=\chi^\cC_\cD$, $\cD\setminus_\cP\cC\cong \cP\slice_{\chi^\cP_{\cD\setminus_\cP\cC}}1$ and $\cD\cong \cC\slice_{\chi^\cC_\cD}1$ by Lemmas \ref{l:indicator-functor} and \ref{l:Q-domain-closed}, Diagram \eqref{e:slice-interval} takes the form 
    \[
\begin{tikzcd}
    \dICosp(\cD\setminus_\cP\cC,\cD)
        \ar[r,hook]
        \ar[d,hook]
    &
    \dCosp(\cD\setminus_\cP\cC,\cD)    
        \ar[d,hook]
\\
    \dICosp(\cP,\cC)
        \ar[r,hook]
    &
    \dCosp(\cP,\cC)    
\end{tikzcd}    
    \]
    The conclusion follows from Proposition \ref{p:interval-slice}.
\end{proof}

\section{Enriched c-sets and c-posets} 
\label{s:enriched-posets}
\def\PE{D}

The pomsets considered in \cite{FJSZ_HDALang, FJSZ-Kleene} are not merely c-posets. Beyond precedences, they are equipped with an event order and a labelling. We therefore generalise Proposition~\ref{p:concset-cocompletion}  to a larger class of categories beyond $\cSet$ and $\cPos$.

First we generalise the event order used for higher-dimensional automata~\cite{FJSZ_HDALang, FJSZ-Kleene} from a total order on antichains to an arbitrary event relation between non-parallel elements and equip c-posets with such event relations. Then we introduce labelled c-sets, c-posets and c-posets with event relations. Finally, we introduce a class of parametric full d-subcategories of labelled c-posets whose event relations are suitable for subsuming the base categories of various higher-dimensional automata.

Throughout this section, $\Sigma$ is a fixed alphabet.
For the following definitions, recall that we write $\parallel$ for the incomparability relation associated with a precedence of a c-poset.

\begin{definition}
\label{d:labelling-and-event-relation}
    Let $P$ be a c-poset.
    \begin{enumerate}
    \item
        A \emph{labelling} on $P$ is a function $\lambda:P\to \Sigma$.
    \item
        An \emph{event relation} on $P$ is a binary relation $\intord$ on $P$ such that $p\intord p' \implies p\parallel p'$.
    \end{enumerate}
\end{definition}

\begin{definition}
\label{d:preservation}
    Let $P$ and $Q$ be c-posets, $\lambda_P$ and $\lambda_Q$ labellings, and $\intord_P$ and $\intord_Q$ event relations on $P$ and $Q$, respectively. A c-poset map $f:Q\to P$
    \begin{enumerate}
    \item
        \emph{preserves labelling} if $\lambda_Q=\lambda_P\circ f$,
    \item 
        \emph{preserves event relation} if
        $q\parallel_Q q' \implies (q\intord_Qq'\iff f(q)\intord_P f(q'))$ for all $q,q'\in Q$.
    \end{enumerate}
\end{definition}

\begin{definition} {\ }
\label{d:enriched-posets}
\begin{enumerate}
\item 
    An \emph{lc-poset} (\emph{ec-poset}, \emph{lec-poset}) is a c-poset with a labelling (an event relation, a labelling and an event relation).
\item
    An \emph{lc-poset map} (\emph{ec-poset map}, \emph{lec-poset map}) is a c-poset map that preserves the labelling (event relation, both labelling and event relation).
    We write $\lcPos$ ($\ecPos$, $\lecPos$) for the d-category of lc-posets (ec-, lec-) and lc-poset maps (ec-, lec-), with the d-structure reflected from $\cPos$.
\item 
    An \emph{lc-set} (\emph{ec-set}, \emph{lec-set}) is a lc-poset (ec-poset, lec-poset) that has the empty precedence order. We write $\lcSet\hookrightarrow \lcPos$ for the full d-subcategory of lc-sets ($\ecSet\hookrightarrow\ecPos$ for ec-sets, $\lecSet\hookrightarrow\lecPos$ for lec-sets).
\end{enumerate}
\end{definition}

\begin{definition}
\label{d:v-poset}
	Let ${\cV}\rightarrow {\lecSet}$ be a sieve.
    A \emph{$\cV$-poset} is an lec-poset $P$ such that 
    $\lecPos(U,P)=\emptyset$ for all $U\in\Ob(\lecSet)\setminus \Ob(\cV)$.
    We write $\cV\Pos\hookrightarrow\lecPos$ for the full d-subcategory of $\cV$-posets.
\end{definition}

By definition, $\cV\Pos=\cV\setminus_{\lecPos}\lecSet$ is the residuation of $\cV$ and $\lecSet$ relative to $\lecPos$. It is thus a sieve by Lemma \ref{l:Q-domain-closed}.

Figure \ref{fig:poset-cats} shows the relationship between the d-categories introduced above. As for c-sets, lc-sets (ec-sets, lec-sets) are regarded as lc-posets (ec-posets, lec-posets) with empty precedence.

\begin{figure}
    \centering
    \begin{equation*}
\begin{tikzcd}[column sep=.5cm]
    \CSet
        \ar[rr, hook, shift left=3pt]
        \ar[dd,hook]
    &&
    \ecSet
        \ar[dd,hook]
        \ar[ll,shift left=3pt]
    &&&
\\
 &
    \lcSet
        \ar[rr, hook, crossing over, shift left=3pt] 
        \ar[ul]
    &&
    \lecSet
        \ar[dd, hook]
        \ar[ll,shift left=3pt, crossing over]
        \ar[ul]
    &&&
    \cV
        \ar[lll, hook]
        \ar[dd,hook]
\\
    \cPos
        \ar[rr,hook, shift left=3pt]
    &&
    \ecPos
        \ar[ll,shift left=3pt]
    &&&
\\
    &
    \lcPos
        \ar[rr,hook, shift left=3pt]
        \ar[from=uu, hook, crossing over]
        \ar[ul]
    &&
    \lecPos
        \ar[ll,shift left=3pt]
        \ar[ul]
    &&&
    \cV\Pos
        \ar[lll,hook]\\
\end{tikzcd}
\end{equation*}
    \caption{Categories of enriched sets and posets with event order and labels. Hooked arrows denote inclusions of full d-subcategories; the others\ are forgetful functors. Forgetful functors from c-posets to c-sets are omitted.}
    \label{fig:poset-cats}
\end{figure}


We now show that the categories $\cPos$, $\ePos$, $\lecPos$ and $\cV\Pos$ form a sequence in which successor categories are comma categories with respect to their predecessors. The category $\lcPos$ also fits into this picture, but we ignore it here.

First, we consider $\cPos$ and $\ecPos$.
Let $\mathbf{2}=\{0,1\}$.
We wish to encode event relations on a c-poset $P$ as $\mathbf{2}$-valued functions on pairs of $<$-incomparable elements.
To this end, we define a functor $\PE\colon \cPos\to\Set$ for any $P,Q\in\Ob(\cPos)$ and any c-poset map $f\colon Q\to P$ by
\begin{equation*}
    \PE(P)=\{(p,p')\in P\times P \mid p\parallel p' \}
    \qquad\text{and}\qquad
    \PE(f)(q,q')=(f(q),f(q')).    
\end{equation*}
This is indeed a functor since $f$ reflects precedence. Next, for any ec-poset $P$, let $\chi^P_{\intord}:\PE(P)\to\mathbf{2}$ the indicator map  sending each pair $(p,p')$ to $1$ if $p\intord p'$ and to $0$ otherwise. 

Using this data, we define the functor $\Phi\colon \ecPos\to \cPos\slice_\PE \mathbf{2}$ by 
\begin{equation*}
\Phi(P)=(\Forget^{\ePos}_{\cPos}(P),\chi^P_{\intord})\qquad\text{and}\qquad \Phi(f)=f 
\end{equation*}
for every ec-poset $P$ and ec-poset map $f$.
It is an isomorphism on objects since only parallel elements of $P$ can be related by $\intord$. For functoriality, suppose $f\colon Q\to P$ is a c-map between ec-posets. For every $(q,q')\in \PE(Q)$ we have
$\chi^P_{\intord}(\PE(f)(q,q'))=
\chi^P_{\intord}(f(q),f(q'))=1$ if and only if 
$f(q)\intord f(q')$. 
As $\chi^Q_{\intord}(q,q')=1$ if and only if $q\intord q'$ by definition. It therefore follows that
$\chi^P_{\intord}\circ \PE(f)=\chi^Q_{\intord}$ if and only if $f$ is an ec-poset map. Hence a c-poset map is an ec-poset map if and only if it is a morphism in $\cPos\slice_\PE 2$, and we have shown the following fact.

\begin{lemma}
\label{l:epos-is-slice}
The functor $\Phi\colon \ecPos\to \cPos\slice_\PE \mathbf{2}$  is an isomorphism of categories.
\end{lemma}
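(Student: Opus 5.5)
To show that $\Phi\colon \ecPos\to \cPos\slice_\PE \mathbf{2}$ is an isomorphism of categories, I will show that it is bijective on objects and fully faithful. Since $\Phi(f)=f$ on underlying c-poset maps, faithfulness is immediate; the work lies in bijectivity on objects and fullness.

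\textbf{Objects.} An ec-poset is a pair $(P,\intord)$ of a c-poset $P$ and a relation $\intord\subseteq P\times P$ with $\intord\subseteq{\parallel}$, that is, $\intord\subseteq \PE(P)$. Subsets of $\PE(P)$ correspond bijectively to functions $\PE(P)\to\mathbf{2}$ via indicator functions. Hence $(P,\intord)\mapsto (P,\chi^P_{\intord})$ is a bijection onto the objects $(P,\chi)$ of $\cPos\slice_\PE\mathbf{2}$. The inverse sends $\chi$ to the relation $\chi^{-1}(1)$, which lies in ${\parallel}$ by construction.

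\textbf{Morphisms.} A morphism $(Q,\chi^Q_\intord)\to(P,\chi^P_\intord)$ in the comma category is a c-poset map $f\colon Q\to P$ with $\chi^P_\intord\circ\PE(f)=\chi^Q_\intord$. First I need $\PE(f)$ to be well defined, i.e.\ that $f$ sends parallel pairs to parallel pairs. Suppose $q\parallel q'$. Injectivity of $f$ gives $f(q)\neq f(q')$. If $f(q)<f(q')$, reflection of precedence would give $q<q'$, a contradiction; the reverse case is symmetric. Hence $f(q)\parallel f(q')$.

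Unwinding the commuting condition pointwise on $(q,q')\in\PE(Q)$ gives exactly: $q\intord_Q q'\iff f(q)\intord_P f(q')$ whenever $q\parallel_Q q'$. This is precisely the definition of preserving the event relation (Definition~\ref{d:preservation}). So the c-poset maps that are ec-poset maps are exactly the comma-category morphisms, which gives fullness.

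\textbf{Remaining checks.} Functoriality holds trivially, since composition and identities are computed on underlying c-maps in both categories. Both categories carry the d-structure reflected from $\cPos$, so $\Phi$ and its inverse are d-functors as well; this is worth noting for later use.

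There is no real obstacle. The only point requiring care is the well-definedness of $\PE(f)$, which is exactly where reflection of precedence is used.
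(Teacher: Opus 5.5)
Your proof is correct and follows the paper's argument. Objects correspond bijectively because an event relation is exactly a subset of $D(P)$, i.e.\ an indicator function $D(P)\to\mathbf{2}$. The commuting condition $\chi^P\circ D(f)=\chi^Q$ unwinds pointwise to the event-relation preservation clause, so the morphism sets coincide; your explicit check that $D(f)$ is well defined, via injectivity and reflection of precedence, is the paper's remark that $D$ is a functor because $f$ reflects precedence.
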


Encoding the labelling in a similar way is routine;  the next lemma is straightforward.

\begin{lemma}
\label{l:labelling-is-slice}
    The functor $\lecPos\to \ecPos\slice_{\Forget^{\ePos}_\Set}\Sigma$ given by $(P,\lambda)\mapsto (P,\lambda)$ is an isomorphism.
\end{lemma}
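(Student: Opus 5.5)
We need to prove that the functor $(P,\lambda)\mapsto(P,\lambda)$ from $\lecPos$ to $\ecPos\slice_{\Forget^{\ePos}_\Set}\Sigma$ is an isomorphism of categories. We argue exactly as for Lemma~\ref{l:epos-is-slice}. An object of $\ecPos\slice_{\Forget}\Sigma$ is a pair $(P,\lambda)$, where $P$ is an ec-poset and $\lambda\colon\Forget(P)\to\Sigma$ is a function on the underlying set. By Definition~\ref{d:labelling-and-event-relation}, this is precisely a labelling of $P$, and the labelling imposes no constraint on the precedence or on the event relation. Hence an lec-poset, viewed as an ec-poset together with a labelling, is literally the same data as an object of the comma category. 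The assignment is therefore a bijection on objects, with inverse sending $(P,\lambda)$ to the lec-poset obtained from $P$ by adding the labelling $\lambda$.

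On morphisms, take lec-posets $Q$ and $P$ and a map $f\colon Q\to P$ of ec-posets. By the definition of comma-category morphisms, $f$ is a morphism $(Q,\lambda_Q)\to(P,\lambda_P)$ exactly when $\lambda_Q=\lambda_P\circ\Forget(f)$. This is condition (1) of Definition~\ref{d:preservation}. So the ec-poset maps that are comma morphisms are exactly those that also preserve labelling, that is, exactly the lec-poset maps. Both categories compose morphisms through the underlying c-maps, so the identity-on-maps assignment is a functor. Its inverse is also a functor, and the two hom-set bijections are mutually inverse.

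Finally, we check the d-structures. Both $\lecPos$ and the comma category carry d-structure reflected along forgetful functors to $\cPos$: for the comma category via $\ProjFun$ followed by $\ecPos\to\cPos$, and for $\lecPos$ by Definition~\ref{d:enriched-posets}. The isomorphism commutes with these forgetful functors, so it preserves and reflects for- and backmorphisms and is therefore an isomorphism of d-categories. The only point needing care is to state explicitly that the forgetful functor used in the comma construction forgets all the way to $\Set$, so that labellings are arbitrary functions on the underlying set. Apart from that, the argument is bookkeeping.
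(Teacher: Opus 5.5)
Your proof is correct and matches the paper's approach. The paper gives no separate argument and calls the lemma straightforward by analogy with Lemma~\ref{l:epos-is-slice}. Your check is exactly that routine verification: a bijection on objects, the comma-morphism condition $\lambda_Q=\lambda_P\circ\Forget(f)$ coinciding with label preservation, and agreement of the d-structures reflected from $\cPos$.
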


We summarise these relationships in the following diagram ($\cV\hookrightarrow\lecSet$ is a sieve as before).
\begin{equation}
\label{e:big-dcoco}
    \begin{tikzcd}[column sep=tiny]
        \CSet
        \ar[dd, hook]
        &&
        {\ecSet}
        \ar[ll]
        \ar[dd, hook]
        \ar[dl,"\cong"]
        &&
        \lecSet
        \ar[ll]
        \ar[dd, hook]
        \ar[dl,"\cong"]
        &&
        \cV
        \ar[ll,hook]
        \ar[dd, hook]
    \\
        &
        \CSet\slice_\PE \mathbf{2}
        \ar[ul]
                &&
        \ecSet\slice_{\Forget_\Set^\ePos}\Sigma
        \ar[ul]
        &&
        {\ \ }
    \\
        \cPos
        &&
        \ePos
        \ar[ll]
        \ar[dl,"\cong", "\Phi"']
        &&
        \lecPos
        \ar[ll]
        \ar[dl,"\cong"]
        &&
        \cV\Pos
        \ar[ll,hook]
    \\
        &
        \cPos\slice_\PE \mathbf{2}
        \ar[from=uu, hook, crossing over]
        \ar[ul]
        &&
        \ecPos\slice_{\Forget_\Set^\ePos}\Sigma
        \ar[from=uu, hook, crossing over]
        \ar[ul]
        &&
    \end{tikzcd}
\end{equation}
The arrows in the upper part of the diagram are simply the restrictions of the lower arrows to discrete c-posets.
All hooked arrows are inclusions of full d-subcategories, those labelled with $\cong$ are equivalences of d-categories by Lemmas \ref{l:epos-is-slice} and \ref{l:labelling-is-slice}, and the remaining are forgetful functors. Commutativity of the diagram is immediate from the definitions.


The final goal in this section is to show that all vertical functors in Diagram \eqref{e:big-dcoco} are d-cocomple\-tions. We know from Proposition~\ref{p:concset-cocompletion} that $\cPos$ is a d-cocompletion of $\CSet$. It remains to consider this relationship for the c-sets and c-posets introduced in this section: $\ecSet$  and $\ecPos$, $\lecSet$ and $\lecPos$ as well as $\cV$ and $\cV\Pos$. Applying Theorem \ref{t:SlicedCocompletion} requires checking that the d-functors between these pairs of d-categories preserve d-pushouts.

\begin{lemma}
\label{l:label-preserves-pushouts}
The functors $\PE\colon\cPos\to\CSet$ and  $\Forget^\ePos_\Set$ send d-pushouts to pushouts.
\end{lemma}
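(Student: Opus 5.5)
The second claim is the easier one and I would start there. The forgetful functor $\Forget^\ePos_\Set$ factors as $\ecPos\to\cPos\to\CSet\to\Set$. The first factor is the projection $\ProjFun\colon\cPos\slice_\PE\mathbf{2}\to\cPos$ precomposed with the isomorphism $\Phi$ of Lemma~\ref{l:epos-is-slice}. Granting the first claim, Proposition~\ref{p:slice-creates-pushouts} shows that this projection creates, and hence preserves, d-pushouts. The forgetful functor $\cPos\to\CSet$ preserves d-pushouts by Theorem~\ref{t:cpos-has-proper-pushouts}(2). Finally, $\CSet\to\Set$ sends d-pushouts to pushouts by Proposition~\ref{p:cset-is-dcocomplete}, since the underlying square of a d-pushout is an injective pushout square by Lemma~\ref{l:dsquares-are-injective-pushouts}. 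So the real work is the claim about $\PE$. Its target is really $\Set$, since $\PE$ lands in plain sets of pairs.

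For $\PE$, fix a d-pushout of the form \eqref{e:cposet-concatenation}, with $S=i(P)\cup j(Q)$ and precedence \eqref{e:concat-precedence}. The functions $\PE(f),\PE(g),\PE(i),\PE(j)$ are restrictions of the injections $f\times f$, $g\times g$, $i\times i$, $j\times j$, so they are injective. By Lemma~\ref{l:inj-pushout-crit}, it suffices to prove two things:
\begin{enumerate}
\item[(a)] $\PE(S)=\PE(i)(\PE(P))\cup\PE(j)(\PE(Q))$;
\item[(b)] $\PE(R)\to\PE(i)(\PE(P))\cap\PE(j)(\PE(Q))$ is a bijection.
\end{enumerate}

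For (a), take $(s,s')$ with $s\parallel_S s'$. If both lie in $i(P)$, then $(i^{-1}s,i^{-1}s')\in\PE(P)$, because $i$ is a precedence embedding by claim (B) in the proof of Theorem~\ref{t:cpos-has-proper-pushouts}; the case where both lie in $j(Q)$ is the same. The remaining case is $s\in i(P)\setminus j(Q)$ and $s'\in j(Q)\setminus i(P)$, or the reverse. In either order \eqref{e:concat-precedence} forces $s<_S s'$ or $s'<_S s$, contradicting $s\parallel_S s'$. This cross case is the only point that uses the specific precedence of the d-pushout, and I expect it to be the main (though short) obstacle.

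For (b), a pair lying in both images has both coordinates in $i(P)\cap j(Q)=if(R)$, by Lemma~\ref{l:pushout-square-desc}. Write the coordinates as $if(r)$ and $if(r')$. Since $if$ is a composite of precedence embeddings, $r\parallel_R r'$, so the pair has a preimage in $\PE(R)$; injectivity is already known. Hence the $\PE$-image square is a pushout in $\Set$.
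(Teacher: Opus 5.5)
Your proof is correct and follows the paper's route: the forgetful functor goes through the same factorisation $\ecPos\cong\cPos\slice_D\mathbf{2}\to\cPos\to\CSet\to\Set$. For $D$, you use, as the paper does, the cross-precedence in \eqref{e:concat-precedence} to get $D(S)=D(i)(D(P))\cup D(j)(D(Q))$, and preservation of precedence to identify the intersection with $D(if)(D(R))$; the only difference is that you make explicit the appeal to Lemma~\ref{l:inj-pushout-crit} and the fact that $D$ lands in $\Set$, which the paper leaves implicit.
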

\begin{proof}
For $D$, consider a pushout $S= \colim\big(P\backleftarrow{f} R\forrightarrow{g} Q\big)$ in $\cPos$. Let $i\colon P\to S$ and $j\colon Q\to S$ denote the cocone c-poset maps. The construction of $<$ on $S$ in
 formula \eqref{e:concat-precedence} guarantees that $p<q$ whenever $p\in i(P)\setminus j(Q)$ and $q\in j(Q)\setminus i(P)$. Therefore,  
 \[
    \PE(S)\cong \PE(i)(\PE(P))\cup \PE(j)(\PE(Q)).
\]
Furthermore, $\PE(i)(\PE(P))\cap \PE(j)(\PE(Q))=\PE(if)(\PE(R))$ since $f$ and $g$ preserve and reflect $<$. 
Thus $\PE(S)\cong \colim(\PE(P)\xleftarrow{\PE(f)}\PE(R)\xrightarrow{\PE(g)}\PE(Q))$.
 
 For $\Forget^\ePos_\Set$, note that this functor is a composition
 \[
 \ePos\xrightarrow[\cong]{\Phi} \cPos\slice_\PE\mathbf{2}\xrightarrow{\ProjFun} \cPos\xrightarrow{\Forget}\cSet\xrightarrow{\Forget}\Set,
 \]
 where $\Phi$ is an isomorphism by Lemma \ref{l:epos-is-slice}, $\ProjFun$ and $\Forget^\cPos_\cSet$ preserve d-pushouts by Proposition \ref{p:slice-creates-pushouts} and Corollary \ref{t:cpos-has-proper-pushouts}, while $\Forget^\cSet_\Set$ sends d-pushouts to pushouts by Proposition \ref{p:cset-is-dcocomplete}.
\end{proof}

\begin{proposition}
\label{p:enriched-dcoco}
    The inclusions $\ecSet\hookrightarrow\ecPos$ and
    ${\lecSet}\hookrightarrow\lecPos$ are d-cocompletions, and  $\cV\hookrightarrow \cV\Pos$ is a d-cocompletion for every sieve $\cV\hookrightarrow\lecSet$.
\end{proposition}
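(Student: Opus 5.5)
The plan is to apply Theorem~\ref{t:SlicedCocompletion} twice, walking along the vertical arrows of Diagram~\eqref{e:big-dcoco}, and then finish with Proposition~\ref{p:subcat-cocompletion}.

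\emph{Step 1: $\ecSet\hookrightarrow\ecPos$.} Start from the d-cocompletion $\cSet\hookrightarrow\cPos$ of Proposition~\ref{p:concset-cocompletion}. Take $F=\PE\colon\cPos\to\Set$ and $L=\mathbf{2}$. By Lemma~\ref{l:label-preserves-pushouts}, $\PE$ sends d-pushouts to pushouts, so Theorem~\ref{t:SlicedCocompletion} shows that $\cSet\slice_{\PE|_{\cSet}}\mathbf{2}\hookrightarrow\cPos\slice_\PE\mathbf{2}$ is a d-cocompletion.

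Next, transport this along the isomorphism $\Phi$ of Lemma~\ref{l:epos-is-slice}. Two points need checking here.
\begin{itemize}
\item $\Phi$ restricts to an isomorphism $\ecSet\cong\cSet\slice_\PE\mathbf{2}$. This holds because a c-poset lies in $\cSet$ exactly when its precedence is empty.
\item $\Phi$ is an isomorphism of d-categories. Both sides carry the d-structure reflected from $\cPos$, and $\Phi$ commutes with the projections to $\cPos$.
\end{itemize}
Every condition in Definition~\ref{d:dCocompletion} is invariant under isomorphisms of d-categories compatible with the inclusions. Hence $\ecSet\hookrightarrow\ecPos$ is a d-cocompletion.

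\emph{Step 2: $\lecSet\hookrightarrow\lecPos$.} Apply Theorem~\ref{t:SlicedCocompletion} again, now to $\ecSet\hookrightarrow\ecPos$ with $F=\Forget^{\ecPos}_\Set$ and $L=\Sigma$. Lemma~\ref{l:label-preserves-pushouts} supplies the required property that $F$ sends d-pushouts to pushouts. The isomorphism of Lemma~\ref{l:labelling-is-slice} restricts to $\lecSet\cong\ecSet\slice\Sigma$ and is compatible with the d-structures, so the same transport argument as in Step~1 applies.

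\emph{Step 3: $\cV\hookrightarrow\cV\Pos$.} Let $\cV\hookrightarrow\lecSet$ be a sieve. By definition, $\cV\Pos=\cV\setminus_{\lecPos}\lecSet$. Proposition~\ref{p:subcat-cocompletion}(2), applied to the d-cocompletion $\lecSet\hookrightarrow\lecPos$ from Step~2 and the sieve $\cV\subseteq\lecSet$, then yields directly that $\cV\hookrightarrow\cV\Pos$ is a d-cocompletion.

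The only real obstacle is the bookkeeping in the transport steps. One must check that the isomorphisms $\Phi$ and the labelling isomorphism are isomorphisms of \emph{d-categories} that restrict correctly to the discrete subcategories, which is what makes the d-cocompletion properties transfer. For this I will verify that both isomorphisms are the identity on underlying c-poset maps. That gives d-structure compatibility immediately, and it also shows that they preserve the property of having empty precedence.
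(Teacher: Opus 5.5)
Your proposal is correct and is essentially the paper's own proof: it applies Theorem~\ref{t:SlicedCocompletion} twice (first to $\cSet\hookrightarrow\cPos$ with $L=\mathbf{2}$, then to $\ecSet\hookrightarrow\ecPos$ with $L=\Sigma$), transports along the isomorphisms of Lemmas~\ref{l:epos-is-slice} and~\ref{l:labelling-is-slice}, and finishes with Proposition~\ref{p:subcat-cocompletion} using $\cV\Pos=\cV\setminus_{\lecPos}\lecSet$. Your explicit check that these isomorphisms respect the d-structures and restrict to the discrete subcategories only spells out what the paper leaves implicit.
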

\begin{proof}
    The inclusion $\CSet\hookrightarrow \cPos$ is a d-cocompletion by Proposition \ref{p:concset-cocompletion}. Since $\PE\colon \cSet\to \Set$ sends d-pushouts to pushouts (Lemma~\ref{l:label-preserves-pushouts}), we can  apply Theorem \ref{t:SlicedCocompletion} to show that the composition
    \[\ecSet\cong \CSet\slice_{\PE} \mathbf{2}
        \hookrightarrow
        \cPos\slice_{\PE} \mathbf{2}
        \cong\ePos
    \]
    is a d-cocompletion (the equivalences follow from Lemma \ref{l:epos-is-slice}).  
    Likewise, $\Forget^{\ecSet}_{\Set}$ sends d-pushouts to pushouts. Hence Theorem \ref{t:SlicedCocompletion} shows that the composition
    \[
        {\lecSet}\cong \ecSet\slice_{\Forget^\ecSet_\Set}\Sigma
        \hookrightarrow
        \ecPos\slice_{\Forget^\ecPos_\Set}\Sigma
        \cong
        \lecPos    
    \]
    is a d-cocompletion (these equivalences follow from Lemma~\ref{l:labelling-is-slice}). Finally, combining this result with Proposition \ref{p:subcat-cocompletion} shows that  $\cV\hookrightarrow\cV\Pos$ is a d-cocompletion.
\end{proof}

\section{Interval posets}
\label{s:interval-posets}

\def\MAC{\mathsf{M}}

We now prove our main result, which relates track objects on various categories of enriched c-sets with corresponding enrichments of interval posets.

First we recall the definition of interval posets. Let $\MAC(P)$ denote the set of all maximal antichains in the c-poset $P$. We define a partial order $\preceq$ on $\MAC(P)$ by
\[
A\preceq B \iff \forall a\in A, \forall b\in B.\ b\not<a.
\]
\begin{definition}
\label{d:interval-order}
     An \emph{interval c-poset} is a c-poset $P$ in which $\MAC(P)$ is totally ordered by $\preceq$.
\end{definition}
For this and equivalent definitions of interval orders, as well as more on this topic, see \cite{Fishburn-Interval_Orders}.

Interval c-posets form the objects of the full d-subcategory $\cIPos$ of $\cPos$.  
Likewise, we write $\ecIPos\hookrightarrow \ecPos$, $\lecIPos\hookrightarrow\lecPos$ and $\cV\IPos\hookrightarrow\cV\Pos$ for the full d-subcategories of interval ec-posets, interval lec-posets and interval $\cV$-posets, respectively.

\begin{lemma}
\label{l:interval-concat}
Let $P$ and $Q$ be interval c-posets, let $U$ be a c-set. Then $\colim\big(P\backleftarrow{f} U \forrightarrow{g} Q\big)$ is an interval c-poset. The bicategory $\dCosp(\cIPos,\cSet)$ is thus well-defined.
\end{lemma}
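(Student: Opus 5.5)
Set $S=\colim\big(P\backleftarrow{f}U\forrightarrow{g}Q\big)$, realised as in Theorem~\ref{t:cpos-has-proper-pushouts}: after identifying $P$ and $Q$ with their images under $i$ and $j$, we have $S=P\cup Q$ and $P\cap Q=U$. The precedence is \eqref{e:concat-precedence}, where $U$ is discrete. I will use the standard characterisation of interval orders (Fishburn): a poset is interval iff it contains no induced $\mathbf{2}+\mathbf{2}$, that is, no $a<b$, $c<d$ with $a\not<d$ and $c\not<b$. The same argument could be phrased through $\MAC$ and $\preceq$, but the $\mathbf{2}+\mathbf{2}$ criterion is shorter. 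So assume $a<b$, $c<d$ in $S$ form a $\mathbf{2}+\mathbf{2}$; we derive a contradiction.

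Write $P'=P\setminus U$ and $Q'=Q\setminus U$. The facts I need come from \eqref{e:concat-precedence} and claims (A), (B) of that proof. First, $P$ is down-closed and $Q$ up-closed in $S$. Second, $P$ and $Q$ carry their own orders. Third, every element of $P'$ is below every element of $Q'$. Fourth, $U$ is an antichain: since $f$ is a backmorphism, $\varepsilon_f$ is $*$ on $U$ and $1$ on $P'$, and no $P'$-element lies above a $U$-element (by weak order preservation and up-closure); symmetrically no element of $Q'$ lies below an element of $U$. Thus any relation $x<y$ in $S$ lies inside $P$, lies inside $Q$, or has $x\in P'$ and $y\in Q'$.

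Now I split into cases on the two relations. If both $a<b$ and $c<d$ lie inside $P$, the four elements lie in $P$ and form a $\mathbf{2}+\mathbf{2}$ there, since $P$ is relation-embedded by (B). This contradicts $P$ being interval; the case where both lie inside $Q$ is symmetric. Otherwise, without loss of generality $a<b$ is not inside $Q$, so $a\in P'$. Also $c<d$ is not inside $P$, so $d\in Q'$. But then $a<d$ by the third fact, which contradicts $a\not<d$. I expect this mixed case to be the only delicate point, namely checking that the three kinds of relations really exhaust all cases. They do, because $U$ is discrete and $P'$ lies entirely below $Q'$.

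For the last sentence of the lemma: the hom-categories of $\dCosp(\cIPos,\cSet)$ have d-cospans with interval apex, and composition uses d-pushouts over c-sets. These pushouts exist in $\cPos$ by Theorem~\ref{t:cpos-has-proper-pushouts}, and by what we just proved their apex is interval. Since $\cIPos$ is full in $\cPos$, they remain d-pushouts in $\cIPos$. So Definition~\ref{d:d-cospan} applies with $\mathscr{A}=\Ob(\cSet)$.
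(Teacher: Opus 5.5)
Your proof is correct, but it takes a different route from the paper's.

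The paper works directly with its own definition of interval c-posets. Since every element of $S\setminus j(Q)$ precedes every element of $S\setminus i(P)$, each maximal antichain of $S$ lies in $i(P)$ or in $j(Q)$. Two maximal antichains on the same side are compared using totality of $\preceq$ in $P$ or in $Q$. In the mixed case $A\subseteq i(P)$, $B\subseteq j(Q)$, a relation $b<a$ would force $a,b\in i(P)\cap j(Q)=if(U)$ by down-/up-closedness, which contradicts discreteness of $U$.

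You instead switch to the $\mathbf{2}+\mathbf{2}$-free characterisation and do a case analysis on the two relations of a putative $\mathbf{2}+\mathbf{2}$. Your inputs are the same ones the paper uses: claims (A) and (B), $P\setminus U$ lying below $Q\setminus U$, and $U$ being an antichain.
\begin{itemize}
\item \emph{What your route buys.} It is local: you never reason about maximal antichains of $S$. In particular you avoid a step the paper leaves implicit, namely that a maximal antichain of $S$ contained in $i(P)$ pulls back to a \emph{maximal} antichain of $P$.
\item \emph{What it costs.} It relies on the equivalence between the paper's definition and $\mathbf{2}+\mathbf{2}$-freeness. This is standard, and the paper itself points to Fishburn for equivalent definitions. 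Still, it is a short check worth including:
  \begin{itemize}
  \item If $a<b$, $c<d$, $a\not<d$, $c\not<b$, then $\{a,d\}$ and $\{b,c\}$ are antichains, and any maximal antichains extending them are $\preceq$-incomparable.
  \item Conversely, $\preceq$-incomparable $A,B$ give $b<a$ and $a'<b'$ with $a,a'\in A$ and $b,b'\in B$, which is such a configuration.
  \end{itemize}
\end{itemize}

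Some minor presentation points.
\begin{itemize}
\item Your ``fourth fact'' is justified by an argument that proves something else. $U$ is an antichain simply because it is discrete and $f$ reflects precedence, with (B) carrying this into $S$. What your argument actually shows is that $f(U)$ is up-closed in $P$, and dually that $g(U)$ is down-closed in $Q$.
\item That up-closure is what silently justifies ``$a<b$ not inside $Q$, so $a\in P'$''. Alternatively: if $a\in U\subseteq Q$, then $b\in Q$ by up-closure of $Q$, and $a<_Q b$ by (B).
\item The antichain property is also what makes your ``without loss of generality'' legitimate, together with the symmetry of the configuration under swapping the two pairs. It rules out a single relation lying inside both $P$ and $Q$.
\end{itemize}
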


\begin{proof}
Let $S=\colim\big(P\backleftarrow{f} U \forrightarrow{g} Q\big)$,
let $P\forrightarrow{i}S$ and $Q\backrightarrow{j}S$ be structural c-maps as in diagram \eqref{e:cposet-concatenation}; recall that both are precedence embeddings. By construction of d-pushouts, in particular the formula \eqref{e:concat-precedence}, if $p\in S\setminus j(Q)$ and $q\in S\setminus i(P)$, then $p<q$. Thus every maximal antichain in $S$ is either contained in $i(P)$ or in $j(P)$.
For maximal antichains $A,B\subseteq S$,  we show that $A\preceq B$ or $B\preceq A$. 
If $A,B\subseteq i(P)$, then $i^{-1}(A)$ and $i^{-1}(B)$ are antichains in $P$ and then $i^{-1}(A)\preceq i^{-1}(B)$ or $i^{-1}(B)\preceq i^{-1}(A)$ because $\preceq$ is assumed to be total on $\MAC(P)$. Thus $A\preceq B$ or $B\preceq A$. If $A,B\subseteq j(Q)$, a similar argument applies. 
Otherwise, for $A\not\subseteq j(Q)$ and $B\not\subseteq i(P)$, suppose $a\in A$ and $b\in B$. Clearly, $a\in i(P)$ and $b\in j(Q)$. If $b<a$, then $a,b\in i(P)\cap j(Q)$ since $i(P)$ and $j(Q)$ are a down-closed and an up-closed subset of $S$, respectively. But $i(P)\cap j(Q)=if(U)$ is discrete -- a contradiction. Therefore, $b\not<a$ and $A\preceq B$.
\end{proof}

\begin{lemma}
\label{l:interval-poset-decomposition}
    Let $P$ be an interval c-poset that is not a c-set. Then there exists a d-pushout of c-posets
\[
    \begin{tikzcd}
        U\ar[r,"f","-" swap]\ar[d,"g" swap,"+"] & Q\ar[d,"i","+" swap] \\
        R\ar[r,"j" swap,"-"] & P 
    \end{tikzcd}
\]
    such that $U$ is a c-set, $Q$ and $R$ are interval c-posets and $P_\mu\subseteq i(Q)\neq P\neq j(R)\supseteq P_\nu$, where $P_\mu$ and $P_\nu$ be the sets of minimal and maximal elements of $P$, respectively.
\end{lemma}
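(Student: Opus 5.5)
The plan is to cut off a single "layer" of minimal elements: \(Q\) will be the discrete c-poset of minimal elements, and \(R\) everything except those minimal elements that precede \emph{every} non-minimal element. The only order-theoretic input is the classical characterisation of interval orders \cite{Fishburn-Interval_Orders}. A finite poset is an interval order in the sense of Definition~\ref{d:interval-order} if and only if it is \((2+2)\)-free. Equivalently, the strict up-sets \(\uparrow\! p=\{p'\mid p<p'\}\), for \(p\in P\), are linearly ordered by inclusion.

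First I would show that the set \(M\subseteq P_\mu\) of elements \(q\) with \(\uparrow\! q=P\setminus P_\mu\) is nonempty. Since the up-sets of minimal elements are linearly ordered, choose \(q_0\in P_\mu\) with \(\uparrow\! q_0\) largest. Every \(r\in P\setminus P_\mu\) lies above some minimal element \(a\), and \(\uparrow\! a\subseteq \uparrow\! q_0\), so \(r\in\uparrow\! q_0\). Conversely \(\uparrow\! q_0\cap P_\mu=\emptyset\). Hence \(q_0\in M\). Moreover \(P\setminus P_\mu\neq\emptyset\), because \(P\) is not a c-set and so has a relation \(p<p'\) with \(p'\) not minimal. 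It follows that elements of \(M\) are not maximal, so \(M\cap P_\nu=\emptyset\).

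Next I would set \(Q=P_\mu\) (discrete), \(R=P\setminus M\) and \(U=P_\mu\setminus M=Q\cap R\), all with restricted precedence. The maps are the inclusions, with \(i\colon Q\to P\) and \(g\colon U\to R\) made into formorphisms and \(f\colon U\to Q\) and \(j\colon R\to P\) into backmorphisms via Lemma~\ref{l:cposet-map-properties}.\ref{l:min-incl}. For this I check the following:
\begin{itemize}
\item \(P_\mu\) is down-closed in \(P\), and \(U\) is down-closed in \(R\) (a predecessor in \(R\) of a minimal element would itself lie in \(P_\mu\setminus M=U\)).
\item \(U\) is up-closed in the discrete \(Q\).
\item \(R=P\setminus M\) is up-closed in \(P\), since \(M\) consists of minimal elements.
\end{itemize}
The required side conditions are then immediate:
\begin{itemize}
\item \(U\) is a c-set.
\item \(Q\) is discrete, hence interval.
\item \(R\) is an induced subposet of a \((2+2)\)-free poset, hence \((2+2)\)-free and interval.
\item \(P_\mu=i(Q)\neq P\), because \(P\setminus P_\mu\neq\emptyset\).
\item \(j(R)=P\setminus M\neq P\), because \(M\neq\emptyset\).
\item \(P_\nu\subseteq j(R)\), because \(M\cap P_\nu=\emptyset\).
\end{itemize}

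Finally I would identify the square as an instance of \eqref{e:cposet-concatenation} in Theorem~\ref{t:cpos-has-proper-pushouts}, with the roles of the two legs as in the statement. On underlying sets, \(P=Q\cup R\) and \(Q\cap R=U\), so by Lemma~\ref{l:inj-pushout-crit} and Lemma~\ref{l:dsquares-are-injective-pushouts} it is a d-pushout in \(\cSet\). It remains to check that \(<_P\) agrees with \eqref{e:concat-precedence}, namely \(<_Q\cup<_R\cup (Q\setminus U)\times(R\setminus U)\).
\begin{itemize}
\item Here \(Q\setminus U=M\) and \(R\setminus U=P\setminus P_\mu\), and \(M\times(P\setminus P_\mu)\subseteq{<_P}\) by the definition of \(M\).
\item Every other relation \(p<p'\) has both ends in \(R\), or in \(Q\) (impossible, as \(Q\) is discrete). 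The reason is that \(p'\in M\) is impossible since \(M\) is minimal, and if \(p\in M\) then \(p'\notin P_\mu\).
\end{itemize}
So \(P\) carries exactly the pushout precedence, and the square is a d-pushout in \(\cPos\).

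The main obstacle I anticipate is the order-theoretic step: justifying that the up-sets in an interval c-poset (with the maximal-antichain definition) are linearly ordered. I would derive it via \((2+2)\)-freeness, citing \cite{Fishburn-Interval_Orders} for the equivalence. A direct argument would take a \(2+2\) configuration \(a<b\), \(c<d\) and extend \(\{a,d\}\) and \(\{b,c\}\) to maximal antichains that are \(\preceq\)-incomparable. The remaining verifications are routine.
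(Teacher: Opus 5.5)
Your argument is correct. Your decomposition is in fact identical to the paper's, though you describe it differently. The paper takes $Q=P_\mu$ and $R$ the union of all maximal antichains other than $P_\mu$. Every non-minimal element lies in such an antichain, and a minimal element $q$ does exactly when it is incomparable to some non-minimal element, i.e.\ when $q\notin M$. So the paper's $R$ is your $P\setminus M$, and $U$ agrees as well.

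What differs is the justification. The paper argues directly with $\preceq$. Assuming $R=P$, it produces two $\preceq$-incomparable maximal antichains. It obtains $q<r$ for $q\in Q\setminus R$ and $r\in R\setminus Q$ by extending $\{q,r\}$ to a maximal antichain. You instead go through the characterisation of interval orders as $(2+2)$-free posets with nested strict up-sets. After that, $M\neq\emptyset$ comes from picking a minimal element with the largest up-set, and the cross-precedence holds by the definition of $M$.

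One benefit of your route is that it explicitly verifies that $R$ is interval, by heredity of $(2+2)$-freeness; the paper's proof leaves this point unaddressed. The cost is reliance on the equivalence with Definition~\ref{d:interval-order}, but both directions are elementary. You sketched one direction. For the other, suppose maximal antichains $A,B$ are $\preceq$-incomparable via $b<a$ and $a'<b'$, with $a,a'\in A$ and $b,b'\in B$. Then $a\parallel a'$ and $b\parallel b'$, so these two chains form a $2+2$.
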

\begin{proof}
Consider the sub-posets 
$Q=P_\mu$,  $R=\bigcup \{A\in \MAC(P)\mid A\neq P_\mu\}$ and $U=Q\cap R$ of $P$. Let $f,g,i,j$ be c-poset maps given by suitable inclusions such that $g,i$ are formorphisms and $f,j$ backmorphisms. Then $Q$ and $U$ are c-sets and $Q\neq P$. Furthermore, $P_\nu\subseteq R$ since $P_\nu\neq Q$. Assume that $R=P$. Then there are $q,q'\in Q$ and $A\neq B\in \MAC(P)\setminus\{Q\}$ such that $q\in A\setminus B$ and $q'\in B\setminus A$. Furthermore, there must exist a $p\in A$ such that $q'<p$ and a $p'\in B$ such that $q<p'$. The antichains $A$ and $B$ are therefore not comparable by $\preceq$, and  $R\neq P$ by contradiction. 

For each $q\in Q\setminus R$ and $r\in R\setminus Q$ we have $q<r$. Otherwise, the antichain $\{q,r\}$ could be extended to a maximal antichain $U\neq Q$ -- a contradiction. The relations \eqref{e:concat-precedence} defining the precedence of the d-pushout are therefore satisfied and $P\cong \colim(Q\backleftarrow{\supset} U \forrightarrow{\subset} R)$.
\end{proof}

\begin{proposition}
\label{p:interval-cospan-poset}
The bicategories  $\dICosp(\cPos,\cSet)$ and $\dCosp(\cIPos,\cSet)$ are equal sub-bicategories of $\dCosp(\cPos,\cSet)$.
    Equivalently, a d-cospan  $\big(U\forrightarrow{f}P \backleftarrow{g} V\big)$ in $\dCosp(\cPos,\cSet)$ is interval if and only if $P$ is an interval c-poset. 
\end{proposition}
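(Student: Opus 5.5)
We prove the two inclusions separately, working with d-cospans $\big(U\forrightarrow{f}P\backleftarrow{g}V\big)$ whose feet $U,V$ are c-sets.

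\emph{Interval d-cospans have interval middle objects.} Let $\mathscr{I}$ be the class of d-cospans in $\dCosp(\cPos,\cSet)$ whose middle object is an interval c-poset. It contains all elementary d-cospans, because a c-set (discrete) has exactly one maximal antichain, namely itself. It is closed under isomorphism of d-cospans, since being interval is an isomorphism-invariant property. It is also closed under horizontal composition: composition is a d-pushout $\colim\big(P\backleftarrow{}V\forrightarrow{}Q\big)$ over a c-set $V$, and Lemma~\ref{l:interval-concat} shows this is again interval. Hence $\mathscr{I}$ determines a locally strictly full sub-bicategory containing all elementary d-cospans. By minimality of $\dICosp(\cPos,\cSet)$ we get $\dICosp(\cPos,\cSet)\subseteq\mathscr{I}=\dCosp(\cIPos,\cSet)$.

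\emph{Cospans with interval middle object are interval.} We argue by induction on $|P|$, and within equal size on the number of maximal antichains. If $P$ is a c-set, the d-cospan is elementary. Otherwise, apply Lemma~\ref{l:interval-poset-decomposition} to obtain $P\cong\colim\big(Q\backleftarrow{f'}W\forrightarrow{g'}R\big)$, where $W$ is a c-set, $Q=P_\mu$ is a c-set, $R$ is interval, and $Q,R$ are proper subsets of $P$. The key step is to factor the legs of the cospan through $Q$ and $R$, using that the structural maps are inclusions.

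For the source leg, $f\colon U\to P$ is a formorphism, so by Lemma~\ref{l:cposet-map-properties}(4) its image $f(U)$ is down-closed. I claim $f(U)\subseteq P_\mu$. Down-closedness alone does not give this, since a down-closed subset may contain non-minimal elements. However, $U$ is discrete and $f$ is a precedence embedding, so $f(U)$ is an antichain. A down-closed antichain consists of minimal elements, which proves the claim. Symmetrically, $g(V)$ is an up-closed antichain, hence $g(V)\subseteq P_\nu\subseteq j(R)$.

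This yields factorisations $f=i f_1$ and $g=j g_1$. These are for- and backmorphisms respectively, by Lemma~\ref{l:cset-properties}.\ref{i:pure} together with the fact that precedence embeddings restrict. Therefore
\[
\big(U\forrightarrow{f}P\backleftarrow{g}V\big)\cong\big(U\forrightarrow{f_1}Q\backleftarrow{f'}W\big)*\big(W\forrightarrow{g'}R\backleftarrow{g_1}V\big).
\]
The first factor is elementary. The second has a strictly smaller interval middle object $R$, so it is interval by the induction hypothesis. Since $\dICosp$ is closed under composition and isomorphism, the original d-cospan is interval.

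The main obstacle is the bookkeeping of the factorisation. We must check that the activity maps of $f_1$ and $g_1$ make them genuine for-/backmorphisms compatible with the d-pushout, and that the displayed composite is isomorphic to the original d-cospan as a cospan, not just in its middle object. Both follow from the uniqueness of extensions in Lemma~\ref{l:cset-properties}.\ref{i:unique-ext}, applied as in the proof of Lemma~\ref{l:ConcatFactor}.
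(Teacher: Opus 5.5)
Your proposal is correct and takes essentially the same route as the paper. The easy inclusion comes from minimality, with closure under composition supplied by Lemma~\ref{l:interval-concat}, and the converse is an induction on $|P|$ via Lemma~\ref{l:interval-poset-decomposition}, factoring the legs through $i$ and $j$. Your explicit argument that a down-closed (resp.\ up-closed) antichain consists of minimal (resp.\ maximal) elements fills in a step the paper only asserts. The secondary induction on the number of maximal antichains is unnecessary, since $j(R)\neq P$ already gives $|R|<|P|$.
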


\begin{proof}
    Since $\dCosp(\cIPos,\cSet)$ contains all elementary d-cospans and is locally isomorphism-closed, we have $\dICosp(\cPos, \cSet)\subseteq \dCosp(\cIPos, \cSet)$.

    Let $\mathscr{P}$ be the class of objects $P$ of $\cIPos$ such
    that every d-cospan $\big(U\forrightarrow{} P\backleftarrow{}
    V\big)$ for all $U,V\in\Ob(\cSet)$ is interval and hence in $\dICosp(\cPos,\cSet)$. Clearly, all c-sets belong to $\mathscr{P}$. We show that $\Ob(\cIPos)\subseteq \mathscr{P}$ by induction on the number of elements of $P\in\Ob(\cIPos)$. For $P=\emptyset$ this is already shown. For the inductive step, there two cases. If $P$ is a c-set, then the d-cospan is elementary and hence interval. Otherwise, $P$ can be decomposed as 
    \[
        P=\colim\big(Q\backleftarrow{f} U \forrightarrow{g} R\big)
    \]
    as in Lemma \ref{l:interval-poset-decomposition}. Now assume that $(V\forrightarrow{h} P \backleftarrow{k} W)$ is a d-cospan, as in the diagram
    \begin{equation*}
    \begin{tikzcd}[row sep=0.5cm, column sep=0.8cm]
        && P&&\\
        &Q\ar[ur, "i", "+"'{inner sep=0pt}] && R\ar[ul, "j"',"-"{inner sep=0pt}] &&\\
    V\ar[uurr, bend left=40, "h","+"'{inner sep=0pt}]\ar[ur, swap, dashed, "h'","+"'{inner sep=0pt}] && U\ar[ul, "f","-"'{inner sep=0pt}]\ar[ur, swap, "g","+"'{inner sep=0pt}] && W\ar[uull, bend right=40, "k"',"-"{inner sep=0pt}]\ar[ul, dashed, "k'","-"'{inner sep=0pt}]
    \end{tikzcd}
    \end{equation*}
    Since $i(Q)$ contains all minimal elements of $P$, there is a factorisation $h=ih'$ in $\Set$ that extends uniquely to a factorisation in $\cSet$ (we make $h'$ into a formorphism). Similarly, there is a factorisation $k=jk'$, where $k'$ is a backmorphism.
    
    By the inductive hypothesis, $Q,R\in\mathscr{P}$. Hence the lower d-cospans in the diagram above are interval, which implies that the big d-cospan is interval, too. As this holds for arbitrary choices of $V$, $W$, $f$ and $g$, it follows that $P\in\mathscr{P}$. Thus $\Ob(\cIPos)\subseteq \mathscr{P}$ and finally $\dCosp(\cIPos, \cSet)\subseteq \dICosp(\cPos,\cSet)$.
\end{proof}

Our main result follows. 

\begin{theorem}
\label{t:cospans-are-track-objects}
    Let $\cV\hookrightarrow\lecSet$ be a sieve. Then there are the following sequences of natural biequivalences:
\begin{enumerate}
    \item
        $\dCosp(\cIPos,\cSet)
        \cong
        \dICosp(\cPos,\cSet)
        \xrightarrow{\cong}
        \tro\cSet,
        $
    \item
        $\dCosp(\ecIPos,\ecSet)
        \cong
        \dICosp(\ecPos,\ecSet)
        \xrightarrow{\cong}
        \tro\ecSet,
        $
    \item
        $\dCosp(\lecIPos,\lecSet)
        \cong
        \dICosp(\lecPos,\lecSet)
        \xrightarrow{\cong}
        \tro\lecSet,
        $
    \item
        $\dCosp(\cV\IPos,\cV)
        \cong
        \dICosp(\cV\Pos,\cV)
        \xrightarrow{\cong}
        \tro\cV.
        $
\end{enumerate}
\end{theorem}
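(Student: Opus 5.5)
The plan is to derive each row from two ingredients. The right-hand biequivalence comes from Theorem~\ref{t:dcospans-are-track-objects} applied to the relevant d-cocompletion. The left-hand identification says that interval d-cospans are exactly the d-cospans whose apex is an interval poset. We establish it first for $\cSet\hookrightarrow\cPos$ and then transport it along comma categories and sieves.

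\textbf{Right-hand biequivalences.} The inclusions $\cSet\hookrightarrow\cPos$, $\ecSet\hookrightarrow\ecPos$, $\lecSet\hookrightarrow\lecPos$ and $\cV\hookrightarrow\cV\Pos$ are all d-cocompletions, by Propositions~\ref{p:concset-cocompletion} and~\ref{p:enriched-dcoco}. Theorem~\ref{t:dcospans-are-track-objects} therefore gives biequivalences $\Phi^\cP_\cC\colon\dICosp(\cP,\cC)\to\tro\cC$ in all four cases. Naturality is inherited from the fact that each $\Phi$ is a restriction of $(\yoneda^\cP_\cC)_*$.

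\textbf{Left-hand identification for (1).} This is exactly Proposition~\ref{p:interval-cospan-poset}. It says $\dCosp(\cIPos,\cSet)=\dICosp(\cPos,\cSet)$ as sub-bicategories of $\dCosp(\cPos,\cSet)$. The bicategory $\dCosp(\cIPos,\cSet)$ is well defined by Lemma~\ref{l:interval-concat}.

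\textbf{Left-hand identification for (2) and (3).} Here we use Proposition~\ref{p:interval-slice}. For (2), take $F=\PE$ and $L=\mathbf{2}$, identified via the isomorphism $\Phi$ of Lemma~\ref{l:epos-is-slice}. A d-cospan in $\dCosp(\ecPos,\ecSet)$ is then interval if and only if its projection to $\dCosp(\cPos,\cSet)$ is interval. By (1), this holds if and only if the underlying c-poset of the apex is interval. Being an interval poset depends only on the precedence, so this is exactly the condition that the apex lies in $\ecIPos$. Hence $\dICosp(\ecPos,\ecSet)=\dCosp(\ecIPos,\ecSet)$. As a byproduct, d-pushouts over ec-sets of interval ec-posets are again interval, since their projections are interval by Lemma~\ref{l:interval-concat}, so the left-hand bicategory is well defined. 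Case (3) follows in the same way, applying Proposition~\ref{p:interval-slice} to $\Forget^{\ecPos}_\Set$ and $\Sigma$ via Lemma~\ref{l:labelling-is-slice} and then invoking (2). Lemma~\ref{l:label-preserves-pushouts} supplies the hypothesis that these functors send d-pushouts to pushouts.

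\textbf{Left-hand identification for (4).} We use Lemma~\ref{l:interval-sieve} with $\cC=\lecSet$, $\cP=\lecPos$ and $\cD=\cV$, noting $\cV\Pos=\cV\setminus_{\lecPos}\lecSet$. This gives $\dICosp(\cV\Pos,\cV)=\dICosp(\lecPos,\lecSet)\cap\dCosp(\cV\Pos,\cV)$. By (3), the right-hand side consists of the d-cospans whose apex is an interval lec-poset that is also a $\cV$-poset, that is, $\dCosp(\cV\IPos,\cV)$.

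\textbf{Expected main obstacle.} I expect the delicate step to be the bookkeeping in (2)--(4). One has to check that the isomorphisms of comma categories identify the intervality conditions and the d-structures, not merely the underlying objects. One also has to confirm that the bicategories $\dCosp(\ecIPos,\ecSet)$ and the analogous ones are well defined, which reduces to Lemma~\ref{l:interval-concat} via the creation of d-pushouts in Proposition~\ref{p:slice-creates-pushouts}. Everything else is a direct citation.
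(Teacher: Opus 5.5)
Your proposal is correct and follows the paper's proof essentially step for step: the right-hand biequivalences come from Theorem~\ref{t:dcospans-are-track-objects} via Propositions~\ref{p:concset-cocompletion} and~\ref{p:enriched-dcoco}; (1) is Proposition~\ref{p:interval-cospan-poset}; (2) and (3) follow by taking preimages along the projection using Proposition~\ref{p:interval-slice}; and (4) follows from Lemma~\ref{l:interval-sieve} combined with (3). Your extra remark that $\dCosp(\ecIPos,\ecSet)$ and its analogues are well defined (creation of d-pushouts plus Lemma~\ref{l:interval-concat}) makes explicit a point the paper leaves implicit, but it does not change the argument.
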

\begin{proof}
    The right-hand equivalences follow from Theorem \ref{t:dcospans-are-track-objects}, since the suitable inclusions are d-cocom\-pletions by Propositions \ref{p:concset-cocompletion} and \ref{p:enriched-dcoco}.
    It remains to prove the left-hand equivalences.

    For (1), the equation is Proposition \ref{p:interval-cospan-poset}.
    For (2), let $\Forget:\ecPos\cong \cPos\slice_D\mathbf{2}\xrightarrow{\ProjFun} \cPos$ be the forgetful functor. Consider the diagram
    \[
    \begin{tikzcd}
        \dICosp(\ecPos,\ecSet)
            \ar[r,hook]
            \ar[d,"\Forget_*"]
    &
        \dCosp(\ecPos,\ecSet)
            \ar[d,"\Forget_*"]
    &
        \dCosp(\ecIPos,\ecSet)
            \ar[l,hook]
            \ar[d,"\Forget_*"]
    \\
        \dICosp(\cPos,\cSet)
            \ar[r,hook]
    &
        \dCosp(\cPos,\cSet)
    &
        \dCosp(\cIPos,\cSet)
            \ar[l,hook]
    \end{tikzcd}
    \]
    The following equations between sub-bicategories of $\dCosp(\ecPos,\ecSet)$ prove the claim:
    \begin{align*}
     \dICosp(\ecPos,\ecSet)
     &=
     (\Forget_*)^{-1}(\dICosp(\cPos,\cSet))
     \tag{by Proposition \ref{p:interval-slice}}
     \\
     &=
     (\Forget_*)^{-1}(\dCosp(\cIPos,\cSet)) \tag{by (1)}\\
     &=
     \dCosp(\ecIPos,\ecSet).
     \tag{by definition}
    \end{align*}
    Repeating this argument with $\lecPos\cong \ecPos\slice_{\Forget^\ecPos_\Set}\Sigma$ and using (2) shows (3). Finally, since $\cV\Pos\cong \cV\setminus_{\lecPos}\lecSet$,   we obtain the following sequence of equalities of sub-bicategories of $\dCosp(\lecPos,\lecSet)$:
    \begin{align*}
     \dICosp(\cV\Pos,\cV)
     &=
     \dICosp(\lecPos,\lecSet)\cap \dCosp(\cV\Pos,\cV)
     \tag{by Lemma \ref{l:interval-sieve}}
     \\
     &=
     \dCosp(\lecIPos,\cSet)\cap \dCosp(\cV\Pos,\cV) \tag{by (3)}\\
     &=
     \dCosp(\cV\IPos,\cV),
     \tag{by definition}
    \end{align*}    
    which proves (4).
\end{proof}

Theorem~\ref{t:cospans-are-track-objects} provides an explicit description of bicategories of track objects over d-subcategories of the category of lec-posets $\lecSet$. Its applications in concurrency theory via higher-dimensional automata are explored in the next section.

\section{Variants of higher-dimensional automata}
\label{s:HDA}

In this final section we list several base d-categories for presheaf automata that provide variants of higher-dimensional automata. We instantiate Theorem \ref{t:cospans-are-track-objects} to describe their track objects and hence their languages.  Some of these variants appear in the literature~\cite{FJSZ_HDALang, FJSZ-Kleene, Gaucher-Symmetric.Precubical.Sets}; others are new.

Throughout this section, we fix again an alphabet $\Sigma$. 
We adjust terminology relative to earlier sections (for example, we avoid using cospans and c-maps) to make this section more accessible and self-contained.

A \emph{higher-dimensional automaton with event relations} (\emph{erHDA}) is a presheaf automaton over the category $\lecSet$ of labelled ec-sets.
This kind of higher-dimensional automaton matters because it subsumes all other variants of HDA in this section; the event relation, in particular, is weaker than usual. Whether erHDA are useful in applications remains open. 

Recall that the inclusion $\lecSet\hookrightarrow\lecPos$ is a d-cocompletion (Proposition \ref{p:enriched-dcoco}) and that the bicategory $\tro\lecSet$ of track objects over $\lecSet$ is biequivalent to $\dCosp(\lecIPos, \lecSet)$ (Theorem \ref{t:cospans-are-track-objects}).
We now give a straightforward description of this bicategory, collecting definitions. Recall that 
\begin{itemize}
\item 
    A lec-set $(U,{\intord_U},\lambda_U)$ is formed by  a finite set $U$,  an irreflexive relation $\intord$ called \emph{event relation} and  a labelling $\lambda_U:U\to \Sigma$.    
\item 
    A lec-poset $(P,{<_P},{\intord_P},\lambda_P)$ is a lec-set $(P,{\intord_P},\lambda_P)$ equipped with is a strict partial order $<_P$ called \emph{precedence} that is disjoint with the event relation: if $p<_Pp'$ then neither $p\intord_P$ nor $p'\intord_Pp'$.
\end{itemize}

\begin{definition}
    A \emph{lec-poset with interfaces} (\emph{lec-iposet}) is a structure $\big(U\xrightarrow{s}P\xleftarrow{t} V\big)$, where $P$ is a lec-poset, $U$ and $V$ are lec-sets, and $s$ and $t$ are injective functions, called  \emph{source} and  \emph{target interface}, that
\begin{enumerate}
    \item preserve the labelling ($\lambda_U=\lambda_P\circ s$ and $\lambda_V=\lambda_P\circ t$),
    \item preserve and reflect the event relation: for all $u,u\in U$ and $ v,v'\in V$,
    \[
        u\intord_Uu'\iff s(u)\intord_Ps(u')
        \quad \text{and} \quad
        v\intord_Vv'\iff t(v)\intord_Pt(v'),
    \]
    \item all elements of $s(U)$ are minimal in $P$ and all elements of $t(V)$ are maximal in $P$ with respect to $<_P$.
\end{enumerate}
The lec-iposet $\big(U\xrightarrow{s}P\xleftarrow{t} V\big)$ is interval if $(P,{<_P})$ is an interval order (see Definition \ref{d:interval-order}).
\end{definition}

\begin{definition}
    A \emph{subsumption} between lec-iposets $(U\xrightarrow{s_Q}Q\xleftarrow{t_Q}V)$ and $(U\xrightarrow{s_P}P\xleftarrow{t_P}V)$ is a bijection $f:Q\to P$ that
    \begin{enumerate}
        \item preserves the interfaces ($f\circ s_Q=s_P$ and $f\circ t_Q=t_P$),
        \item reflects precedence ($\forall q,q'\in Q.\ f(q)<_P f(q') \implies q<_Qq'$),
        \item preserves and reflects essential event relation: for all $q,q'\in Q$,
        \[
        (q<_Qq')\;\lor\; (q'<_Qq)\;\lor\; (q\intord_Qq'\iff f(q)\intord_Pf(q')).
        \]
    \end{enumerate}
\end{definition}

\begin{definition}
The \emph{concatenation} of lec-iposets
\[
(U\forrightarrow{s_P}P\backleftarrow{t_P}V)
*
(V\forrightarrow{s_Q}Q\backleftarrow{t_Q}W)
=
(U\forrightarrow{s}P*Q\backleftarrow{t}W)
\]
is defined as
\begin{enumerate}
    \item $P*Q = \colim(P\xleftarrow{t_P}V\xrightarrow{s_Q}Q)=(P\sqcup Q)/(t_P(v)\sim s_Q(v))$ (where $i:P\to P*Q$ and $j:Q\to P*Q$ are the canonical inclusions),
    \item $r<_{P*Q}r' \iff (r \mathrel{i(<_P)} r' \;\lor\; r \mathrel{j(<_Q)} r' \;\lor\; (r\notin j(Q)\land r'\notin i(P)))$,
    \item $r\intord_{P*Q}r' \iff (r \mathrel{i(\intord_P)} r' \;\lor\; r \mathrel{j(\intord_Q)} r')$,
    \item $\lambda_{P*Q}\circ i=\lambda_P$ and $\lambda_{P*Q}\circ j=\lambda_Q$,
    \item $s=i\circ s_P$ and $t=j\circ t_Q$.
\end{enumerate}
\end{definition}

Interval lec-iposets are equivalent to d-cospans in $\dCosp(\lecIPos, \lecSet)$ by Definition \ref{d:enriched-posets}. The functions $s$ and $t$ correspond uniquely to  for- and backmorphisms by Lemma \ref{l:cposet-map-properties}.4.

Subsumptions are precisely morphisms in the hom categories $\dCosp(\lecPos,\lecSet)_U^V$.
A lec-poset map $f$, as in \eqref{e:dcospan-map}, must be both a for- and a backmorphism by Lemma \ref{l:cset-properties}.\ref{i:pure}; hence it is uniquely determined by a bijection.

It is routine to verify that concatenation of lec-iposets agrees with horizontal composition in $\dCosp(\lecIPos, \lecSet)$ using the construction in Theorem \ref{t:cpos-has-proper-pushouts} for (1) and (2), and the fact that the canonical inclusions of a pushout in $\lecPos$ are lec-poset maps. We obtain the following result.

\def\IP{\mathsf{IP}}
\begin{proposition}
    The bicategory $\tro{\lecSet}$ is biequivalent to the bicategory $\mathsf{\IP}_{lec}$ that has lec-sets as objects. Its hom category $(\mathsf{\IP}_{lec})_U^V$ has interval lec-iposets $(U\to P\gets V)$ as objects and subsumptions as morphisms. Vertical composition is composition of subsumptions, and horizontal composition is given by concatenation.
\end{proposition}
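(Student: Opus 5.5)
The plan is to obtain the proposition from Theorem \ref{t:cospans-are-track-objects}(3), which gives a biequivalence $\dCosp(\lecIPos,\lecSet)\simeq\tro\lecSet$. It then suffices to build an isomorphism (or biequivalence that is the identity on objects) of bicategories $\mathsf{\IP}_{lec}\to\dCosp(\lecIPos,\lecSet)$ and compose.

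\emph{Step 1: objects.} The objects of both bicategories are lec-sets.

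\emph{Step 2: 1-cells.} Given an interval lec-iposet $(U\xrightarrow{s}P\xleftarrow{t}V)$, I use Lemma \ref{l:cposet-map-properties}.4 to extend $s$ and $t$ uniquely to a formorphism and a backmorphism. Since $U$ and $V$ carry empty precedence and $s(U)$ consists of minimal elements, $s$ is trivially a precedence embedding with down-closed image; dually for $t$. The interface conditions then say exactly that these c-poset maps preserve labels and event relation, as in Definition \ref{d:preservation}. Conversely, any d-cospan with feet in $\lecSet$ has legs whose images are down-closed (respectively up-closed) antichain images, hence consist of minimal (maximal) elements. This gives a bijection on 1-cells. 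By Proposition \ref{p:interval-cospan-poset}, lifted through the slices as in Theorem \ref{t:cospans-are-track-objects}, intervality of $P$ matches membership in $\dCosp(\lecIPos,\lecSet)$.

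\emph{Step 3: 2-cells.} A morphism \eqref{e:dcospan-map} commutes with formorphisms $s,s'$. By Lemma \ref{l:cset-properties}.\ref{i:pure}, the map $f$ is both a for- and a backmorphism, so its activity map is constantly $\ast$ and its injection is a bijection. The conditions on the c-poset map then reduce to three things: $f$ reflects precedence (Definition \ref{d:PosetMap}(1); weak monotonicity of a constant map is vacuous), $f$ preserves labels, and $f$ preserves the event relation on $Q$-parallel pairs. The last is exactly the ``essential event relation'' clause. Commutativity of the triangles is preservation of interfaces. Vertical composition is composition of underlying bijections in both settings.

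\emph{Step 4: horizontal composition.} This is the main obstacle: I must check that the concatenation formula agrees with the d-pushout in $\lecPos$. By Theorem \ref{t:cpos-has-proper-pushouts}, the underlying set and the precedence \eqref{e:concat-precedence} match items (1)–(2), since $i(P\setminus f(R))=S\setminus j(Q)$ and similarly on the other side. For the labelling and the event relation, I use that $\lecPos$ is an iterated comma category and invoke Proposition \ref{p:slice-creates-pushouts}: the pushout structure is the unique $\xi$ with $\xi\circ D(i)$ and $\xi\circ D(j)$ equal to the given ones. The Lemma \ref{l:label-preserves-pushouts} proof shows $D(S)=D(i)D(P)\cup D(j)D(Q)$. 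So a parallel pair of $S$ comes from $P$ or from $Q$, and its event-relation value is forced. This yields exactly the union in item (3), restricted to parallel pairs; pairs from $P$ and $Q$ agree on the overlap because they are compatible via $V$. Labels are handled likewise. Finally, the coherence data (associators, unitors) of both bicategories come from the same universal property, so the correspondence is a strict identity-on-objects, locally isomorphic pseudofunctor. Composing it with Theorem \ref{t:cospans-are-track-objects}(3) gives the claimed biequivalence.
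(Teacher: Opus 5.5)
Your proposal is correct and follows the paper's route. You identify $\mathsf{IP}_{lec}$ with $\dCosp(\lecIPos,\lecSet)$ using Lemma~\ref{l:cposet-map-properties}.4 on 1-cells, Lemma~\ref{l:cset-properties}.\ref{i:pure} on 2-cells, and Theorem~\ref{t:cpos-has-proper-pushouts} plus the fact that labels and event relation are forced by the canonical inclusions on horizontal composition, and then compose with Theorem~\ref{t:cospans-are-track-objects}(3). Your explicit inclusion of label preservation among the 2-cell conditions is in fact necessary, since the paper's written definition of subsumption omits it; your appeal to Proposition~\ref{p:interval-cospan-poset} in Step 2 is superfluous, because intervality of $P$ is by definition what membership in $\dCosp(\lecIPos,\lecSet)$ means.
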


Since every subsumption is a precedence-reflecting bijection,  every endo-subsumption is an isomorphism. This leads to the following description of the category of words (see Section \ref{s:preliminaries}).
\begin{proposition}
    The category $\mathscr{W}(\lecSet)$ has lec-sets as objects and isomorphisms classes of lec-iposets in $(\mathsf{\IP}_{lec})_U^V$ as morphisms from $U$ to $V$. Moreover, $(U\to P \gets V)\subsu (U\to Q \gets V)$ if there is a subsumption from $(U\to P \gets V)$ to $(U\to Q \gets V)$.
\end{proposition}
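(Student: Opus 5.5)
The statement to prove is the last proposition: $\mathscr{W}(\lecSet)$ has lec-sets as objects and isomorphism classes of interval lec-iposets as morphisms, with $\subsu$ given by existence of a subsumption. The plan is to transport the definition of $\mathscr{W}(\cC)$ from Section~\ref{s:preliminaries} along the biequivalence $\tro\lecSet\simeq\mathsf{IP}_{lec}$ of the preceding proposition.

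\emph{Objects and reduction to $\mathsf{IP}_{lec}$.} By definition, $\mathscr{W}(\lecSet)$ has the objects of $\lecSet$. Its morphisms $U\to V$ are classes of $\tro\lecSet_U^V$ modulo $\sim$, where $\Gamma\sim\Delta$ means morphisms exist in both directions. The biequivalence of the previous proposition is the identity on objects and locally an equivalence of categories. An equivalence of categories preserves and reflects the existence of morphisms between objects and is essentially surjective. Hence it induces a bijection between $\tro\lecSet_U^V/{\sim}$ and $(\mathsf{IP}_{lec})_U^V/{\sim}$. This bijection identifies the preorders ``a morphism exists'' on both sides, and so it respects $\subsu$. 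It is also compatible with composition of words, because a pseudofunctor preserves horizontal composition up to isomorphism, and isomorphic cospans are $\sim$-equivalent. So it suffices to work in $\mathsf{IP}_{lec}$, where horizontal composition is concatenation.

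\emph{$\sim$ coincides with isomorphism.} This is the key step. Suppose $f\colon P\to Q$ and $g\colon Q\to P$ are subsumptions. Then $gf$ is an endo-subsumption of $P$. It is a precedence-reflecting bijection of the finite set $P$, so its inverse preserves $<_P$. A bijective map of the finite relation $<_P$ into itself is then a bijection of $<_P$, so $gf$ also preserves precedence. By the remark preceding the statement, $gf$ is therefore an isomorphism.

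From this I conclude that $f$ is injective on precedence pairs, and likewise $g$. Counting gives $|{<_P}|\le|{<_Q}|\le|{<_P}|$, so $f$ is a precedence embedding. Since $f$ is an order isomorphism, incomparability is preserved and reflected. Condition (3) of a subsumption then says that $f$ preserves and reflects $\intord$ on parallel pairs. By the definition of lec-posets, $\intord$ only relates parallel elements, so $f$ preserves and reflects $\intord$ entirely.

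Labels are preserved because subsumptions are lec-poset maps, as they are morphisms of $\dCosp(\lecPos,\lecSet)$, and interfaces are preserved by definition. Thus $f$ is an isomorphism of lec-iposets. Conversely, any isomorphism is a subsumption in both directions.

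\emph{Conclusion and main obstacle.} Together these steps show that words are exactly isomorphism classes of interval lec-iposets, and that $[P]\subsu[Q]$ holds precisely when a subsumption $P\to Q$ exists. The main obstacle is the step that $\sim$ coincides with isomorphism, specifically the precedence-counting argument. It relies on finiteness of c-sets, which is part of the definition even though earlier sections did not need it. The remaining checks are bookkeeping with the biequivalence.
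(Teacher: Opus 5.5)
Your proof is correct and follows the paper's route. The paper likewise reads the description off the biequivalence $\tro\lecSet\simeq\IP_{lec}$, together with the fact that subsumptions are precedence-reflecting bijections of finite posets, so mutual subsumption coincides with isomorphism; you simply spell out the transport and the step from endo-subsumptions to mutual subsumptions, which the paper leaves implicit. One redundancy: once $gf$ is an automorphism and $g$ reflects precedence, $f$ already preserves it ($p<p'\Rightarrow gf(p)<gf(p')\Rightarrow f(p)<f(p')$), so your counting step is unnecessary — or, alternatively, counting alone suffices without first showing $gf$ is an isomorphism.
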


With erHDA set up, we proceed as in Sections~\ref{s:enriched-posets} and \ref{s:interval-posets} and use enriched $\cV$-parametrised posets to generate variants. So let $\cV\hookrightarrow\lecPos$ be a sieve, which can  be regarded as a class of lec-sets closed with respect to taking subsets (with the inherited labelling and event relation).

    \begin{definition}
        A \emph{$\cV$-iposet} is a lec-iposet $(S\to P\gets T)$ such that every antichain in $P$ is in $\cV$.
    \end{definition}
    This is equivalent to $P$ being a $\cV$-poset: an lec-poset map $U\to P$ from a lec-set $U$ exists if and only if $P$ has an antichain isomorphic to $U$.

\begin{proposition}
\label{p:language-of-V}
    Let $\cV\hookrightarrow\lecSet$ be a sieve.
    \begin{enumerate}
    \item 
        The bicategory $\tro\cV$ is biequivalent to the sub-bicategory $\IP_{\cV}$ of $\IP_{\mathbf{lec}}$ with objects $\Ob(\cV)$ and where the hom categories $(\IP_{\cV})_U^V\subseteq (\IP_{\mathbf{lec}})_U^V$ are full subcategories with$\cV$-iposets as objects.
    \item 
        The category of words on $\cV$ is the full subcategory $\mathscr{W}(\cV)$ of $\mathscr{W}(\lecSet)$ with objects $\cV$ and isomorphism classes of $\cV$-iposets as morphisms.
    \end{enumerate}
\end{proposition}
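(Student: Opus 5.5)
The plan is to derive both items from Theorem~\ref{t:cospans-are-track-objects}(4), together with the explicit description of $\tro\lecSet$ as $\IP_{lec}$ given in the preceding propositions. For (1), Theorem~\ref{t:cospans-are-track-objects}(4) yields a biequivalence $\dCosp(\cV\IPos,\cV)\simeq\tro\cV$. The remaining work is to identify $\dCosp(\cV\IPos,\cV)$ with $\IP_\cV$.

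First I would check that $\dCosp(\cV\IPos,\cV)$ is a sub-bicategory of $\dCosp(\lecIPos,\lecSet)$ of the expected shape. Its objects are $\Ob(\cV)$. Its hom categories are full, because $\cV\Pos\hookrightarrow\lecPos$ is a full d-subcategory (indeed a sieve, by Lemma~\ref{l:Q-domain-closed}). Its horizontal composition is the restriction of the one in $\lecPos$. This last point needs two facts: d-pushouts in $\cV\Pos$ are computed in $\lecPos$, since $\cV\Pos\cong\lecPos\slice_\chi 1$ and the projection creates d-pushouts (Proposition~\ref{p:slice-creates-pushouts}); and the resulting pushouts stay interval by Lemma~\ref{l:interval-concat} and its enriched versions.

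Transporting along the identification of $\dCosp(\lecIPos,\lecSet)$ with $\IP_{lec}$, the objects become interval lec-iposets $(U\to P\gets V)$ with $U,V\in\cV$ and $P$ a $\cV$-poset. The morphisms become subsumptions and the composition becomes concatenation.

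The only substantive step, which I expect to be the main obstacle, is the object-level claim that $P$ is a $\cV$-poset if and only if every antichain of $P$, as a sub-lec-set, lies in $\cV$. For the forward direction, take an antichain $A\subseteq P$. Its inclusion extends to a c-poset map: extend $A$ to a maximal antichain and use Lemma~\ref{l:cposet-map-properties}(3), or, for an arbitrary antichain, choose a down-closed/up-closed partition of the remaining elements compatible with $A$. With the restricted labelling and event relation this is a lec-poset map from a lec-set, so $A\in\cV$ by the definition of $\cV$-poset. For the converse, the image of any lec-poset map $U\to P$ from a lec-set $U$ is an antichain isomorphic to $U$. The map is order-reflecting and $U$ is discrete, so $U$ preserves and reflects the event relation on parallel elements and the labels. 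Hence $U$ is isomorphic to an antichain of $P$, and so $U\in\cV$ because $\cV$ is isomorphism-closed. Once this equivalence is in hand, the hom categories of $\dCosp(\cV\IPos,\cV)$ are exactly the full subcategories $(\IP_\cV)_U^V$, which gives (1).

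For (2), apply the description of the category of words to the biequivalence in (1). Words are $\sim$-classes of track objects, and $\sim$ is determined by the hom categories. These are full subcategories of those for $\lecSet$, in which every endo-subsumption is an isomorphism. It follows that $\mathscr W(\cV)$ is the full subcategory of $\mathscr W(\lecSet)$ on the objects of $\cV$ whose morphisms are the isomorphism classes of $\cV$-iposets.
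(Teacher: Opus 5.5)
Your proposal is correct and takes the paper's route. The paper's proof calls (1) a reformulation of Theorem~\ref{t:cospans-are-track-objects}(4), relying on the remark just before the proposition that $P$ is a $\cV$-poset iff every antichain of $P$ lies in $\cV$, and calls (2) an immediate consequence; you supply exactly those details, and your extra checks are sound (d-pushouts in $\cV\Pos$ computed in $\lecPos$ via Propositions~\ref{p:slice-creates-pushouts} and \ref{p:subcat-cocompletion}(1), the antichain characterisation via Lemma~\ref{l:cposet-map-properties}(3), and mutual subsumption forcing isomorphism).
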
  
\begin{proof}
    Item (1) is a reformulation of Theorem \ref{t:cospans-are-track-objects}, (2) is an immediate consequence.
\end{proof}

\begin{paragraph}{Higher-dimensional automata (HDA).}

A \emph{conclist}~\cite{FJSZ-Kleene} is an lec-set $U$ that is totally ordered by the event relation $\intord$. Let $\square\hookrightarrow\lecSet$ be a sieve that has conclists as objects. 

Proposition \ref{p:language-of-V} tells us that the bicategory  $\tro{\square}$ is biequivalent to the sub-bicategory $\IP_\square\subseteq \IP_{\mathbf{lec}}$ of interval $\square$-iposets,
where $\square$-iposets are those lec-iposets $(S\to P\gets T)$ for which the event relation on $P$ is a total order on antichains of the precedence. Total event orders allow at most one isomorphism between any two objects of $\square$. The source and target interface maps are thus uniquely determined by their images. Both interfaces of $P$ can therefore be regarded as a minimal and a maximal subset of $P$, as in~\cite{FJSZ_HDALang}.

Presheaf automata over $\square$ are precisely the HDA of \cite{FJSZ_HDALang,FJSZ-Kleene}.
\end{paragraph}

\begin{paragraph}{Symmetric higher-dimensional automata -- sHDA. }
A \emph{concset} is an lc-set regarded, for the sake of uniformity, as an lec-set with empty event relation. 

Let $\Xi\hookrightarrow\lecSet$ be the sieve of concsets. A \emph{symmetric higher-dimensional automaton} (\emph{sHDA}) is a presheaf automaton over $\Xi$. 
The bicategory $\tro{\Xi}$  is biequivalent to $\IP(\Xi)$, where $\Xi$-iposets are lec-iposets with empty event relation. In the absence of an event relation, interfaces of $\Xi$-iposets are not determined by their images.

Symmetric HDA and their variants were studied in \cite{Gaucher-Symmetric.Precubical.Sets,Goubault-Mimram_Relationship,Kahl-SymmetricHDA,Zouari-Forgetting}, among others; Grandis and Mauri's restricted cubical site $\mathbb{I}$~\cite{Grandis-Mauri_Cubical}, as a strict symmetric monoidal category, is strongly related.
\end{paragraph}

The event order in HDA is not merely about the difference between concset and conclist data structures, with the event order being the natural order of list elements. It also prescribes how the posets associated to HDA through their track objects compose along interfaces. Although event orders sometimes lead to simpler proofs, for instance in the Kleene theorem for HDA~\cite{FJSZ-Kleene}, their presence in concurrent systems seems debatable. Symmetric HDA involve less structure and they offer greater flexibility with compositions, yet track objects are no longer represented by posets with subsets as interfaces, but rather by cospans. This increases the combinatorial complexity in calculations; symmetry groups are now needed to define compositions. Next we propose a new kind of HDA in which the respective merits of the two previous models are combined.

\def\mxConcset{\mathsf{M}}

\begin{paragraph}{Mixed higher-dimensional automata (mHDA).}
 A \emph{mixed concset} is an lec-set $U$ such that
    \begin{itemize}
    \item
        $\forall u,v\in U.\ u\intord v \implies \lambda(u)=\lambda(v)$,
    \item
         $\intord$ is a strict total order on $\lambda^{-1}(a)\subseteq U$ for each $a\in\Sigma$.
    \end{itemize}
    Let $\mxConcset\hookrightarrow \lecSet$ be the sieve of mixed concsets.
    A \emph{mixed HDA} is a presheaf automaton over $\mxConcset$.
    
    The event order is now only used to distinguish different elements with the same label, which suffices to ensure that isomorphisms in $\mxConcset$ are unique. 
    Track objects in $\tro{\mxConcset}\cong \IP_\mxConcset$ are those interval lec-iposets $(S\xrightarrow{s} P\xleftarrow{t} T)$ in which antichains are mixed concsets. 
    Since interfaces are determined by their images, they can be regarded just as subsets $S$ and $T$ of the minimal and maximal elements of $P$, respectively.
\end{paragraph}

\def\ordConcset{\mathsf{\Theta}}
\begin{paragraph}{Ordered HDA (oHDA)}
Fix a total order $\prec$ on the alphabet $\Sigma$. A conclist $U$ is \emph{ordered} if
\[
    \forall{u,v\in U}.\ \lambda(u)\prec \lambda(v) \implies u\intord v.
\]
Let $\ordConcset\hookrightarrow\lecSet$ be the sieve of ordered conclists. An \emph{ordered HDA} is a presheaf automaton over $\ordConcset$.
Track objects in $\tro{\ordConcset}\cong \IP_\ordConcset\subseteq \IP_{\square}$ are those interval $\square$-iposets $(S\xrightarrow{s} P\xleftarrow{t} T)$ in which the event order agrees with the chosen order on $\Sigma$. 

The d-categories $\ordConcset$ and $\mxConcset$ are isomorphic. Removing event relation between elements with different labels in an ordered conclist yields a mixed concset. The inverse functor ``fills'' the missing event relations using the ordering in $\Sigma$.
Ordered and mixed HDA therefore form equivalent categories;  their bicategories of track objects are equivalent, too. An advantage of ordered HDA over mixed HDA is that they form a subcategory of HDA; a disadvantage is that they depend on the choice of an order on the alphabet.
\end{paragraph}

\def\nscConcset{\mathsf{N}}
\begin{paragraph}{Non-self-concurrent HDA (nHDA).}
    A concset $U$ is \emph{non-self-concurrent} if its labelling is injective. Let $\nscConcset\hookrightarrow \lecSet$ be the sieve of non-self-concurrent concsets, and presheaf automata over $\nscConcset$ are referred to as \emph{non-self-concurrent HDA}.

    Track objects in $\tro{\nscConcset}\cong \IP_\nscConcset\subseteq \IP_{\Xi}$ are those interval $\Xi$-iposets $(S\xrightarrow{s} P\xleftarrow{t} T)$ in which parallel events always have different labels.

    Clearly, $\nscConcset\hookrightarrow \mxConcset$, and  for every order on $\Sigma$ we have inclusions $\nscConcset\hookrightarrow \ordConcset\hookrightarrow\square$. Thus non-self-concurrent automata are a special case of all variants of HDA considered here.

    While nHDA are less expressive that other variants, most examples we know of do not require self-concurrency and hence can be modelled by nHDA.
\end{paragraph}

\begin{figure}
\centering
\begin{tikzpicture}[scale=1.5]
    \node(02) at (0,2) {$a$};
    \node(01) at (0,1) {$a$};
    \node(00) at (0,0) {$b$};
    \node(12) at (1,2) {$a$};
    \node(11) at (1,1) {$b$};
    \node(21) at (2,1) {$a$};
    \draw[->] (02)--(12);
    \draw[->] (01)--(11);
    \draw[->] (02)--(11);
    \draw[->] (11)--(21);
    \draw[->] (00)--(21);
    \draw[dashed,->] (02)--(01);
    \draw[dashed,->] (01)--(00);
    \draw[dashed,->] (12)--(01);
    \draw[dashed,->] (12)--(11);
    \draw[dashed,->] (12)--(21);
    \draw[dashed,->] (12)--(00);
    \draw[dashed,->] (11)--(00);
    \draw[dashed,->] (02) to[bend right=30] (00);
    \node(s0) at (-1,0) {$b$};
    \node(s1) at (-1,1) {$a$};
    \node(t1) at (3,1) {$a$};
    \draw[dashed,->] (s1)--(s0);
    \draw[->,dotted] (s1)--(01);
    \draw[->,dotted] (s0)--(00);
    \draw[->,dotted] (t1)--(21);
\end{tikzpicture}
\hspace{3em}
\begin{tikzpicture}[scale=1.5]
    \node(02) at (0,2) {$a$};
    \node(01) at (0,1) {$a$};
    \node(00) at (0,0) {$b$};
    \node(12) at (1,2) {$a$};
    \node(11) at (1,1) {$b$};
    \node(21) at (2,1) {$a$};
    \draw[->] (02)--(12);
    \draw[->] (01)--(11);
    \draw[->] (02)--(11);
    \draw[->] (11)--(21);
    \draw[->] (00)--(21);
    \draw[dashed,->] (02)--(01);
    \draw[dashed,->] (12)--(01);
    \draw[dashed,->] (12)--(21);
    \draw[dashed,->] (11)--(00);
%
    \node(s0) at (-1,0) {$b$};
    \node(s1) at (-1,1) {$a$};
    \node(t1) at (3,1) {$a$};
%
    \draw[->,dotted] (s1)--(01);
    \draw[->,dotted] (s0)--(00);
    \draw[->,dotted] (t1)--(21);
\end{tikzpicture}                
\caption{An example of a $\ordConcset$-iposet with order $a\prec b$ (left) and the corresponding $\mxConcset$-iposet (right). Solid arrows show precedence; dashed arrows show event relation.}
\label{fig:ipomsets}
\end{figure}

\begin{paragraph}{Comparison}
The base d-categories of the variants of higher-dimensional automata introduced above form the diagram of functors shown on the left below. The diagram on the right captures the relationships between the corresponding higher-dimensional automata.
\begin{equation*}
\begin{tikzcd}[row sep=normal]
    \nscConcset
        \ar[d,hook]
        \ar[dr,hook]
\\
    \mxConcset
        \ar[r]
        \ar[d,"\cong"',color=blue]
    &
    \Xi
        \ar[dr,hook]
\\
    {\color{blue}\ordConcset}
        \ar[r,hook]
    &
    \square
        \ar[r,hook]
        \ar[ul]
    &
    \lecSet
\end{tikzcd}
\qquad\qquad
\begin{tikzcd}[row sep=normal]
    \mathbf{nHDA}
        \ar[d,hook]
        \ar[dr,hook]
\\
    \mathbf{mHDA}
        \ar[r]
        \ar[d,"\cong"',color=blue]
    &
    \mathbf{sHDA}
        \ar[dr,hook]
\\
    {\color{blue}\mathbf{oHDA}}
        \ar[r,hook]
    &
    \mathbf{HDA}
        \ar[r,hook]
        \ar[ul]
    &
    \mathbf{erHDA}
\end{tikzcd}
\end{equation*}
Objects that depend on the choice of order on $\Sigma$ are marked in blue. The diagrams shown do not commute (except for the top triangles).
\end{paragraph}

\printbibliography

\end{document}